\newtheorem{theorem}{Theorem}
\newtheorem{lemma}{Lemma}
\newtheorem{proposition}{Proposition}
\newcommand{\Nat}{\mathbb{N}}
\newcommand{\ind}[1]{\mathbbm{1}\left[#1\right]}
\newcommand{\InBraces}[1]{\left\{  #1  \right\}}
\newcommand{\InBrackets}[1]{\left[  #1  \right]}
\newcommand{\inr}[1]{\left\langle #1 \right\rangle}
\DeclareMathOperator*{\argmax}{arg\,max}
\DeclareMathOperator*{\argmin}{arg\,min}
\DeclareMathOperator*{\poly}{poly}
\DeclareMathOperator*{\polylog}{polylog}
\newcommand{\given}{|}
\newcommand{\prob}[2][]{\text{\bf Pr}\ifthenelse{\not\equal{}{#1}}{_{#1}}{}\!\left[{\def\givenn{\middle|}#2}\right]}
\newcommand{\expect}[2][]{\text{\bf E}\ifthenelse{\not\equal{}{#1}}{_{#1}}{}\!\left[{\def\givenn{\middle|}#2}\right]}
\newcommand{\tparen}{\big}
\newcommand{\tprob}[2][]{\text{\bf Pr}\ifthenelse{\not\equal{}{#1}}{_{#1}}{}\tparen[{\def\given{\tparen|}#2}\tparen]}
\newcommand{\texpect}[2][]{\text{\bf E}\ifthenelse{\not\equal{}{#1}}{_{#1}}{}\tparen[{\def\given{\tparen|}#2}\tparen]}
\newcommand{\sprob}[2][]{\text{\bf Pr}\ifthenelse{\not\equal{}{#1}}{_{#1}}{}[#2]}
\newcommand{\sexpect}[2][]{\text{\bf E}\ifthenelse{\not\equal{}{#1}}{_{#1}}{}[#2]}
\newcommand{\BR}[1]{\textsc{BR}\InParentheses{#1}}
\newcommand{\reals}{\mathbb{R}}
\newcommand{\InParentheses}[1]{\left({#1}\right)}
\newcommand{\InAbsolute}[1]{\left|{#1}\right|}
\newcommand{\KL}[2]{{\rm D_{KL}}\InParentheses{#1\parallel #2}}
\newcommand{\rbr}[1]{\left(#1\right)}
\renewcommand{\mod}[1]{\InParentheses{\text{mod $#1$} }}
\newcommand{\uceil}[1]{\left \lceil #1 \right \rceil }
\newenvironment{prevproof}[1]{\noindent {\bf {Proof of \Cref{#1}:}}}{$\hfill \blacksquare$}
\newtheorem{definition}{Definition}
\newtheorem{instance}{Instance}
\newtheorem{assumption}{Assumption} 
\newtheorem{hypothesis}{Hypothesis} 
\newtheorem{observation}{Observation}
\newcommand{\notshow}[1]{}
\newcommand{\actions}{A}
\newcommand{\action}{a}
\newcommand{\transfer}{p}
\newcommand{\contract}{P}
\newcommand{\contracts}{\mathcal{P}}
\newcommand{\cost}{C}
\newcommand{\util}{v}
\newcommand{\rev}{u}
\newcommand{\Util}{V}
\newcommand{\Rev}{U}
\newcommand{\reward}{r}
\newcommand{\rewards}{R}
\newcommand{\dist}{F}
\newcommand{\belief}{B}
\newcommand{\beliefs}{\mathcal{B}}
\newcommand{\kl}{{\rm KL}}
\newcommand{\kls}{\mathcal{K}}
\newcommand{\prior}{\rho}
\newcommand{\posterior}{\mu}
\newcommand{\adist}{\alpha}
\newcommand{\supp}{{\rm supp}}
\begin{document}

\title{Dynamics and Contracts for an Agent with Misspecified Beliefs\thanks{The authors thank Ryota Iijima for helpful comments and suggestions. Yingkai Li acknowledges financial support from the Sloan Research Fellowship, grant no.~FG-2019-12378. Argyris Oikonomou acknowledges financial support from a Meta PhD fellowship, a Sloan Foundation Research Fellowship and the NSF Award CCF-1942583 (CAREER).}}

\author{Yingkai Li\thanks{
Cowles Foundation for Research in Economics and Department of Computer Science, Yale University.
Email: \texttt{yingkai.li@yale.edu}}
\and Argyris Oikonomou\thanks{
Department of Computer Science, Yale University.
Email: \texttt{argyris.oikonomou@yale.edu}}
}

\date{}

\maketitle

\begin{abstract}
We study a single-agent contracting environment where the agent has misspecified beliefs about the outcome distributions for each chosen action. First, we show that for a myopic Bayesian learning agent with only two possible actions, the empirical frequency of the chosen actions converges to a Berk-Nash equilibrium. However, through a constructed example, we illustrate that this convergence in action frequencies fails when the agent has three or more actions. Furthermore, with multiple actions, even computing an $\epsilon$-Berk-Nash equilibrium requires at least quasi-polynomial time under the Exponential Time Hypothesis (ETH) for the PPAD-class. This finding poses a significant challenge to the existence of simple learning dynamics that converge in action frequencies. Motivated by this challenge, we focus on the contract design problems for an agent with misspecified beliefs and two possible actions. We show that the revenue-optimal contract, under a Berk-Nash equilibrium, can be computed in polynomial time. Perhaps surprisingly, we show that even a minor degree of misspecification can result in a significant reduction in optimal revenue. 
\end{abstract}

\section{Introduction}
\label{sec:intro}

Contract design is a cornerstone of economic theory, playing a pivotal role in a wide range of applications where interests are misaligned, such as in insurance markets, employment contracts, corporate governance, and venture capital financing. In these applications, the actions undertaken by the agent are not observed by the principal, which necessitates the need of contracts that reward the agent based only on the observable outcomes influenced by their actions. Contract design theory revolves around the provision of optimal incentives for agents to undertake actions that, although unobservable, are beneficial to the principal.

The classic model of contract design assumes common knowledge between the principal and the agent concerning the distribution of outcomes generated by each private action chosen by the agent. However, in practical applications, the agent may have misspecified beliefs about the outcome distributions. For instance, the agent might be overly optimistic about the impact of their effort on the firm's output, thus overestimating the distribution of outputs when exerting costly effort. Such misspecification could significantly affect the agent's behavior under any given contract and, subsequently, influence the design of optimal contracts for maximizing the principal's revenue.

When the agent has misspecified beliefs in contracting environments, the classic notion of Bayesian-Nash equilibrium is not suitable for capturing equilibrium behavior. Recently, \citet{esponda2016berk} introduced the concept of Berk-Nash equilibrium for agents with misspecified beliefs. Intuitively, under a Berk-Nash equilibrium, the agent selects subjective beliefs that minimize the Kullback-Leibler divergence from the observed outcome distributions given his actions, and these actions are the best responses to his chosen beliefs. This definition aligns with the concept of self-confirming equilibrium, as described by \citet{Fudenberg93self}, in enironments where the agent's beliefs are correctly specified.\footnote{Note that in contract design environments, the agent only observes the outcomes associated with the chosen action. } 

\citet{esponda2016berk} also provide a learning foundation for the Berk-Nash equilibrium concept when the agent has misspecified beliefs. Specifically, in a repeated contracting environment where the agent is offered the same contract across all time periods, the agent myopically chooses an action in each period that maximizes his single-period utility based on his posterior belief updated using past observations. \citet{esponda2016berk} connects Berk-Nash equilibrium with the learning model by showing that if the learning dynamics chosen by the myopic learning agent converge in a perturbed environment, they must converge to a Berk-Nash equilibrium of that perturbed environment. However, convergence under misspecification cannot always be guaranteed \citep[e.g.,][]{nyarko1991learning,esponda2021asymptotic}.

In our paper, we first show that when only two possible actions exist, the action frequencies of a myopic Bayesian learning agent converge, with the limiting point being a Berk-Nash equilibrium. This provides strong justification for using Berk-Nash equilibrium as the solution concept for contract design problems involving two actions. However, when three or more actions are available, the action frequencies of a myopic Bayesian learning agent may not converge. A natural question arises regarding the existence of other simple learning dynamics that could justify the use of Berk-Nash equilibrium. However, we show that under the Exponential Time Hypothesis (ETH) for the PPAD class, computing a Berk-Nash equilibrium generally requires at least quasi-polynomial time when there are three or more actions. This suggests that finding simple learning dynamics that converge to a Berk-Nash equilibrium within polynomial time is not possible without additional assumptions on the game structures.

Motivated by the challenges in environments with three or more actions, our paper focuses on the design of optimal contracts with two available actions. We show that designing optimal contracts can be formulated as a quadratic program, and we provide a polynomial-time algorithm for solving such a program by leveraging the structure of feasible contracts and Berk-Nash equilibria with only two actions. Intuitively, to compute the optimal contract, we enumerate the support of actions and beliefs sustainable under the Berk-Nash equilibrium. We then utilize four different linear programs to solve for the optimal contracts that maximize the principal's revenue, subject to the constraint that the induced Berk-Nash equilibrium is supported within the enumerated set.

Finally, we show that in contract design environments with binary actions, even a minor degree of misspecification can lead to a significant change in the format of the optimal contract and a substantial decrease in optimal revenue compared to the correctly specified model. Our findings highlight the importance of robustness in economic models of contract design. Given that slight misspecifications can result in substantially different outcomes, this prompts further research into robust contract design that ensures high revenue and exhibits less sensitivity to errors in the agent's specification. We present this as an intriguing open question in our paper.



\subsection{Related Work}
\label{sub:literature}
The contract design problems are well studied in classic Bayesian environments without misspecification \citep[e.g.,][]{Holmstrom79,GrossmanHart83}. 
In recent years, there is a growing literature on the computational complexity of finding the optimal contracts in such Bayesian environments. 
When agents have private types concerning their cost of effort, or the distribution over outcomes given each private action, 
in single-dimensional private type settings, \citet{AlonDT21} show that given a constant number of actions, the optimal deterministic contract can be computed in polynomial time.
In multi-dimensional private type settings, 
\citet{guruganesh20} show that even for a constant number of actions, computing the optimal deterministic contract is NP-hard. 
In contrast, \citet{castiglioni2022designing} show that for arbitrary number of actions, the optimal randomized contract can be computed in polynomial time. 
In models without private types, \citet{DuttingEFK21,DuettingEFK23} consider the computation of (approximately) optimal contracts in combinatorial environments with multiple actions or multiple agents. 


In models where the agents have misspecified beliefs, the seminal paper by \citet{esponda2016berk} proposed the equilibrium concept of Berk-Nash equilibrium, which includes classic self-confirming equilibrium \citep{Fudenberg93self} and Bayes-Nash equilibrium as special cases. They also provide a learning foundation for Berk-Nash equilibrium in games with finitely many actions by showing that in a perturbed environment, if the actions of myopic Bayesian learning agents converges, 
the limiting point must be a Berk-Nash equilibrium. 
\citet{fudenberg2021limit} further show that for environments with a continuum of actions, only uniform Berk-Nash equilibrium can long-run outcomes. 
However, in learning environments with misspecified beliefs, as illustrated in \citet{nyarko1991learning,fudenberg2017active}, convergence in actions is not guaranteed. 
\citet{esponda2021asymptotic} provide sufficient conditions for the convergence in action frequencies. 
Our results complement their findings by showing convergence in binary action environments without additional assumptions. 
Moreover, \citet{frick2023belief} show that efficient social learning fails even with a small amount of misspecification, a finding that echoes our result, where even a minimal degree of misspecification can lead to significant revenue loss in contracting environments.

Our paper is among one of the few papers that study the contract design problems when agents have misspecified beliefs. 
\citet{li2020misspecified} consider the optimal dynamic contract design when the agent has misspecified beliefs regarding the time lags of the observed outcomes for each chosen action. 
\citet{li2022incentivizing} consider the design of clinical trials to incentivize the patients to participate in the trials even when patients have misspecified beliefs regarding the treatment outcomes. 
Both the models and the design spaces in these papers are significantly different from our paper. 
Finally, a recent paper by \citet{guruganesh2024contracting} studies the dynamic contract design problem when the agent is not myopic and is using no-regret learning algorithms for responding to the designed contracts. 
In contrast, we focus on the design of static contracts for myopic Bayesian learning agents with misspecified beliefs, which leads to a significantly different form of optimal contract.

Our paper also connect to learning in games without misspecification. When two no-regret learning agent (e.g. agents using gradient descent) interact in a bimatrix zero-sum games, their action-profile might cycle in the space perpetually, despite their empirical frequency converging to a Nash-equilibrium \citep{MertikopoulosPP18}. Two notable stabilized training methods include Optimistic Gradient Descent by \cite{popov_modification_1980} and the Extragradient method by \cite{korpelevich_extragradient_1976}. These two variants are known to converge to a Nash-equilibrium in zero-sum games \cite{popov_modification_1980,korpelevich_extragradient_1976,hsieh_convergence_2019,facchinei_finite-dimensional_2007}.

\section{Preliminaries}
\label{sec:prelim}

We consider a general contract design model with moral hazard. 
In this model, the agent can privately choose a costly action from a finite set $\actions$ 
that affects the distribution over reward for the principal. 
Specifically, for any action $\action\in\actions$, 
the cost of action $\action$ is $\cost(\action)$
and a publicly observed reward $\reward\in\rewards$ for the principal is drawn according to distribution~$\dist_a$. 
We assume that $\rewards\subseteq\reals$ is a finite set.

\paragraph{Utilities}
We assume both the principal and the agent are risk neutral and have quasi-linear utilities. 
That is, given transfer $\transfer$ from the principal to the agent, 
the utility of the agent for choosing action $\action$ is 
\begin{align*}
\util(\action,\transfer) &= \transfer-\cost(\action)
\intertext{and the revenue of the principal for realized reward $\reward$ is} 
\rev(\reward,\transfer) &= \reward - \transfer.
\end{align*}

\paragraph{Contracts}
A contract rule $\contract:\rewards\to \reals_+$ is a mapping from contractable rewards to payments. 
Note that we impose the limited liability constraint such that the payments to the agent are non-negative \citep{innes1990limited}.
In this model, the principal can commit to a contract rule $\contract$ to incentivize the agent to choose costly actions. 
Upon observing the contract, 
the agent privately chooses an action $\action\in \actions$
and then a public reward $\reward$ is realized according to distribution~$\dist_{\action}$.
The principal pays price $\contract(\reward)$ to the agent. 
Given contract $\contract$, the expected utility of the principal and the agent when action $\action$ is chosen 
are  
\begin{align*}
\Util(\action,\contract) &= \expect[\reward\sim\dist_{\action}]{\contract(\reward)}-\cost(\action)
\intertext{and}
\Rev(\action,\contract) &= \expect[\reward\sim\dist_{\action}]{\reward - \contract(\reward)}
\end{align*}
respectively. We also omit $\contract$ in the notation when it is clear from the context.

\subsection{Subjective Beliefs and Misspecification}
\label{sub:beliefs}
In our model, the true data generating process (reward distribution) $\dist_{\action}$ is unknown to the agent in ex ante, 
and the agent holds a subjective belief over various possible data generating processes. 
Specifically, let $\belief$ denote a possible belief of the reward distribution where $\belief_{\action} \in \Delta(\rewards)$ 
is the subjective distribution over rewards for action $\action$ 
given belief $\belief$. 
The expected utility of the agent given subjective belief $\belief$
under contract $\contract$ for choosing action~$\action$ is 
\begin{align*}
\Util(\action,\belief,\contract) &= \expect[\reward\sim\belief_{\action}]{\contract(\reward)}-\cost(\action)
\end{align*}
and we also omit $\contract$ in the notation when it is clear from the context.

In this subjective belief model, the agent has a prior $\prior\in\Delta(\beliefs)$ over the set of possible beliefs. 
A prior $\prior$ is misspecified if the true reward distribution $\dist=\InBraces{\dist_a}_{a\in\actions}$ is not in the support of $\prior$. 
We impose a generic assumption on the agent's misspecified beliefs based on the notion of KL-divergence that measures the distances among distributions. 
Specifically, for any pair of distributions $F, \hat{F}\in\Delta(\rewards)$, 
the KL-divergence between $F$ and $\hat{F}$ is 
\begin{align*}
\KL{F}{\hat{F}} = \expect[\reward\sim F]{\log\rbr{\frac{F(\reward)}{\hat{F}(\reward)}}}. 
\end{align*}
In Sections \ref{sec:learning agent two actions} and \ref{sec:polytime md for two actions}, for environments with two possible actions, we make an additional assumption that the agent's beliefs are generic.
At a high level, \Cref{ass:poly contract} posits that for an agent with two actions, any pair of belief have distinct KL divergence and the KL divergence induced by any three of her beliefs cannot be colinear. This assumption can be satisfied by a random perturbation of the agent's beliefs.
\begin{assumption}[Generic Beliefs over Two Actions]
\label{ass:poly contract} 
In environments with two possible actions, i.e., $|\actions|=2$, 
the agent has generic misspecified beliefs if:
\begin{itemize}

\item For any action $\action \in \actions$ and two different beliefs $\belief\neq\belief'\in \beliefs$:
\begin{align*}
\KL{\dist_\action}{\belief_\action}\neq\KL{\dist_\action}{\belief'_\action}.
\end{align*} 

\item For a given belief \(\belief \in \beliefs\), we define the Kullback-Leibler divergence set \(\kl(\belief)\) as
$$
\kl(\belief) = \left\{ \KL{\dist_\action}{\belief_\action} \right\}_{\action \in \actions} \subseteq \mathbb{R}^{2}.
$$
The set of such diverges is denoted by
$\kls = \left\{ \kl(\belief) \right\}_{\belief \in \beliefs}$, and is in general-position.\footnote{That is, for any triplet of distinct diverges of beliefs $\kl(\belief_0)\neq\kl(\belief_1)\neq\kl(\belief_2)\in\kls$, vectors $\kl(\belief_0)-\kl(\belief_1)$ and $\kl(\belief_0)-\kl(\belief_2)$ are linearly independent.}
\end{itemize}
\end{assumption}


\subsubsection{Berk-Nash Equilibrium}\label{sec:definition of Berk-Nash}
In models with misspecified beliefs, a well-known solution concept proposed by \citet{esponda2016berk} is Berk-Nash equilibrium. 
Given contract $\contract$, 
a distribution $\adist\in\Delta(\actions)$ over actions forms a Berk-Nash equilibrium if there exists 
a posterior $\posterior\in\Delta(\beliefs)$ such that 
\begin{itemize}
\item \emph{Optimality.}
For any action $\action^*\in \supp(\adist)$,\footnote{$\supp(\cdot)$ is the support function of a distribution.} 
action $\action^*$ maximizes the agent's expected utility given posterior $\posterior$, i.e., 
\begin{align*}
\action^* \in \argmax_{\action\in\actions} \expect[\belief\sim\posterior]{\Util(\action,\belief,\contract)}.
\end{align*}

\item \emph{Consistency.}
For any belief $\belief^*\in \supp(\posterior)$ in the support of the posterior $\posterior$, 
belief $\belief^*$ minimizes the expected KL-divergence among all beliefs in the support of the prior, 
i.e., 
\begin{align*}
\belief^* \in \argmin_{\belief\in \beliefs} 
\expect[\action\sim\adist]{\KL{\dist_{\action}}{\belief_{\action}}}.
\end{align*}
\end{itemize}
Note that for any contract $\contract:\rewards\rightarrow\reals_+$, the existence of a Berk-Nash equilibrium is promised by \citet{esponda2016berk}.

\begin{proposition}[\citealp{esponda2016berk}]
    For any contract $\contract:\rewards\rightarrow \reals_+$, there exists at least one Berk-Nash equilibrium.
\end{proposition}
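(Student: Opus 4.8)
The plan is to realize a Berk-Nash equilibrium as a fixed point of a suitable best-response correspondence and invoke Kakutani's fixed-point theorem. I would define two correspondences on the product of simplices $\Delta(\actions)\times\Delta(\beliefs)$. Given a posterior $\posterior$, let
\[
A^\star(\posterior)=\left\{\adist\in\Delta(\actions)\suchthat \supp(\adist)\subseteq\argmax_{\action\in\actions}\expect[\belief\sim\posterior]{\Util(\action,\belief,\contract)}\right\},
\]
and given an action distribution $\adist$, let
\[
B^\star(\adist)=\left\{\posterior\in\Delta(\beliefs)\suchthat \supp(\posterior)\subseteq\argmin_{\belief\in\beliefs}\expect[\action\sim\adist]{\KL{\dist_{\action}}{\belief_{\action}}}\right\}.
\]
Set $\Phi(\adist,\posterior)=A^\star(\posterior)\times B^\star(\adist)$. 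By construction, $(\adist,\posterior)$ is a fixed point of $\Phi$ if and only if $\adist$ is supported on best responses to $\posterior$ (optimality) and $\posterior$ is supported on minimizers of the $\adist$-weighted KL-divergence (consistency); that is, the fixed points of $\Phi$ are exactly the Berk-Nash equilibria, so existence reduces to producing a fixed point.

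It then remains to check the hypotheses of Kakutani's theorem. The domain $\Delta(\actions)\times\Delta(\beliefs)$ is nonempty, convex, and compact (the action set is finite, and I take $\beliefs$ finite so that $\Delta(\beliefs)$ is a standard simplex). Each value $A^\star(\posterior)$ is the face of $\Delta(\actions)$ spanned by the optimal actions, hence nonempty (the maximum over the finite set $\actions$ is attained) and convex; symmetrically $B^\star(\adist)$ is a nonempty convex face of $\Delta(\beliefs)$. For upper hemicontinuity I would appeal to Berge's maximum theorem: the objective $\expect[\belief\sim\posterior]{\Util(\action,\belief,\contract)}$ is linear, hence continuous, in $\posterior$, so $\posterior\mapsto\argmax_{\action}(\cdot)$ is upper hemicontinuous and $A^\star$ has closed graph; likewise $\expect[\action\sim\adist]{\KL{\dist_{\action}}{\belief_{\action}}}=\sum_{\action}\adist(\action)\KL{\dist_{\action}}{\belief_{\action}}$ is linear in $\adist$, so the $\argmin$ correspondence and hence $B^\star$ has closed graph. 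Since $\Phi$ is a product of closed-graph, nonempty- and convex-valued correspondences on a compact convex set, Kakutani yields a fixed point, i.e., a Berk-Nash equilibrium.

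The main obstacle is handling the belief side cleanly. First, $\KL{\dist_{\action}}{\belief_{\action}}$ equals $+\infty$ whenever $\supp(\dist_{\action})\not\subseteq\supp(\belief_{\action})$, so the $\argmin$ in $B^\star$ is only well-behaved once one guarantees that some belief attains a finite value (e.g. by assuming every $\belief_{\action}$ has full support, or simply observing that beliefs with infinite weighted divergence are never selected); continuity of the KL functional must then be replaced by lower semicontinuity on the region where it is finite, which still suffices for Berge's theorem. Second, if one wishes to allow an infinite (compact) belief set $\beliefs$ rather than a finite one, the simplex argument no longer applies, and I would instead topologize $\Delta(\beliefs)$ with the weak$^\star$ topology and invoke the Glicksberg--Fan generalization of Kakutani to locally convex spaces, using compactness of $\beliefs$ and lower semicontinuity of $\belief\mapsto\KL{\dist_{\action}}{\belief_{\action}}$ to keep $B^\star$ nonempty-, compact-valued and with closed graph. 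These are precisely the technical points dispatched in \citet{esponda2016berk}, so the argument reduces to verifying their hypotheses in the present contracting environment.
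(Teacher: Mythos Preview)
The paper does not give its own proof of this proposition; it simply invokes the existence result of \citet{esponda2016berk}. Your Kakutani/Berge argument is exactly the standard route taken in that reference, and in the present finite-$\actions$, finite-$\beliefs$ environment all of the hypotheses you check (compact convex simplices, nonempty convex-valued faces, closed-graph best-response correspondences) go through without the infinite-dimensional complications you flag at the end. So the proposal is correct and matches the approach of the cited source.
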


In our paper, we also consider a relaxed notion of Berk-Nash equilibrium by allowing the agents to make an $\epsilon$ mistake for choosing the optimal action. 
\begin{definition}[$\epsilon$-Berk-Nash equilibrium]\label{def:eps Berk-Nash}
For any $\epsilon\geq 0$, a distribution $\adist\in\Delta(\actions)$ over actions forms an $\epsilon$-Berk-Nash equilibrium if there exists a posterior $\posterior\in\Delta(\beliefs)$ 
such that both consistency and the following condition are satisfied:
\begin{itemize}
\item \emph{$\epsilon$-optimality.} 
The distribution $\adist$ over actions approximately maximizes the agent's expected utility given posterior $\posterior$, i.e., 
\begin{align*}
\expect[\action^*\sim\adist]{\expect[\belief\sim\posterior]{\Util(\action^*,\belief,\contract)}}
\geq \max_{\action\in\actions} \expect[\belief\sim\posterior]{\Util(\action,\belief,\contract)} - \epsilon.
\end{align*}
\end{itemize}
\end{definition}

\subsubsection{Bayesian Learning Agent}\label{sec:learning agent}
Another way to analyze the behavior of the agent in a misspecified agent is to assume that the agent is a myopic Bayesian learning agent who interacts with the contract environment for multiple periods. 
Specifically, the belief of the agent at time $t=0$ is his prior, i.e., $\posterior_0 = \prior$. 
For any time $t\geq 1$, 
the agent chooses an action 
\begin{align*}
\action_t\in \argmax_{\action\in\actions} \expect[\belief\sim\posterior_{t-1}]{\Util(\action,\belief,\contract)}.
\end{align*}
Then the reward at time $t$ is realized according to the true reward distribution $\dist_{\action_t}$, 
and the agent updates his posterior $\posterior_t$ according to Bayes rule.

\section{Convergence to Berk-Nash for a Learning Agent with Two Actions}\label{sec:learning agent two actions}
In this section, we show that for a learning agent with two actions, her action frequency converges to a Berk-Nash equilibrium of the game. Formally, we consider a principal who commits to a fixed contract rule $\contract$, and an agent who chooses from two possible actions $\actions=\InBraces{\action_1,\action_2}$ and has a set of misspecified beliefs $\beliefs$ about the outcome of her actions that satisfy the generic belief assumption (\Cref{ass:poly contract}). We assume that the agent updates her belief using the Bayes rule  and chooses the action that maximizes her utility based on her posterior (see \Cref{sec:learning agent}). 
To simplify the exposition, we assume that the agent has a unique best-response for each belief $\belief \in \beliefs$. 
This assumption is summarized in \Cref{as:best response bayes}. 
In \Cref{lem:best response bayes} in \cref{subapx:convergence_binary}, we also show that making this assumption is without loss of generality, as beliefs where the agent is indifferent between the two actions do not affect the dynamics. We postpone the proofs of this section to \Cref{subapx:convergence_binary}. 

Convergence results for learning agents are considered in Section~7 by \citet{esponda2021asymptotic} for an agent with multiple actions and structured one-dimensional belief spaces. In a high-level, they show convergence either when i) for each distribution over actions $\adist\in \Delta(\actions)$ there exists a unique belief that minimizes the KL-divergence with respect to $\adist$ or ii) when the agents picks pure action $\action\in \actions$, the KL-divergence of the beliefs monotonically changes as the belief ``increase'' within it's one-dimenstional space.\footnote{In a high-level, since the belief space is one-dimentional, we can think of the belief space as the real line.} Interested readers can see Section~7 in \citet{esponda2021asymptotic} for more details. Our convergence result depart from prior work by considering agent with two actions, and unstructured beliefs. 
We only impose the generic belief assumptions as in \Cref{ass:poly contract} and \Cref{as:best response bayes}.

\begin{assumption}\label{as:best response bayes}
    For each $\belief \in \beliefs$, there exists a unique best-response, e.g. $\expect[\reward\sim \belief_{\action_1}]{r}\neq\expect[\reward\sim \belief_{\action_2}]{r}$.
\end{assumption}

First we introduce the following notation:
\begin{itemize}
    \item For any distribution over actions $\adist\in \Delta(\actions)$, let $$\beliefs\InParentheses{\adist}=\argmin_{\belief\in\beliefs}\expect[\action\sim\adist]{\KL{\dist_{\action}}{\belief_{\action}}},$$ be the beliefs that minimize the KL-divergence when the action is sampled by $\adist$.



    \item For a distribution over actions $\adist \in \Delta(\actions)$, we denote best-responding actions over posteriors supported on $\beliefs(\adist)$ by: $$\BR{\adist}=\InBraces{\action^*\in \actions: \exists \posterior\in \Delta(\beliefs(\adist)): \action^* \in \argmax_{\action\in\actions} \expect[\belief\sim\posterior]{\Util(\action,\belief,\contract)}}.$$    
    
\end{itemize}
Throughout this section, we denote by $\InBraces{
\widehat{\adist}_T=\InParentheses{
\frac{\sum_{t\in[T]}\ind{\action_t=\action_1}}{T},{\frac{\sum_{t\in[T]}\ind{\action_t=\action_2}}{T}}}
}_{T\geq 0}$ the empirical frequency of a learning agent (see \Cref{sec:learning agent}). 
The convergent result is summarized in \Cref{thm:convergence to Berk-Nash}.

\begin{theorem}\label{thm:convergence to Berk-Nash}
    In environments with two actions, under \Cref{ass:poly contract} and \ref{as:best response bayes}, it is guaranteed that with probability 1, the action frequency of any realized sequence of actions $\InBraces{\action_t}_{t\geq 0}$ by a learning agent converges to a Berk-Nash equilibrium, i.e., $\lim_{t\rightarrow +\infty}\widehat{\adist}_T$ exists and is a Berk-Nash equilibrium.
\end{theorem}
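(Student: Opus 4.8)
The plan is to exploit the one-dimensional structure of the two-action setting: since $\widehat{\adist}_T$ is pinned down by the single scalar $\widehat{\adist}_{T,1}=\frac{1}{T}\sum_{t\in[T]}\ind{\action_t=\action_1}\in[0,1]$, the entire dynamic lives on a segment, and I would reduce the claim to showing that this scalar process converges to a rest point of a piecewise-monotone mean field. Two ingredients drive the argument. The first is a posterior-concentration (Berk-type) statement: along any realized play, the Bayesian posterior $\posterior_T$ asymptotically places all of its mass on $\beliefs(\widehat{\adist}_T)$, the KL-minimizing beliefs evaluated at the \emph{current} empirical action frequency. The second is a geometric characterization of $\beliefs(\cdot)$ which, together with \Cref{ass:poly contract}, turns the best-response map $\BR{\cdot}$ into a piecewise-constant function of $\widehat{\adist}_{T,1}$ with only finitely many switch points.

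For the concentration step, first I would fix two beliefs $\belief\neq\belief'\in\beliefs$ and track the log-likelihood ratio that they assign to the realized reward sequence. Because the reward at time $t$ is drawn from the true $\dist_{\action_t}$, grouping the summands by the action taken and applying the strong law of large numbers to each action's subsequence shows that $\frac{1}{T}$ times this log-ratio converges almost surely to $-\InParentheses{\expect[\action\sim\widehat{\adist}_T]{\KL{\dist_\action}{\belief_\action}}-\expect[\action\sim\widehat{\adist}_T]{\KL{\dist_\action}{\belief'_\action}}}$ up to a vanishing error. Hence the posterior odds between any two beliefs are eventually governed by their $\widehat{\adist}_T$-weighted KL divergences, so $\posterior_T$ collapses onto $\beliefs(\widehat{\adist}_T)$; \Cref{as:best response bayes} then guarantees that each surviving belief has a well-defined best response. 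The delicate point is that the sampling weight $\widehat{\adist}_T$ is itself endogenous and is not known to converge a priori, so I would state the estimate uniformly in the empirical frequency rather than assume a fixed limiting distribution.

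Next I would make the best-response structure explicit. Writing the consistency objective as $\expect[\action\sim\adist]{\KL{\dist_\action}{\belief_\action}}=\inr{\adist,\kl(\belief)}$, the set $\beliefs(\adist)$ is exactly the set of points of $\kls$ minimizing a linear functional whose normal $\adist$ lies in the positive orthant; as $\adist$ sweeps the simplex from the pure-$\action_2$ normal $(0,1)$ to the pure-$\action_1$ normal $(1,0)$, these minimizers trace the lower-left boundary of $\ch{\kls}$. The general-position clause of \Cref{ass:poly contract} ensures that no three divergence vectors are collinear and that all are distinct, so this boundary is a polygonal chain whose vertices are attained on disjoint subintervals of $[0,1]$ and whose finitely many breakpoints are the only frequencies at which two beliefs tie. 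Consequently $\BR{\widehat{\adist}_T}$, viewed as a function of $\widehat{\adist}_{T,1}$, is single-valued and constant on each open subinterval between breakpoints, and can be set-valued only at those finitely many breakpoints.

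Finally I would close the argument with a stochastic-approximation analysis of the scalar, which satisfies $\widehat{\adist}_{T,1}-\widehat{\adist}_{T-1,1}=\frac{1}{T}\InParentheses{\ind{\action_T=\action_1}-\widehat{\adist}_{T-1,1}}$ with $\action_T$ a best response to $\posterior_{T-1}$. Combining the two ingredients, the mean drift points toward $\action_1$ precisely on the intervals where $\action_1\in\BR{\cdot}$ and toward $\action_2$ on the complementary intervals, so the trajectory is eventually monotone on each subinterval and can only reverse direction at a breakpoint. With finitely many breakpoints, the scalar process is therefore eventually trapped at a rest point: either a pure frequency ($0$ or $1$) whose unique surviving belief best-responds with the matching action, or an interior breakpoint where the two tied beliefs induce opposite best responses and the indifference-inducing posterior mixture is pinned down self-consistently. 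In either case optimality and consistency hold at the limit, so it is a Berk-Nash equilibrium. I expect the main obstacle to be exactly the endogenous feedback loop between $\posterior_T$ and $\widehat{\adist}_T$ --- controlling the posterior while the frequency is still moving, and ruling out persistent oscillation around a breakpoint --- which is where the one-dimensionality and the finite breakpoint count from \Cref{ass:poly contract} do the essential work; standard martingale and Borel--Cantelli estimates then upgrade the mean-field convergence to the almost-sure statement.
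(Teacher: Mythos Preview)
Your proposal is correct in outline and shares the paper's two structural pillars: the geometric characterization of $\BR{\cdot}$ as piecewise constant with finitely many breakpoints (via the lower boundary of $\ch{\kls}$ and the general-position clause of \Cref{ass:poly contract}), and the classification of rest points as Berk-Nash equilibria. Where you diverge from the paper is in the bridging step between the stochastic dynamic and the mean field. The paper does not attempt a direct posterior-concentration argument; instead it invokes the asymptotic pseudo-trajectory result of \citet{esponda2021asymptotic} (their Theorem~2, stated here as \Cref{thm:differential inclusion to action frequency}) as a black box, which already packages the feedback between $\posterior_T$ and $\widehat{\adist}_T$ into the statement that $\widehat{\adist}_T$ tracks solutions of the differential inclusion $\dot{\adist}\in\Delta(\BR{\adist})-\adist$. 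The paper then analyzes that deterministic one-dimensional inclusion (\Cref{thm:differential inclusion convergence}) and checks that its limit points are Berk-Nash (\Cref{thm:differential inclusion convergence to Berk-Nash}). Your route---SLLN on the per-action log-likelihood ratios followed by a direct stochastic-approximation argument on the scalar recursion---is more self-contained and arguably more transparent, but the heavy lifting you defer to ``standard martingale and Borel--Cantelli estimates'' is precisely what the cited Esponda--Pouzo theorem encapsulates. In particular, your uniform-in-$\widehat{\adist}_T$ concentration claim, and the exclusion of persistent oscillation around an attracting breakpoint, would need to be made rigorous; the paper sidesteps both by working with the continuous inclusion, where monotonicity between breakpoints and the attractor/repeller dichotomy are immediate. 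Either approach works; the paper's is shorter because it outsources the stochastic-approximation machinery, while yours would yield a proof that does not rely on that external result.
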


To prove the convergence, we use the asymptotic characterization of the action-frequency $\widehat{\action}_T$ using a differential inclusion by \citet{esponda2021asymptotic}. Differential inclusions extend ordinary differential equations by allowing the derivative of a function to take values from a set, represented as $\dot{x}(t) \in F(x(t))$, rather than being defined by a single-valued function $f(x(t))$.  This approach models systems with uncertain or non-smooth dynamics by admitting multiple possible trajectories at each point. Specifically, in our setting where, for a given action distribution $\adist(t)$, the agent may have various possible posterior beliefs that minimize the KL divergence with respect to $\adist(t)$, each associated with distinct best responses. Additionally, a solution of a differential inclusions is when a point $x$ satisfies the condition $\textbf{0} \in F(x)$, meaning that the function  remains constant or stationary at $x$ as one of the possible behaviors allowed by the inclusion.


\begin{lemma}[\citealp{esponda2021asymptotic}]\label{thm:differential inclusion to action frequency}
Consider the continuous time differential inclusion:
    \begin{align*}
    \dot{\adist}(t) \in \Delta\InParentheses{\BR{\adist(t)}} - \adist(t).
\end{align*}
Moreover, for a starting condition $\adist(0)=\adist\in \Delta(\actions)$, denote by $S_{\adist}^T$ the feasible solution of the differential inclusion for $t\in[0,T]$. Then, asymptotically and with probability $1$, for any $T>0$
\begin{align*}
    \lim_{t\rightarrow+\infty} \inf_{S_{\widehat{\action}_t}^T} \sup_{ t'\in[T]}\|\widehat{\action}_{t+t'} - \adist(t') \| = 0
\end{align*}
\end{lemma}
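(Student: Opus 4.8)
The statement to prove is \Cref{thm:differential inclusion to action frequency}, which asserts that the empirical action frequency $\widehat{\adist}_t$ of the myopic Bayesian learning agent is, almost surely, an \emph{asymptotic pseudotrajectory} of the differential inclusion $\dot{\adist}(t)\in\Delta(\BR{\adist(t)})-\adist(t)$. Since this is attributed to \citet{esponda2021asymptotic}, the proof is really a verification that our contracting setup fits the hypotheses of their stochastic-approximation machinery (which descends from the Bena\"im--Hofbauer--Sorin theory of differential inclusions and stochastic approximation). So the plan is to cast the discrete update of $\widehat{\adist}_t$ as a Robbins--Monro stochastic approximation recursion with vanishing step size and a martingale-difference noise term, identify its mean-field as the set-valued map $x\mapsto\Delta(\BR{x})-x$, check the regularity conditions that guarantee the interpolated process shadows the flow of the inclusion, and then invoke the APT conclusion.

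\textbf{Key steps in order.} First I would write the one-step recursion for the empirical frequency. With $e_{\action_{t+1}}$ the vertex of $\Delta(\actions)$ corresponding to the realized action at time $t+1$, a standard Ces\`aro-averaging identity gives
\begin{align*}
\widehat{\adist}_{t+1}-\widehat{\adist}_{t} = \frac{1}{t+1}\InParentheses{e_{\action_{t+1}}-\widehat{\adist}_{t}},
\end{align*}
which is exactly a stochastic-approximation scheme with step size $\gamma_t=\tfrac{1}{t+1}$ satisfying $\sum_t\gamma_t=\infty$ and $\sum_t\gamma_t^2<\infty$. Second, I would decompose the increment direction $e_{\action_{t+1}}-\widehat{\adist}_t$ into its conditional mean plus a noise term: conditioning on the history $\mathcal{H}_t$, the chosen action $\action_{t+1}$ is a best response to the current posterior $\posterior_t$, and as the posterior concentrates (via the a.s.\ convergence of Bayesian posteriors onto the KL-minimizing beliefs $\beliefs(\widehat{\adist}_t)$) the conditional law of $e_{\action_{t+1}}$ lies in $\Delta(\BR{\widehat{\adist}_t})$ up to a term vanishing with $t$. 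The residual $\xi_{t+1}=e_{\action_{t+1}}-\expect{e_{\action_{t+1}}\wgiven\mathcal{H}_t}$ is a bounded martingale difference, so $\sum_t\gamma_t\xi_{t+1}$ converges a.s.\ by the martingale convergence theorem, and the perturbation is asymptotically negligible. Third, I would verify that the mean-field $M(x)=\Delta(\BR{x})-x$ is a \emph{good upper semicontinuous set-valued map}: it is nonempty, compact- and convex-valued (since $\Delta(\BR{x})$ is a simplex face), bounded, and has a closed graph --- the latter following because $\BR{\cdot}$ and $\beliefs(\cdot)$ are upper hemicontinuous as $\argmin$/$\argmax$ correspondences of continuous objective functions (Berge's maximum theorem). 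Fourth, with the recursion, noise conditions, and regularity of $M$ in hand, I would invoke the Bena\"im--Hofbauer--Sorin theorem (as applied in \citet{esponda2021asymptotic}): the affine-interpolated trajectory of $\{\widehat{\adist}_t\}$ is almost surely an asymptotic pseudotrajectory of the flow of $\dot{\adist}\in M(\adist)$, which is precisely the displayed double-limit conclusion $\lim_{t\to\infty}\inf_{S_{\widehat{\action}_t}^T}\sup_{t'\in[T]}\|\widehat{\action}_{t+t'}-\adist(t')\|=0$.

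\textbf{The main obstacle.} The delicate step is the second one: controlling the gap between the \emph{realized} best response to the current \emph{posterior} $\posterior_t$ and the best responses $\BR{\widehat{\adist}_t}$ defined through the KL-minimizing beliefs $\beliefs(\widehat{\adist}_t)$ of the current \emph{empirical frequency}. These coincide only in the limit, so I must show that the discrepancy between $\posterior_t$ and the set of KL-minimizers evaluated at $\widehat{\adist}_t$ is an asymptotically vanishing perturbation that can be absorbed into the stochastic-approximation error term. This requires the law-of-large-numbers behavior of Bayesian updating under misspecification --- namely that the posterior asymptotically concentrates on the beliefs minimizing the $\widehat{\adist}_t$-weighted expected KL divergence --- which is where the genericity assumptions (\Cref{ass:poly contract} and \Cref{as:best response bayes}) and the two-action structure enter to keep the minimizing set well-behaved and the correspondences single-valued off a measure-zero set. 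Handling this cleanly, rather than the routine verification of the noise and regularity conditions, is the crux of fitting our model into the differential-inclusion framework.
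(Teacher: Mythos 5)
Your proposal is correct in substance, but it takes a more laborious route than the paper. The paper's entire proof is a hypothesis check: it observes that the best-response correspondence $\BR{\cdot}$ is upper hemicontinuous and that Assumptions~1--2 of \citet{esponda2021asymptotic} hold (the belief space $\beliefs$ is finite and the prior has full support), and then cites their Theorem~2, which delivers the asymptotic-pseudotrajectory conclusion directly. What you do instead is reconstruct the \emph{proof} of that theorem: the Ces\`aro recursion $\widehat{\adist}_{t+1}-\widehat{\adist}_t=\frac{1}{t+1}(e_{\action_{t+1}}-\widehat{\adist}_t)$, the martingale-plus-mean-field decomposition, the regularity of the map $x\mapsto\Delta(\BR{x})-x$, and the final appeal to Bena\"im--Hofbauer--Sorin. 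That reconstruction is sound, and your identification of the crux --- controlling the gap between the best response to the realized posterior $\posterior_t$ and $\BR{\widehat{\adist}_t}$ defined via the KL-minimizers of the empirical frequency, under endogenous (non-i.i.d.) data --- is exactly the hard part of Esponda--Pouzo--Yamamoto's argument. But precisely because the statement is attributed to them, that hard part need not be redone: it is internal to their Theorem~2, and the correct division of labor (which the paper adopts) is to verify its assumptions and stop. One small inaccuracy in your plan: you suggest the genericity assumptions (\Cref{ass:poly contract}, \Cref{as:best response bayes}) are needed to tame the posterior-concentration step here; they are not --- this lemma holds without them, and they only enter later (in \Cref{thm:characterization of BR} and the convergence analysis) to make $\BR{\cdot}$ single-valued off a finite set of points.
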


We show that the differential inclusion introduced in \cref{thm:differential inclusion to action frequency} has a single trajectory for all but a finite set of points. 
This is implied by our following lemma, where we show that the best response to beliefs that are updated according to a distribution over actions is unique except for a finite set of distributions. 
This proof draws upon a technical lemma in the appendix, detailing additional properties of beliefs that minimize the KL-divergence for a predetermined distribution over actions.


\begin{lemma}\label{thm:characterization of BR}
There exists a finite set of break-points $\text{BP}=\InBraces{0=\widehat{\action}^{(1)}<\widehat{\action}^{(2)}<\ldots<\widehat{\action}^{(n)}=1}$ and a corresponding set of actions $\InBraces{\action^{(1)},\action^{(2)},\ldots,\action^{(n-1)}}\in \actions^{|n-1|}$ such that for any mixed-distribution over actions $\adist\in\Delta\InParentheses{\actions}\setminus \{\delta_{\action_1},\delta_{\action_2}\}$:
    \begin{align*}
        \BR{\adist}
        =\begin{cases}
            \action^{(k)} \qquad& \text{If $\adist(\action_1)\in \InParentheses{\widehat{\action}^{( k )},\widehat{\action}^{( k+1)}}$,}\\
            \actions &\text{if $\adist(\action_1)\in \text{BP}$.}
        \end{cases}
    \end{align*}
\end{lemma}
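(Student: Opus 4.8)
The plan is to reduce the statement to the geometry of a finite family of lines indexed by beliefs. Since $|\actions|=2$, I would parameterize each $\adist\in\Delta(\actions)$ by the scalar $x:=\adist(\action_1)\in[0,1]$, with $\adist(\action_2)=1-x$. Writing $k_i(\belief):=\KL{\dist_{\action_i}}{\belief_{\action_i}}$, the objective defining $\beliefs(\adist)$ becomes $\expect[\action\sim\adist]{\KL{\dist_{\action}}{\belief_{\action}}}=x\,k_1(\belief)+(1-x)\,k_2(\belief)=:g_\belief(x)$, an affine function of $x$; equivalently each belief $\belief$ is the line through $(0,k_2(\belief))$ and $(1,k_1(\belief))$ in the $(x,\mathrm{KL})$-plane. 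Hence $\beliefs(\adist)=\argmin_{\belief\in\beliefs}g_\belief(x)$ is exactly the set of beliefs attaining the lower envelope $L(x):=\min_{\belief\in\beliefs}g_\belief(x)$, which---using that $\beliefs$ is finite---is a concave, piecewise-linear function with finitely many vertices.

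Next I would determine $\beliefs(x)$ as $x$ varies. On each open interval strictly between two consecutive vertices of $L$ the minimizing line is unique, so $\beliefs(x)$ is a singleton; ties occur only at the finitely many vertices. Here \Cref{ass:poly contract} enters: its first bullet makes all $k_1(\belief)$ distinct and all $k_2(\belief)$ distinct, ruling out coincident lines, while the general-position bullet, under the point-line duality sending $(k_1(\belief),k_2(\belief))\mapsto g_\belief$, says no three of these lines are concurrent (collinear KL-vectors correspond to concurrent lines). Thus each vertex of $L$ is a crossing of exactly two lines, so $\beliefs(x)=\{\belief_{\mathrm{left}},\belief_{\mathrm{right}}\}$ there, namely the beliefs optimal immediately to the left and to the right. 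I expect to invoke the appendix's technical lemma on KL-minimizers to formalize that $\beliefs(\cdot)$ is a singleton off a finite set and a doubleton exactly at these vertices.

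The final step reads off $\BR{\adist}$. Since $\expect[\belief\sim\posterior]{\Util(\action,\belief,\contract)}$ is linear in $\posterior\in\Delta(\beliefs(\adist))$, on a singleton interval $\beliefs(x)=\{\belief\}$ the only posterior is $\delta_\belief$ and \Cref{as:best response bayes} yields a unique best-responding action, a single constant action on that piece. At a vertex with $\beliefs(x)=\{\belief_{\mathrm{left}},\belief_{\mathrm{right}}\}$, the utility gap between $\action_1$ and $\action_2$ under a mixture $\lambda\delta_{\belief_{\mathrm{left}}}+(1-\lambda)\delta_{\belief_{\mathrm{right}}}$ is affine in $\lambda$; so if the two beliefs share a best response then every posterior does and $\BR{\adist}$ is that single action, whereas if they disagree then the two pure posteriors exhibit both actions and, as $|\actions|=2$, $\BR{\adist}=\actions$. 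I would therefore take $\text{BP}$ to be $\{0,1\}$ together with precisely those vertices of $L$ where the best response flips; between consecutive break-points $\BR$ is constant---the single-belief best response on each open sub-interval, and the same action at any intervening non-flip vertex---which names the action $\action^{(k)}$, while at each interior break-point the two adjacent beliefs disagree by construction, giving $\BR{\adist}=\actions$. Finiteness of $\text{BP}$ is inherited from finiteness of the vertex set.

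The main obstacle is the bookkeeping at vertices: I must show that envelope vertices where the best response does not actually change can be absorbed into the adjacent interval, so that on each open piece of the final partition $\BR$ is a genuine single constant action while $\BR=\actions$ holds at every retained break-point. This is where the linearity of utility in $\posterior$ (to control arbitrary mixtures at a tie), \Cref{as:best response bayes} (uniqueness per belief), and the general-position part of \Cref{ass:poly contract} (exactly two beliefs per tie) must be combined carefully, and it is the step most exposed to hidden degeneracies.
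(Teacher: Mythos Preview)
Your approach is correct and is essentially the same lower-envelope argument the paper uses, packaged there as the technical \Cref{lem:char of inclusion}: the finite family of affine KL-lines, general position to ensure at most two beliefs tie at any point, and the resulting singleton/doubleton dichotomy feeding into \Cref{as:best response bayes}. You are in fact slightly more careful than the paper about absorbing non-flip vertices---the paper asserts that $\beliefs(\adist)$ is a singleton on each open interval between break-points, which is not literally true when two adjacent envelope beliefs share a best response, though the conclusion about $\BR{\cdot}$ being constant there still holds for exactly the reason you give.
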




Using this characterization, we can further characterize the convergent points of the differential inclusion in \Cref{thm:differential inclusion convergence}. At a high level, the proof first notices that the best-response function $\BR{\cdot}$ is continuous across its domain, except at a finite number of points. With this observation, coupled with the fact that the action distribution space $\Delta(\actions)$ is one-dimensional, we show that for action distributions $\adist(t)$ where $\BR{\adist(t)}$ is multi-valued, such points are either attractors or, once visited, will not be revisited. Thus the differential inclusion converges either to a multi-valued attractor or to a attracting point of the ordinary differential equation.

\begin{lemma}\label{thm:differential inclusion convergence}
Consider the continuous-time Differential Inclusion:
\begin{align*}
    \dot{\adist}(t) \in \Delta\InParentheses{\BR{\adist(t)}} - \adist(t).
\end{align*}
The differential inclusion converges to:
\begin{itemize}
\item a distribution over actions, denoted as $\adist^*$, such that $\BR{\adist^*}=\actions$; or
\item $\delta_{\action_1}$ (or $\delta_{\action_2}$, respectively) when $\BR{\delta_{\action_1}}=\action_1$ (or $\BR{\delta_{\action_2}}=\action_2$, respectively).
\end{itemize}

\end{lemma}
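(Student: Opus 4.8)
The plan is to analyze the one-dimensional differential inclusion $\dot{\adist}(t) \in \Delta(\BR{\adist(t)}) - \adist(t)$ by parameterizing the state via the scalar $x(t) = \adist(t)(\action_1) \in [0,1]$, and to exploit the piecewise-constant characterization of $\BR{\cdot}$ provided by \Cref{thm:characterization of BR}. On each open interval $(\widehat{\action}^{(k)}, \widehat{\action}^{(k+1)})$ the best-response is the single action $\action^{(k)}$, so the inclusion reduces to an ordinary differential equation: if $\action^{(k)} = \action_1$ then $\dot{x} = 1 - x > 0$ (the state drifts right), and if $\action^{(k)} = \action_2$ then $\dot{x} = -x < 0$ (the state drifts left). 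Thus on each non-degenerate interval the flow is monotone, moving with constant sign toward one endpoint at an exponential rate. The only points where the dynamics are genuinely set-valued are the finitely many break-points in $\text{BP}$ (where $\BR{\cdot} = \actions$) and the two vertices $\delta_{\action_1}, \delta_{\action_2}$.

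First I would establish the behavior at the boundary vertices: at $x=1$ we have $\BR{\delta_{\action_1}} \in \{\action_1, \actions\}$, and the vertex is a stationary/convergent point precisely when $\BR{\delta_{\action_1}} = \action_1$, since then the adjacent interval drifts rightward into it; symmetrically for $x=0$. Next I would classify each interior break-point $\widehat{\action}^{(k)}$ by inspecting the drift directions of its two neighboring intervals, which are determined by whether $\action^{(k-1)}$ and $\action^{(k)}$ equal $\action_1$ or $\action_2$. There are four sign patterns: (right, left) makes the break-point a two-sided \emph{attractor}, because both neighbors flow into it and $\textbf{0} \in \Delta(\actions) - \delta_x$ is realizable there; (left, right) makes it a \emph{repeller} that the flow leaves on both sides; and the two "pass-through" patterns (right, right) and (left, left) let a trajectory cross the point in a fixed direction. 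The key observation is that at a break-point the set-valued right-hand side $\Delta(\actions) - \adist$ contains a whole segment of velocities, so a solution is always free to sit still there (giving $\textbf{0} \in F(x)$), but whether trajectories are actually drawn in is governed by the neighboring ODE drifts.

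The main structural argument is then a monotonicity-and-exhaustion argument on the one-dimensional line. Because the state space is an interval and each non-break-point has a single-valued, sign-definite velocity, any solution trajectory is piecewise monotone; between consecutive break-points it moves strictly in one direction and therefore cannot oscillate or cycle. Hence a trajectory either reaches an attractor break-point or a convergent vertex in finite time (and then stays, by stationarity of the inclusion there), or it passes monotonically through a sequence of pass-through intervals and break-points until it is captured. Since $\text{BP}$ is finite, only finitely many crossings are possible before the state is trapped in an interval whose drift points toward an attractor or vertex, at which point it converges exponentially. This yields exactly the two cases in the statement: convergence to some $\adist^*$ with $\BR{\adist^*} = \actions$ (an attractor break-point), or convergence to $\delta_{\action_1}$ or $\delta_{\action_2}$ when the corresponding vertex is self-best-responding.

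The hard part will be handling the set-valued dynamics rigorously at the break-points, where Filippov-style solutions of the differential inclusion may in principle linger or exit in either direction, so I must argue carefully that a repeller or pass-through point, "once visited, will not be revisited," as claimed in the surrounding text. Concretely, I would show that at a repeller the only way to stay is the stationary solution, and any perturbation off it commits the trajectory to a neighboring interval whose monotone drift carries it away permanently; combined with global one-dimensional monotonicity this forbids return. I would also need to confirm that the attractor break-points are genuinely points where $\textbf{0}$ lies in the interior of the velocity set (so they are valid stationary solutions of the inclusion), which follows because $\BR{\cdot} = \actions$ forces $\Delta(\BR{\adist}) = \Delta(\actions)$ to be the full simplex and hence $\textbf{0} \in \Delta(\actions) - \adist$.
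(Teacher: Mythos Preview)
Your proposal is correct and follows essentially the same route as the paper: both arguments reduce to the one-dimensional parameterization $x(t)=\adist(t)(\action_1)$, invoke \Cref{thm:characterization of BR} to see that away from the finitely many break-points the inclusion is an ODE with sign-definite drift, and then use one-dimensional monotonicity to force convergence to either an interior break-point (where $\BR{\cdot}=\actions$) or a self-best-responding vertex.

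The only notable difference is organizational. The paper establishes convergence by contradiction: it supposes the limit fails to exist, extracts a point $\adist^*\notin\text{BP}$ that is revisited (with the trajectory staying on one side of it between two visits), and then observes that near $\adist^*$ the drift has a fixed sign, making such a return impossible. You instead give a forward, constructive argument by classifying each interior break-point as attractor, repeller, or pass-through and running an exhaustion argument over the finitely many intervals. Your classification is finer than strictly required (the lemma's conclusion allows convergence to \emph{any} point with $\BR{\cdot}=\actions$, so repellers need not be separately excluded; a trajectory that starts exactly at a repeller and sits there is already covered), but this extra detail does no harm. Either packaging yields the same proof content. One small inaccuracy: at the vertices $\BR{\delta_{\action_i}}$ is a singleton (by \Cref{ass:poly contract} and \Cref{as:best response bayes}), not possibly $\actions$ as you wrote, but this does not affect your argument.
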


\begin{proof}

We first show that $\lim_{t\rightarrow+\infty}\adist(t)$ exists. Assume towards contradiction that $\lim_{t\rightarrow+\infty}\adist(t)$ does not exists. 
Then there exists times $t<t'$ that satisfy the following:
\begin{itemize}
    \item $\adist(t) = \adist(t')\notin \text{BP}$ 
    \item For any $t^*\in (t,t')$ either  $ \adist(t)(\action_1) < \adist(t^*)(\action_1)$ or $ \adist(t)(\action_1) >\adist(t^*)(\action_1) $. 
\end{itemize} 

Since $\lim_{t\rightarrow+\infty}\adist(t)$ does not converge to a point, and $\adist(t)$ is continuous in $t$, there must be a set of distribution over actions $L=\InBraces{\adist\in \delta(\actions): |t\in \mathbb{N} : \adist(t) = \adist|=+\infty}$ that we visit infinitely many times. Moreover $|L|=+\infty$.\footnote{Assume towards contradiction that $L<+\infty$. By assumption $|L|\geq 2$ and let $\adist_1,\adist_2\in L$ be two distinct elements of $L$. Then by continuity of $L$, any $\adist\in [\adist_1,\adist_2]$ is also in $L$.} Thus there exist distribution over actions $\adist^*\in L\setminus \text{BP}$. We further assume that time $t$ is some time such that $\adist(t)=\adist^*$ and $t'=\min\InBraces{t^*>t: \adist(t^*)=\adist(t)}$. Thus $\adist(t)\notin \InBraces{\adist(t^*):t\in(t,t''')}$, which by continuity of $\adist(t)$ in $t$ implies that for all $t^*\in (t,t''')$ either $\adist(t^*)(\action_1)>\adist(t)(\action_1)$ or $\adist(t^*)(\action_2)>\adist(t)(\action_2)$, which proves the second item. 

For sufficiently small $\epsilon > 0$, by continuity of $\BR{\cdot}$ in $\adist(t)=\adist(t')\notin \text{BP}$ (see \Cref{thm:characterization of BR}), for any time $\widetilde{t}\in [t,t+\epsilon]\cup [t'-\epsilon,t']$, $\BR{\adist(\widetilde{t})}=\BR{\adist(t)}$ which is either $\delta_{\action_1}$ or $\delta_{\action_2}$. We prove the former case as the latter case follows similarly. Observe that:
$$
\dot{\adist}(\widetilde{t}) = \Delta\InParentheses{\BR{\adist(\widetilde{t})}} - \adist(\widetilde{t}) = \delta_{\action_1} - \adist(\widetilde{t})
.$$
Thus for any $\widetilde{t}\in[t,t+\epsilon]\cup [t'-\epsilon,t']$, $\dot{\adist}(\widetilde{t})(\action_1)>0$ and $\dot{\adist}(\widetilde{t})(\action_2) =-\dot{\adist}(\widetilde{t})(\action_1)< 0$. Consequentially, for times $ t < t+\epsilon < t'-\epsilon < t'$,\footnote{We remind that we can choose $\epsilon>0$ arbitrarily small.} it must be the case that $\adist(t+\epsilon)(\action_1) <\adist(t')(\action_1)=\adist(t)(\action_1)<\adist(t+\epsilon)(\action_1)$, a contradiction to item two above.



Therefore, the limit $\lim_{t\rightarrow+\infty}\adist(t)=\adist^*$ exists and must converge to a solution of the differential inclusion. We now demonstrate that any distribution over actions $\Delta(\actions)$ for which $\adist^*(\action_1) \notin \text{BP}$ cannot be a solution to the differential inclusion. Suppose, for the sake of contradiction, that $\adist(\action_1) \notin \text{BP}$, and yet $\adist^*$ is a solution to the differential inclusion. According to \Cref{thm:characterization of BR}, $\BR{\adist^*}$ must be either $\action_1$ or $\action_2$. However, since $\adist^*$ is a solution to the differential inclusion, we have:
$$
\textbf{0}= \delta_{\action} - \adist^* \Leftrightarrow \adist^* = \delta_{\action}, \text{ where }\action\in\actions, 
$$
leading to a contradiction. This is because $\adist^*$ cannot be one of ${\delta_{\action_1}, \delta_{\action_2}}$, given that ${0,1} \subset \text{BP}$. Thus $\adist^*(\action_1)\in \text{BP}$. Moreover, if $\adist^*\notin \InBraces{\delta_{\action_1},\delta_{\action_2}}$, then $\BR{\adist^*}= \actions$ according to \Cref{thm:characterization of BR}.

We now deal with the case where $\adist^*$ is $\delta_{\action_1}$ or $\delta_{\action_2}$. We only prove the case where the differential inclusion converges to $\delta_{\action_1}$. Since $\delta_{\action_1}$ is a solution of the differential inclusion:
$$
    \textbf{0}\in\delta_{\BR{\delta_{\action_1}}} - \delta_{\action_1},
$$
    which implies the claim since $\BR{\delta_{\action_1}}$ is single-valued.
\end{proof}

We show in the lemma below that all convergent points of \Cref{thm:differential inclusion convergence} are Berk-Nash equilibria.

\begin{lemma}\label{thm:differential inclusion convergence to Berk-Nash}
Let $\belief^{(1)} = \argmin_{\belief\in \beliefs}\KL{\dist_{\action_1}}{\belief_{\action_1}}$, $\belief^{(2)} = \KL{\dist_\action}{\belief_{\action_2}}$, and $\adist^*$ be a distribution over actions. Then 
$\adist^*$ is a Berk-Nash equilibrium in the following cases:
\begin{enumerate}[itemindent=24pt, label = Case (\arabic*):]
    \item $|\supp(\adist^*)|=1$, and if $\adist^*=\delta_{\action_1}$ ($\adist^*=\delta_{\action_2}$ resp.) and $\action_1 \in \argmax_{\action\in\actions} \Util(\action,\belief^{(1)},\contract)$ ($\action_2 \in \argmax_{\action\in\actions} \Util(\action,\belief^{(2)},\contract)$ resp.).
    \item $|\supp(\adist^*)|=2$ and if $\BR{\adist}=\actions$.
\end{enumerate}
\end{lemma}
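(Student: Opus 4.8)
The plan is to exhibit, in each case, an explicit posterior $\posterior\in\Delta(\beliefs)$ that certifies $\adist^*$ as a Berk-Nash equilibrium, i.e.\ one satisfying both the consistency and optimality conditions of \Cref{sec:definition of Berk-Nash}. I would handle consistency uniformly by insisting that the posterior be supported on $\beliefs(\adist^*)$, the set of beliefs minimizing the expected KL-divergence with respect to $\adist^*$; by the very definition of $\beliefs(\cdot)$ this makes consistency automatic. The entire work therefore reduces to securing optimality for every action in $\supp(\adist^*)$.

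For Case~(1), suppose $\adist^*=\delta_{\action_1}$ (the case $\adist^*=\delta_{\action_2}$ is symmetric). First I would observe that $\expect[\action\sim\adist^*]{\KL{\dist_{\action}}{\belief_{\action}}}=\KL{\dist_{\action_1}}{\belief_{\action_1}}$, so that $\beliefs(\adist^*)=\argmin_{\belief\in\beliefs}\KL{\dist_{\action_1}}{\belief_{\action_1}}$, which by the first bullet of \Cref{ass:poly contract} (distinctness of the KL values across beliefs for a fixed action) is the singleton $\InBraces{\belief^{(1)}}$. Taking $\posterior=\delta_{\belief^{(1)}}$ then satisfies consistency automatically, and optimality reduces to $\action_1\in\argmax_{\action\in\actions}\Util(\action,\belief^{(1)},\contract)$, which is precisely the hypothesis of Case~(1). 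Thus this case is immediate once the KL-minimizer is identified.

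For Case~(2), where $\supp(\adist^*)=\actions$ and $\BR{\adist^*}=\actions$, the difficulty is that the definition of $\BR{\cdot}$ only guarantees that each action is a best response to \emph{some} posterior supported on $\beliefs(\adist^*)$, and a priori these could be two different posteriors $\posterior_1$ (for $\action_1$) and $\posterior_2$ (for $\action_2$). For a Berk-Nash equilibrium I need a \emph{single} posterior under which both actions in the support are simultaneously optimal. To bridge this gap I would introduce the utility-difference functional $g(\posterior)=\expect[\belief\sim\posterior]{\Util(\action_1,\belief,\contract)}-\expect[\belief\sim\posterior]{\Util(\action_2,\belief,\contract)}$, which is affine, hence continuous, in $\posterior$ over the convex set $\Delta(\beliefs(\adist^*))$. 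Since there are only two actions, $\action_1\in\BR{\adist^*}$ gives $g(\posterior_1)\geq 0$ and $\action_2\in\BR{\adist^*}$ gives $g(\posterior_2)\leq 0$. Applying the intermediate value theorem along the segment connecting $\posterior_1$ and $\posterior_2$ (which stays inside $\Delta(\beliefs(\adist^*))$ by convexity) produces a posterior $\posterior^*$ with $g(\posterior^*)=0$, so both actions yield identical expected utility under $\posterior^*$ and are therefore both maximizers. As $\posterior^*$ is supported on $\beliefs(\adist^*)$, consistency holds, and optimality now holds for both actions of $\supp(\adist^*)$.

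I expect the only genuine obstacle to be this merging step in Case~(2): converting the two separate best-response posteriors promised by $\BR{\adist^*}=\actions$ into one common posterior. The structural facts that make it go through are the linearity of the expected utility in $\posterior$ and the restriction to two actions, which collapse the problem to a one-dimensional intermediate-value argument. I would additionally verify that the convex combination $\posterior^*$ remains supported within $\beliefs(\adist^*)$ (it does, being a mixture of two posteriors already supported there), so that consistency is preserved throughout and the certificate is valid.
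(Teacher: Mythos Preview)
Your proposal is correct and follows essentially the same approach as the paper: for Case~(1) you pair $\delta_{\action_1}$ with the Dirac posterior $\delta_{\belief^{(1)}}$, and for Case~(2) you take the two posteriors $\posterior_1,\posterior_2$ promised by $\BR{\adist^*}=\actions$ and interpolate to obtain a single $\posterior^*$ under which both actions tie. The paper's proof is terser (it simply asserts ``thus, there exists $\posterior^*$'' without naming the intermediate-value argument), but the content is identical; your explicit invocation of \Cref{ass:poly contract} to get uniqueness of $\belief^{(1)}$ is a slight over-justification, since consistency only requires $\belief^{(1)}$ to lie in the argmin, not to be the unique minimizer.
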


To complete the proof of \Cref{thm:convergence to Berk-Nash}, we need only show that the continuous differential inclusion in \Cref{thm:differential inclusion to action frequency}, converges to an action frequency corresponding to a Berk-Nash equilibrium. This convergence is assured by \Cref{thm:differential inclusion convergence} and \Cref{thm:differential inclusion convergence to Berk-Nash}, which together conclude that the differential inclusion converges to a Berk-Nash equilibrium.

\section{Computational Challenges for Berk-Nash Equilibria}
\label{sub:ppad}
When there are three or more actions, 
the long-run action frequency may not converge in general for myopic Bayesian learning agents. 
We provide an example illustrating the ergodic divergence in \cref{apx:ergodic_divergence}. 
Given the ergodic divergence in environments with three or more actions, a natural question is the existence of simple learning dynamics beyond myopic Bayesian learning that converges in action frequency, and hopefully it converges to a Berk-Nash equilibrium with a polynomial convergence rate. 
In this section, we show that for a fixed contract $\contract$, it requires at least quasi-polynomial time to compute even an $\epsilon$-Berk-Nash equilibrium. 
This implies that even there exists a simple learning dynamic that converges to a Berk-Nash equilibrium, 
the rate of convergence cannot be polynomial under the ETH assumption for the PPAD class.

To achieve our goal, we reduce the computation of an approximate Nash Equilibrium in a two-player general-sum game to the computation of an $\epsilon$-Berk-Nash equilibrium for an agent with misspecified beliefs given a linear contract. 

\begin{definition}[General-sum Game]
A general sum game is defined as a pair of payoff matrices $(Y, Z) \in \mathbb{R}_+^{n \times n}$ 
where the payoffs of the row player and the column player are $Y_{i,j}$ and $Z_{i,j}$ respectively given any action profile $(i,j)\in [n]\times[n]$.
\end{definition}

In the reduction, there is a direct correspondence between the agent's actions (and respectively, the agent's beliefs) in the contract design instance and the actions of the row player (and respectively, of the column player) in the general-sum game. 
By carefully designing the contract \(\contract\), we ensure that \(\Util(\action, \belief, \contract)\) closely approximates \(e_{\action}^\top Y e_\belief\) and \(\KL{\dist_{\action}}{\belief_{\action}}\) closely approximates \(e_{\action}^\top Z e_\belief\).\footnote{Here, \(e_\action\) (respectively \(e_\belief\)) denotes the unit vector corresponding to the player's action in the general-sum game that aligns with the action \(\action\) (respectively belief \(\belief\)) in the Berk-Nash instance.}
Given this construction, we observe that the optimality and consistency conditions of the Berk-Nash equilibrium, as detailed in \Cref{sec:definition of Berk-Nash}, ensure that each player in the general-sum game is approximately best responding to the other. 
This equivalence thereby establishes an approximate Nash equilibrium for the original game instance.

The natural input description of the Berk-Nash equilibrium instance is in terms of the bit-complexity of the set of beliefs \(\beliefs\) and the contract \(\contract\). 
However, the Kullback-Leibler divergence \(\KL{\dist_{\action}}{\belief_{\action}}\), a crucial element in our construction, might not be well-defined in terms of bit-complexity under our proposed set of beliefs and chosen contract. Therefore, we use the hardness result established by \citet{Rubinstein16} for computing approximate Nash equilibrium under the Exponential Time Hypothesis for the PPAD class. This hypothesis is stated in \Cref{def:eotl}, with the specific hardness result for general-sum games presented in \Cref{thm:inapproximability}. Our main result is formally stated in \Cref{thm:ppad_hard} and the proof is postponed to \Cref{apx:ppad hard}.

\begin{definition}[End-of-the-Line Problem and PPAD Class]\label{def:eotl}
Let $C_P$ and $C_S$ be circuits, referred to as the predecessor and successor circuits respectively. Both circuits take as input and produce as output a binary string of length $m$. We impose the condition that $C_P(0^m) = 0^m$ but $C_S(0^m) \neq 0^m$. The End-of-the-Line problem is to find a binary string $x$ of length $m$ such that one of the following holds:
\begin{itemize}
    \item $C_P(C_S(x)) \neq x$
    \item $C_S(C_P(x)) \neq x$ and $x \neq 0^m$
\end{itemize}

The Polynomial Parity Arguments on Directed graphs (PPAD) class contains all problems that can be reduced to the End-of-the-Line.
\end{definition}

\begin{hypothesis}[Exponential Time Hypothesis for PPAD  \citep{BabichenkoPR16}]\label{hyp:eth}
The Exponential Time Hypothesis (ETH) for PPAD class conjectures that solving the End-of-the-Line problem for circuits with size $O(n)$ (\Cref{def:eotl}) requires $2^{\tilde{\Omega}(n)}$ time.   
\end{hypothesis}

\begin{proposition}[\citealp{Rubinstein16}]\label{thm:inapproximability}

Under the ETH for PPAD, there exists a constant $\epsilon^*>0$ such that computing an $\epsilon^*$-Nash equilibrium in a two-player $n\times n$ game where the payoffs are normalized in $[0,1]$ requires $n^{\log^{1- o(1)}(n)}$ time.\footnote{While the original assertion requires a different normalization of payoffs, it's crucial to note that an \(\epsilon\)-Nash equilibrium derived from a game without payoffs normalized in $[0,1]$ is still a Nash equilibrium for its normalized counterpart.}
\end{proposition}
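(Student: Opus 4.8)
The plan is to obtain \Cref{thm:inapproximability} directly from the inapproximability theorem of \citet{Rubinstein16} via a payoff-normalization argument, so that the only content beyond the citation is checking that rescaling the payoffs into the range $[0,1]$ preserves both the constant hardness threshold and the quasi-polynomial lower bound. First I would recall Rubinstein's result in its original form: under \Cref{hyp:eth}, there is an absolute constant $\epsilon>0$ such that finding an $\epsilon$-Nash equilibrium of a two-player $n\times n$ game $(Y,Z)$ whose entries lie in a fixed bounded range $[0,M]$ (with $M$ a constant, and $M\ge 1$ without loss of generality) cannot be done in time $n^{\log^{1-o(1)}(n)}$. The goal is to convert this into the stated bound for the normalized range $[0,1]$.

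The key step is a dimension-preserving scaling reduction. Given any instance $(Y,Z)$ with entries in $[0,M]$, I would form the normalized instance $(Y/M,\, Z/M)$, whose entries now lie in $[0,1]$; this transformation is computable in linear time and leaves the dimension $n$ unchanged. I would then track the approximation parameter across the rescaling: a strategy profile is an $\epsilon^*$-Nash equilibrium of $(Y/M, Z/M)$ exactly when no unilateral deviation improves a player's \emph{normalized} payoff by more than $\epsilon^*$, which, multiplying through by $M$, means that no deviation improves the \emph{original} payoff by more than $M\epsilon^*$. Hence an $\epsilon^*$-Nash equilibrium of the normalized game is an $(M\epsilon^*)$-Nash equilibrium of $(Y,Z)$. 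Choosing $\epsilon^* = \epsilon/M$, which is a positive constant because $M$ is a fixed constant, any algorithm computing an $\epsilon^*$-Nash equilibrium of the $[0,1]$-normalized game in time $n^{\log^{1-o(1)}(n)}$ would yield an $\epsilon$-Nash equilibrium of the original game in the same running time, contradicting Rubinstein's theorem under \Cref{hyp:eth}.

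The main thing to verify, and the only place where care is actually needed, is that the two normalizations are genuinely interchangeable at the level of \emph{approximate} equilibria, i.e., that the hardness constant does not silently degrade to zero. This is precisely the observation recorded in the footnote of the statement: because normalization divides every payoff, and therefore every possible deviation gain, by the same factor $M\ge 1$, an $\epsilon$-approximate equilibrium of the unnormalized game is automatically an $\epsilon$-approximate equilibrium of the normalized one, so no approximation quality is lost in the direction we rely on. I would then close by noting that the quasi-polynomial lower bound transfers verbatim: the reduction is dimension-preserving and runs in linear time, so the exponent $\log^{1-o(1)}(n)$ is untouched. The substance of the argument is thus entirely in keeping $\epsilon^*$ a fixed positive constant, which holds as long as $M$ is bounded by a constant, as it is in Rubinstein's construction.
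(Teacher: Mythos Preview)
The paper does not prove this proposition; it is stated as a citation of \citet{Rubinstein16}, with the only additional content being the footnote remarking that the normalization of payoffs into $[0,1]$ is immaterial. Your proposal correctly unpacks that footnote via the elementary scaling argument---dividing payoffs by the constant bound $M$ rescales every deviation gain by $1/M$, so an $\epsilon^*$-Nash equilibrium of the normalized game is an $M\epsilon^*$-Nash equilibrium of the original, and taking $\epsilon^*=\epsilon/M$ transfers Rubinstein's constant-$\epsilon$ hardness intact---which is exactly what the footnote asserts without details. There is nothing further to compare: you have supplied the one-line justification the paper leaves implicit, and the argument is correct.
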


\begin{theorem}[Hardness for Equilibrium Computation]
\label{thm:ppad_hard}
Under the ETH for PPAD, there exists a constant $\widetilde{\epsilon}^*>0$ such that computing an $\widetilde{\epsilon}^*$ Berk-Nash equilibrium for a misspecified agent with $n$ actions under contract $\contract$ and beliefs $\beliefs$ with bit complexity $\poly(n)$ requires $n^{\log^{1- o(1)}(n)}$ time.
\end{theorem}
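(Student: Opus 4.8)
The plan is to reduce the problem of finding an $\epsilon^*$-Nash equilibrium of a two-player general-sum game $(Y,Z)$ with payoffs in $[0,1]$, which requires $n^{\log^{1-o(1)}(n)}$ time under the ETH for PPAD by \Cref{thm:inapproximability}, to the problem of finding an $\widetilde{\epsilon}^*$-Berk-Nash equilibrium. Given an $n\times n$ game $(Y,Z)$, I would construct a contract instance with $n$ actions and $n$ beliefs, where action $\action$ plays the role of the row player's pure strategy $e_\action$ and belief $\belief$ plays the role of the column player's pure strategy $e_\belief$. Using a fixed linear contract $\contract(\reward)=w\,\reward$ with slope $w\ge 0$ and a reward set $\rewards$ of constant size, the two properties to engineer are: (i) $\Util(\action,\belief,\contract)\approx e_\action^\top Y e_\belief$, so that the optimality condition of a Berk-Nash equilibrium becomes the row player's approximate best-response condition; and (ii) $\KL{\dist_{\action}}{\belief_\action}\approx \lambda\,(1-e_\action^\top Z e_\belief)$ for a fixed scale $\lambda>0$, so that the consistency condition, which \emph{minimizes} expected KL, becomes the column player's condition of \emph{maximizing} the expected payoff $\sum_\action \adist(\action)\, e_\action^\top Z e_\belief$. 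The affine form $\lambda(1-\cdot)$ is chosen so that smaller divergence corresponds to larger column payoff, and so that the target values lie in $[0,\lambda]$ and remain non-negative as KL divergences must.

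For the encoding itself, note that under the linear contract $\Util(\action,\belief,\contract)=w\,\expect[\reward\sim\belief_\action]{\reward}-\cost(\action)$ depends on the belief only through the mean of $\belief_\action$; thus target (i) is met by prescribing the mean $\expect[\reward\sim\belief_\action]{\reward}$ and fixing $\cost(\action)$ and $w$ appropriately. Target (ii) is met by shaping the remaining degrees of freedom of $\belief_\action$: taking $|\rewards|\ge 3$ gives each $\belief_\action$ at least two free parameters, enough to pin down simultaneously its mean (controlling $Y$) and its divergence from the fixed true distribution $\dist_\action$ (controlling $Z$). I would then verify the equilibrium correspondence with an additive error budget. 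Writing $\delta,\delta'$ for the uniform approximation errors in (ii) and (i), the $\epsilon$-optimality condition yields $\adist^\top Y\posterior\ge \max_\action e_\action^\top Y\posterior-\epsilon-2\delta'$, so $\adist$ is an $(\epsilon+2\delta')$-best response for the row player; and because the consistency condition selects an \emph{exact} minimizer of expected KL, a short calculation shows the selected beliefs have column payoff within $2\delta/\lambda$ of optimal, so $\posterior$ is a $2\delta/\lambda$-best response for the column player. Hence $(\adist,\posterior)$ is a $\max(\epsilon+2\delta',\,2\delta/\lambda)$-Nash equilibrium of $(Y,Z)$. Choosing $\delta,\delta'$ to be small constants and setting $\widetilde{\epsilon}^*$ to a suitable constant fraction of $\epsilon^*$ makes every $\widetilde{\epsilon}^*$-Berk-Nash equilibrium an $\epsilon^*$-Nash equilibrium; since the reduction runs in $\poly(n)$ time and outputs an instance of $\poly(n)$ bit-complexity, the $n^{\log^{1-o(1)}(n)}$ lower bound transfers.

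I expect the main obstacle to be precisely the issue flagged before the statement: the KL divergence is transcendental, so a belief distribution with finite bit-complexity generically induces an irrational divergence, and we can neither force $\KL{\dist_{\action}}{\belief_\action}$ to equal $\lambda(1-e_\action^\top Z e_\belief)$ exactly nor even write the value down in polynomial bits. This is the reason to route the reduction through Rubinstein's \emph{constant}-accuracy hardness (\Cref{thm:inapproximability}) rather than an exact-equilibrium reduction: since only a fixed additive accuracy is needed, it suffices to approximate each target mean and each target divergence to within a small constant $\delta$, which requires only constant precision per distribution and therefore $\poly(n)$ bits in total. The remaining work is feasibility bookkeeping, namely checking that the prescribed (mean, divergence) pairs are realizable by distributions on $\rewards$ relative to the chosen $\dist_\action$, that the contract obeys limited liability ($w\ge 0$), and that the asymmetry in \Cref{def:eps Berk-Nash}---consistency exact but optimality $\epsilon$-relaxed---is harmless, which holds because an exact KL-minimizer is automatically an approximate payoff-maximizer once KL approximates an affine image of $Z$.
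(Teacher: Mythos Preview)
Your high-level plan matches the paper's: reduce Rubinstein's constant-$\epsilon^*$ Nash hardness to constant-$\widetilde\epsilon^*$ Berk-Nash by letting actions play rows and beliefs play columns, engineering $\Util(\action,\belief,\contract)\approx e_\action^\top Y e_\belief$ so that $\epsilon$-optimality becomes the row player's approximate best response, and $\KL{\dist_\action}{\belief_\action}$ a decreasing affine image of $e_\action^\top Z e_\belief$ so that exact consistency becomes the column player's. Your error accounting (row slack $\epsilon+2\delta'$, column slack $2\delta/\lambda$) is correct and essentially what the paper does. Where the two diverge is the encoding. Rather than a constant reward set with a linear contract, the paper uses $(n{+}1)n$ rewards with disjoint supports across actions: $\dist_{\action_i}$ puts nearly all its mass on a distinguished outcome and a tiny uniform mass on $n$ others, belief $\belief_j$ moves mass to the $j$th of these, and the contract payment on that outcome is calibrated to hit the utility target. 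This fully decouples the utility and KL targets (each governed by a separate coordinate of $\belief_{j,\action_i}$) and sidesteps the feasibility question you defer. In your scheme that question does bite: at a prescribed mean $m$ differing from the mean of $\dist_\action$, the achievable value of $\KL{\dist_\action}{\cdot}$ is bounded below by some $d_{\min}(m)>0$, so the form $\lambda(1-Z)$ is infeasible whenever $e_\action^\top Z e_\belief\to 1$; shifting to $\lambda_0+\lambda(1-Z)$ with $\lambda_0>\sup_m d_{\min}(m)$ preserves the $\arg\min$ and restores feasibility. One further step the paper carries out that you omit: a Berk-Nash equilibrium is only the action marginal $\adist$, so after the hypothetical fast algorithm returns $\adist$ one still has to produce a column strategy; the paper rounds $\adist$ to $\poly(n)$ bits and then solves a linear feasibility program in $\posterior\in\Delta(\beliefs)$ to close the reduction.
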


\section{Optimal Contracts Under Misspecification}
\label{sec:efficiency}
In the previous section, we observed that for environments with two possible actions, the action frequency converges and it converges to a Berk-Nash equilibrium. 
This provides strong micro-foundation for using Berk-Nash equilibrium as equilibrium concepts for contract design problems when the agent has misspecified beliefs and there are only two possible actions. 
In this section, we will focus on environments with only two possible actions, and provide a polynomial time algorithm for computing the optimal contract. 
Moreover, we will show that when the agent has misspecified beliefs, even a small degree of misspecification can lead to a significant revenue loss for the principal.

\subsection{Polynomial Time Algorithms for Computing Optimal Contracts with Two Actions}
\label{sec:polytime md for two actions}

In this section, we provide a polynomial time algorithm for computing the optimal contract under the generic beliefs assumption (\Cref{ass:poly contract}) when there are two possible actions. For the rest of the section, we denote the set of actions by $\actions=\InBraces{\action_1,\action_2}$. 

In the computational results, a subtle point is that in order to compute the Berk-Nash equilibrium or the optimal contract when there are only two actions, it is inevitable to compute the log-likelihood ratio in KL-divergence (see \cref{sec:definition of Berk-Nash} for definitions). This leads to barriers for exact solutions since it is not feasible to compute a logarithmic number to infinity precision. 
However, as we will illustrate in this section, the impossibility arises solely from the computation of logarithmic number. 
To simplify the exposition, we show that polynomial time algorithm exists if we assume oracle access to the precise values of the logarithmic numbers (\cref{thm:poly_algo_two}). 
To implement this algorithm in practice, it suffice to compute the logarithmic numbers with a precision of $O(\epsilon)$, and the resulting contract will only suffer from an $\epsilon$ loss in revenue under $\epsilon$-Berk Nash equilibrium. 

\begin{theorem}\label{thm:poly_algo_two}
Under \Cref{ass:poly contract}, when there are two possible actions, by assuming oracle access to the precise values of the logarithmic numbers, there exists a polynomial time algorithm for computing a contract and a corresponding Berk-Nash equilibrium that maximizes the principal's expected revenue. 
\end{theorem}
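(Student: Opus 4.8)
The plan is to parametrize a candidate equilibrium by the single scalar $\alpha = \adist(\action_1) \in [0,1]$ and to exploit the key structural fact that, with two actions, the consistency condition is entirely decoupled from the contract. Indeed, for a fixed $\alpha$ the quantity minimized over beliefs in the consistency condition is
\[
g_\belief(\alpha) := \alpha\,\KL{\dist_{\action_1}}{\belief_{\action_1}} + (1-\alpha)\,\KL{\dist_{\action_2}}{\belief_{\action_2}},
\]
an affine function of $\alpha$ determined only by $\dist$ and $\beliefs$ and never by $\contract$. Hence $\beliefs(\adist)$ is the lower envelope of the $|\beliefs|$ lines $g_\belief(\cdot)$, equivalently the lower-left boundary of the convex hull of the points $\kl(\belief)\in\reals^2$. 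Under \Cref{ass:poly contract} this boundary is nondegenerate: a unique minimizing belief on each open sub-interval and exactly two tying beliefs at each interior breakpoint. First I would use the logarithm oracle to evaluate the coordinates $\kl(\belief)$ and then, via a two-dimensional convex-hull computation, produce the ordered breakpoints $0=\alpha^{(1)}<\cdots<\alpha^{(n)}=1$ and the minimizing belief $\belief^{(k)}$ on $(\alpha^{(k)},\alpha^{(k+1)})$; this is polynomial in $|\beliefs|$ and completely determines which beliefs may support a posterior at any $\alpha$.

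Next I would reduce the search for the optimal contract to a polynomial family of linear programs, one family per candidate support of actions and beliefs. Writing $D(\belief) := \Util(\action_1,\belief,\contract) - \Util(\action_2,\belief,\contract)$, which is linear in $\contract$, there are two kinds of configurations. (i) Pure equilibria $\adist=\delta_{\action_1}$ (resp.\ $\delta_{\action_2}$): the posterior is forced onto the unique KL-minimizer $\belief^\star = \argmin_\belief \KL{\dist_{\action_1}}{\belief_{\action_1}}$ (resp.\ $\argmin_\belief \KL{\dist_{\action_2}}{\belief_{\action_2}}$), optimality is the single linear inequality $D(\belief^\star)\ge 0$ (resp.\ $\le 0$), and the revenue $\Rev(\action_1,\contract)$ (resp.\ $\Rev(\action_2,\contract)$) is linear in $\contract$; together with limited liability $\contract\ge 0$ this is an LP. (ii) Mixed equilibria: both actions lie in the support, so optimality becomes indifference, and the posterior must be supported in $\beliefs(\adist)$; the only values of $\alpha$ admitting two beliefs are the breakpoints, and at $\alpha^{(k)}$ the tying pair is some $\{\belief,\belief'\}$ with posterior $q\,\delta_\belief+(1-q)\,\delta_{\belief'}$.

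Two bilinearities must then be removed. The objective $\Rev(\adist,\contract)=\alpha\,\Rev(\action_1,\contract)+(1-\alpha)\,\Rev(\action_2,\contract)$ is bilinear in $(\alpha,\contract)$; but on the interior interval $(\alpha^{(k)},\alpha^{(k+1)})$, where the consistency constraint on $\contract$ is the fixed equation $D(\belief^{(k)})=0$ independent of $\alpha$, the value $V(\alpha)=\max_\contract \Rev(\adist,\contract)$ is a maximum of functions affine in $\alpha$, hence convex, and is therefore maximized at an endpoint of the interval, i.e.\ at a breakpoint. It thus suffices to fix $\alpha$ to each breakpoint value $\alpha^{(k)}$ (a known constant), making the objective linear in $\contract$. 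The indifference constraint $q\,D(\belief)+(1-q)\,D(\belief')=0$ is bilinear in $(q,\contract)$, but it admits a solution with $q\in[0,1]$ exactly when $D(\belief)\cdot D(\belief')\le 0$, which I would impose as two alternative linear sign patterns. Each breakpoint therefore yields a constant number of LPs in the contract variables $\{\contract(\reward)\}_{\reward\in\rewards}$, so there are $O(|\beliefs|)$ LPs in total, each of polynomial size; the algorithm solves all of them and returns the revenue-maximizing contract together with its witnessing posterior.

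Finally I would verify correctness. Soundness is immediate: any optimal solution of any LP, paired with the posterior its constraints encode, satisfies optimality, consistency, and limited liability, hence is a genuine Berk-Nash equilibrium. Completeness follows because the contract-independent consistency structure forces every equilibrium into exactly one enumerated configuration, and the convexity argument shows that the continuum of mixed equilibria on any open interval is dominated by the two breakpoint LPs at its endpoints (using the feasible posterior $\delta_{\belief^{(k)}}$ whenever $D(\belief^{(k)})=0$). The main obstacle is precisely this joint handling of the two bilinear terms: one must argue that restricting $\alpha$ to the finitely many breakpoints and the weight $q$ to the sign-pattern boundary loses nothing, and that the tie structure stays pairwise, which is exactly where the general-position clause of \Cref{ass:poly contract} is used. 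The logarithms entering $\kl(\belief)$ are the only non-algebraic quantities and appear solely as precomputed constants, which is why oracle access to them suffices.
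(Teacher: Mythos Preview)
Your proposal is correct and follows essentially the same decomposition as the paper: the posterior support is at most two (from general position), the contract-independent consistency condition pins the feasible $\alpha$-values, the bilinearity of the objective in $(\alpha,\contract)$ is removed by fixing $\alpha$ to the interval endpoints, and the bilinearity in the posterior weight $q$ is removed by the two sign patterns on $D(\belief),D(\belief')$---precisely the paper's decomposition of $\contracts$ into two convex sets (\cref{lem:union of convex sets}). Your convex-hull precomputation of the lower envelope is a mild streamlining---the paper instead enumerates all $O(|\beliefs|^2)$ candidate belief pairs $\beliefs^*$ and solves the LPs of \cref{program:actions} to recover the same endpoints---but the key lemmas and the family of linear programs ultimately solved coincide.
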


To provide a polynomial time algorithm for \cref{thm:poly_algo_two}, 
we first show that under \Cref{ass:poly contract}, in a Berk-Nash equilibrium, the support of the corresponding posterior is at most two.
The missing proofs in this section will be provided in \cref{apx:poly_algo}.
\begin{lemma}\label{lem:sparce posterior}
    Under \Cref{ass:poly contract}, for any Berk-Nash equilibrium $\alpha$, there exists a posterior belief $\posterior^*\in \Delta(\beliefs)$ that satisfies consistency, optimality and $|\supp(\posterior^*)|\leq 2$.
\end{lemma}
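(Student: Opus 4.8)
My plan is to show that the consistency condition alone already forces any certifying posterior to be supported on at most two beliefs, so no reduction is needed beyond invoking the genericity hypothesis. Concretely, fix a Berk-Nash equilibrium $\adist$ and let $\posterior^*\in\Delta(\beliefs)$ be a posterior witnessing it, i.e. one satisfying both optimality and consistency as in \Cref{sec:definition of Berk-Nash}; such a posterior exists by the very definition of equilibrium. Since this witness satisfies optimality for free, the only thing left is to bound $|\supp(\posterior^*)|$. By consistency, $\supp(\posterior^*)\subseteq \beliefs(\adist)=\argmin_{\belief\in\beliefs}\expect[\action\sim\adist]{\KL{\dist_{\action}}{\belief_{\action}}}$, so it suffices to prove $|\beliefs(\adist)|\le 2$.

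The key step is to rewrite the consistency objective as a linear functional over the two-dimensional divergence vectors of \Cref{ass:poly contract}. Writing $\adist=(\adist(\action_1),\adist(\action_2))$ and recalling $\kl(\belief)=\InParentheses{\KL{\dist_{\action_1}}{\belief_{\action_1}},\KL{\dist_{\action_2}}{\belief_{\action_2}}}\in\reals^2$, we have $\expect[\action\sim\adist]{\KL{\dist_{\action}}{\belief_{\action}}}=\inr{\adist,\kl(\belief)}$. Thus minimizing over $\belief$ is exactly minimizing the linear functional $x\mapsto\inr{\adist,x}$ over the finite-or-continuum point set $\kls=\InBraces{\kl(\belief)}_{\belief\in\beliefs}\subseteq\reals^2$. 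Because $\adist$ is a probability vector we have $\adist\neq\mathbf{0}$, so every minimizer lies on a single supporting line $\InBraces{x:\inr{\adist,x}=m}$, where $m$ is the optimal value.

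The crux is then the general-position hypothesis. The second bullet of \Cref{ass:poly contract} asserts that no three distinct points of $\kls$ are collinear, so at most two points of $\kls$ can lie on the supporting line, giving at most two minimizing divergence vectors. The first bullet guarantees that distinct beliefs induce distinct divergence vectors (they differ in each coordinate), so distinct minimizing beliefs map to distinct points and hence $|\beliefs(\adist)|\le 2$. Combining with $\supp(\posterior^*)\subseteq\beliefs(\adist)$ gives $|\supp(\posterior^*)|\le 2$, and since $\posterior^*$ already satisfies optimality and consistency, it is the required posterior.

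The main subtlety to handle carefully is the degenerate boundary case where $\adist$ is a pure action, i.e. a vertex of $\Delta(\actions)$: there the functional collapses to minimizing a single coordinate of $\kl(\belief)$, and the first bullet of \Cref{ass:poly contract} makes that minimizer unique, so the bound holds a fortiori. I would also note that attainment of the minimum defining $\beliefs(\adist)$ is not an issue, since the existence of the certifying posterior $\posterior^*$ (whose support must lie in the argmin) already witnesses that the argmin is nonempty. No Carathéodory-type sparsification is needed, because consistency confines the support to the at-most-two-element set $\beliefs(\adist)$ directly.
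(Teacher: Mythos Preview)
Your proposal is correct and follows essentially the same approach as the paper: both argue that the set $\beliefs(\adist)$ of KL-minimizers has cardinality at most two by observing that any three minimizers would force the nonzero vector $\adist\in\reals^2$ to be orthogonal to two linearly independent difference vectors (equivalently, to place three points of $\kls$ on a common line), contradicting the general-position clause of \Cref{ass:poly contract}. Your write-up is slightly more careful in explicitly invoking the first bullet of \Cref{ass:poly contract} to ensure distinct beliefs yield distinct divergence vectors, and in treating the pure-action boundary case separately, but the core argument is the same.
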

By exploiting the sparsity of the support of the posterior belief in a Berk-Nash equilibrium, 
we can perform an exhaustive search over pair of supports for both distribution over actions and posteriors, 
and compute both the contract and the Berk-Nash with the highest expected revenue for the principal under those support constraints. 
More formally, for a given subset of actions $\actions^*\in \InBraces{\InBraces{\action_1},\InBraces{\action_2}, \InBraces{\action_1,\action_2}}$ and beliefs $\beliefs^*\subseteq \beliefs$ such that $|\beliefs^*|\leq 2$, the program in \Cref{program:original} computes a Berk-Nash equilibrium that maximizes the principal's utility such that $\supp(\adist^*)=\actions^*$ and $\beliefs^*\subseteq \argmin_{\belief\in \beliefs} 
\expect[\action\sim\adist^*]{\KL{\dist_{\action}}{\belief_{\action}}}$. 

\begin{figure}{\linewidth-2cm}
\begin{adjustbox}{minipage=\textwidth-6pt,margin=3pt,bgcolor=black!20}
\textbf{Input:} \(\beliefs^*\subseteq \beliefs, \actions^*\subseteq \actions \)
    \begin{align*}
     &  \max & \expect[\action\sim \adist^*]{\expect[\reward\sim \dist_\action] { \reward -\contract(\reward)} }   \\
     & \text{s.t.} &  \expect[\belief\sim \posterior^*] { \Util(\action^*,\belief,\contract)} & \geq  \expect[\belief\sim \posterior^*] {\Util(\action,\belief,\contract) }                  && \forall \action^* \in \actions^*, \action\in \actions  \\
     &             & \expect[\action\sim\adist^*]{\KL{\dist_{\action}}{\belief^*_{\action}}}  & \leq  \expect[\action\sim\adist^*]{\KL{\dist_{\action}}{\belief_{\action}}}.                 && \forall \belief^* \in \beliefs^*, \belief\in \beliefs   \\
     &             & \contract(\reward) \in \reals_+^{|\rewards|}, & \adist^*\in \Delta(\actions^*), \posterior^*\in \Delta(\beliefs^*). 
    \notshow{
     \\
     & \max  &\sum_{\action\in \actions^*,\reward\in \rewards} \adist^*(\action)\cdot \dist_\action(\reward)\cdot \InParentheses{ \reward -\contract(\reward)}  \\
     & \text{s.t.} &  \sum_{\belief \in \belief^*} \posterior^*(\belief)\sum_{\reward\in\rewards} \belief_{\action^*}(\reward) \InParentheses{ \contract(\reward) -\cost(\action^*)}  \geq &  \sum_{\belief \in \belief^*} \posterior^*(\belief)\sum_{\reward\in\rewards} \belief_\action(\reward) \InParentheses{ \contract(\reward) -\cost(\action)}                 && \forall \action^* \in \actions^*, \action\in \actions  \\
     &             & \sum_{\action\in \actions^*}\adist^*(\action)\cdot {\KL{\dist_{\action}}{\belief^*_{\action}}}  \leq &  \sum_{\action\in \actions^*}\adist^*(\action)\cdot{\KL{\dist_{\action}}{\belief_{\action}}}.                 && \forall \belief^* \in \beliefs^*, \belief\in \beliefs   \\
     &             & \contract(\reward)\geq 0  \forall \reward\in \rewards,& \adist^*\in \Delta(\actions^*), \posterior^*\in \Delta(\beliefs^*). 
     }
    \end{align*}
    \caption{Quadratic program for Finding a Berk-Nash equilibrium that maximized the utility of the principal when the posterior (and actions resp.) of the agent has support $\beliefs^*$ ($\actions^*$ resp.).}
    \label{program:original}
\end{adjustbox}
\end{figure}

Supposing that the optimization program in \Cref{program:original} can be computed in polynomial time, 
\cref{lem:sparce posterior} immediately implies that there exists a polynomial time algorithm for \cref{thm:poly_algo_two} since we only need to compute the optimization program in \Cref{program:original} by $3\times (|\beliefs|^2+|\beliefs|)$ times. 

However, it is important to note that computing an optimal solution for \Cref{program:original} is not trivial since the objective and the feasible set in \Cref{program:original} are non-convex, due to the dependence depend on the contract $\contract$ that is also a variable. 
In fact, \Cref{program:original} corresponds to a quadratic program. 
To overcome this difficulty, we transform it into a family of linear programs (formalized later in \Cref{program:lp for berk-nash}) and show that solving the program in \Cref{program:original} is equivalent to solving a family of linear programs, which can be computed in polynomial time. 
For this purpose, 
we first characterize the feasible distribution over actions that the agent can pick in a Berk-Nash equilibrium when her posterior is supported on set $\beliefs^*\subseteq \beliefs$. 
In particular, we show that the feasible distribution over actions lies in an interval (see \Cref{lem:charactere actions}).

\begin{figure}
\begin{adjustbox}{minipage=\textwidth-6pt,margin=3pt,bgcolor=black!20}
\textbf{Input:} \(\beliefs^*\subseteq \beliefs\)
    \begin{align*} 
     & \action_1^{min}:=  \min \action^*(\action_1)   \\
     & \text{s.t. }  \adist^* \in \Delta(\actions) \text{ and }  \sum_{\action\in \actions}\adist^*(\action)\cdot \InParentheses{{\KL{\dist_{\action}}{\belief^*_{\action}}} -  \KL{\dist_{\action}}{\belief_{\action}}} \leq 0  \qquad  \forall \belief^* \in \beliefs^*, \belief\in \beliefs.    
     \\
     \\
     & \action_1^{max}:=  \max \action^*(\action_1)   \\
     & \text{s.t. }  \adist^* \in \Delta(\actions) \text{ and }  \sum_{\action\in \actions}\adist^*(\action)\cdot \InParentheses{{\KL{\dist_{\action}}{\belief^*_{\action}}} -  \KL{\dist_{\action}}{\belief_{\action}}} \leq 0 \qquad   \forall \belief^* \in \beliefs^*, \belief\in \beliefs.   
    \end{align*}~
    \caption{LP for finding feasible distributions over actions that support a Berk-Nash equilibrium over a set of beliefs $\beliefs^*\subseteq \beliefs$.}
    \label{program:actions}
\end{adjustbox}
\end{figure}

\begin{lemma}\label{lem:charactere actions}
Under \Cref{ass:poly contract}, for any set of beliefs $\beliefs^*\subseteq \beliefs$, the programs in \Cref{program:actions} are linear, and have at most \(O(|\beliefs^*| \cdot |\beliefs|)\) constraints over two variables. Furthermore, for every distribution over actions \(\adist^* \in \Delta(\actions)\), the following inequality holds:
\[
\sum_{\action \in \actions} \adist^*(\action) \left( \KL{\dist_{\action}}{\belief^*_{\action}} - \KL{\dist_{\action}}{\belief_{\action}} \right) \leq 0, \quad \forall \belief^* \in \beliefs^*, \belief \in \beliefs,
\]
if and only if \(\adist^*(\action_1)\) lies within the interval \([\action_1^{\min}, \action_1^{\max}]\).\footnote{$\action_1^{\min}$ and $\action_1^{\max}$ are the values computed in \Cref{program:actions}.}
\end{lemma}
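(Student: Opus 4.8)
The plan is to observe that, once $\beliefs^*$ is fixed, the entire system is affine in the two coordinates of $\adist^*$, and then to reduce the characterization to the elementary fact that a convex subset of the real line is an interval.

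First I would establish linearity and the constraint count. For any fixed pair $(\belief^*,\belief)\in\beliefs^*\times\beliefs$, the quantities $\KL{\dist_{\action}}{\belief^*_{\action}}$ and $\KL{\dist_{\action}}{\belief_{\action}}$ (for $\action\in\{\action_1,\action_2\}$) are fixed real numbers determined by the true distribution $\dist$ and the beliefs; they do not depend on the optimization variable $\adist^*$. Hence
$$\sum_{\action\in\actions}\adist^*(\action)\InParentheses{\KL{\dist_{\action}}{\belief^*_{\action}}-\KL{\dist_{\action}}{\belief_{\action}}}$$
is a linear function of the vector $\InParentheses{\adist^*(\action_1),\adist^*(\action_2)}$, so each such requirement is a single linear inequality. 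There is exactly one inequality per pair in $\beliefs^*\times\beliefs$, giving $|\beliefs^*|\cdot|\beliefs|$ constraints; together with the simplex constraints $\adist^*(\action_1)+\adist^*(\action_2)=1$ and $\adist^*\geq 0$, this yields $O(|\beliefs^*|\cdot|\beliefs|)$ linear constraints in the two variables $\adist^*(\action_1),\adist^*(\action_2)$. Both programs then optimize the linear objective $\adist^*(\action_1)$ and so are genuine linear programs. Here \Cref{ass:poly contract} enters only to guarantee that the KL values are well-defined finite constants.

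For the characterization, I would parametrize $\Delta(\actions)$ by the single scalar $x:=\adist^*(\action_1)\in[0,1]$, substituting $\adist^*(\action_2)=1-x$. Writing $d_{\action}:=\KL{\dist_{\action}}{\belief^*_{\action}}-\KL{\dist_{\action}}{\belief_{\action}}$, each constraint becomes $x(d_{\action_1}-d_{\action_2})+d_{\action_2}\leq 0$, which—depending on the sign of $d_{\action_1}-d_{\action_2}$—is either an upper bound $x\leq c$, a lower bound $x\geq c$, or a constant (hence vacuous or infeasible). The feasible set $\mathcal{F}\subseteq[0,1]$ is therefore the intersection of finitely many closed half-lines with $[0,1]$, and an intersection of intervals of $\reals$ is convex, hence itself a (possibly empty) closed interval. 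Letting $\action_1^{\min},\action_1^{\max}$ be the optimal values of the two programs, the forward direction of the equivalence is immediate: if $\adist^*$ satisfies all constraints then $x$ is feasible for both programs, so $\action_1^{\min}\leq x\leq \action_1^{\max}$. The backward direction is exactly the convexity step: both endpoints are attained at feasible points of $\mathcal{F}$, and since $\mathcal{F}$ is convex, every $x$ between them lies in $\mathcal{F}$ and thus satisfies all constraints.

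The argument is essentially elementary, so the only real care is needed in the degenerate edge cases. I expect the main obstacle to be handling $d_{\action_1}=d_{\action_2}$, where a constraint collapses to the constant requirement $d_{\action_2}\leq 0$: one must verify such a constraint is either globally satisfied (and can be dropped) or renders the system infeasible (in which case both programs are infeasible and the equivalence holds vacuously). The remaining case is $\mathcal{F}=\emptyset$, where neither $\action_1^{\min}$ nor $\action_1^{\max}$ is defined, the interval is empty, and no $\adist^*$ satisfies the constraints, so both sides of the equivalence are false. With these cases dispatched, convexity of $\mathcal{F}$ closes the proof.
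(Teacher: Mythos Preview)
Your proposal is correct and follows essentially the same approach as the paper: both arguments observe that the constraints are linear in $(\adist^*(\action_1),\adist^*(\action_2))$ so the feasible set is a convex subset of the one-dimensional simplex, hence a closed interval whose endpoints are $\action_1^{\min}$ and $\action_1^{\max}$. Your write-up is more detailed (explicit half-line decomposition, forward/backward directions, degenerate and infeasible cases), whereas the paper dispatches the second part in one sentence by invoking convexity of the LP feasible region; one small quibble is that \Cref{ass:poly contract} concerns genericity of the KL values rather than their finiteness, but this does not affect your argument.
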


Next we provide a characterization of the contracts such that there exists a posterior $\posterior^*\in \Delta(\beliefs^*)$ that incentivizes the agent to pick any action within set $\actions^*$.
We denote this set of contracts as $\contracts$, i.e.,  \begin{align*}
 \contracts = \InBraces{\contract:\rewards\rightarrow\reals_+: \exists \posterior^*\in \Delta(\beliefs^*), 
 \expect[\belief\sim \posterior^*] { \Util(\action^*,\belief,\contract)} \geq  \expect[\belief\sim \posterior^*] {\Util(\action,\belief,\contract) }                 \forall \action^* \in \actions^*, \action\in \actions }.
 \end{align*}

\begin{lemma}\label{lem:union of convex sets}
 For any non-empty set of actions \( \actions^*\subseteq \actions \) and set of beliefs \(\beliefs^* \subseteq \beliefs\) such that $|\beliefs^*|\leq 2$,
 there exists an algorithm for computing constants $\InBraces{A(\reward),B(\reward), C(\reward), D(\reward)}_{\reward\in \rewards}$ that runs in time $\poly\InParentheses{n,\InBraces{\log\InParentheses{\belief_{\action}(\reward)}}_{\belief\in \beliefs,\action\in\actions,\reward\in\rewards},\InBraces{\log\InParentheses{\cost(\action)}}_{\action\in \actions}}$ 
 such that set $\contracts$ can be decomposed as the union of two convex sets:
 \begin{align*}
 \contracts = &\InBraces{\contract:\rewards\rightarrow\reals_+ : \sum_{\reward\in\rewards}A(\reward)\cdot \contract(\reward)\geq \sum_{\reward\in\rewards}B(\reward)\cdot \contract(\reward) \geq 0 } \\
 &\qquad\qquad \bigcup
 \InBraces{\contract:\rewards\rightarrow\reals_+ : \sum_{\reward\in\rewards}C(\reward)\cdot \contract(\reward)\geq \sum_{\reward\in\rewards}D(\reward)\cdot \contract(\reward) \geq 0 }.
 \end{align*}
\end{lemma}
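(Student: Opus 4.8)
The plan is to eliminate the existential quantifier over the posterior $\posterior^*$ by exploiting the two-action structure, reducing membership in $\contracts$ to a handful of sign conditions on a single utility-gap functional that is affine in the contract. Recall that $\Util(\action,\belief,\contract) = \sum_{\reward\in\rewards} \belief_\action(\reward)\contract(\reward) - \cost(\action)$ is affine in $\contract$. For a belief $\belief$, define the gap $g_\belief(\contract) := \Util(\action_1,\belief,\contract) - \Util(\action_2,\belief,\contract) = \sum_{\reward\in\rewards} \InParentheses{\belief_{\action_1}(\reward)-\belief_{\action_2}(\reward)}\contract(\reward) - \InParentheses{\cost(\action_1)-\cost(\action_2)}$, which is again affine in $\contract$ with coefficients and additive constant read directly off the input. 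Since $\actions=\InBraces{\action_1,\action_2}$, for each $\action^*\in\actions^*$ the inner quantifier ``$\forall\action\in\actions$'' contributes exactly one nontrivial inequality, so the entire system defining $\contracts$ depends on $\posterior^*$ only through the scalar $\expect[\belief\sim\posterior^*]{g_\belief(\contract)}$.

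Next I would use that $\InBraces{\expect[\belief\sim\posterior^*]{g_\belief(\contract)} : \posterior^*\in\Delta(\beliefs^*)}$ is exactly the interval with endpoints $\min_{\belief\in\beliefs^*}g_\belief(\contract)$ and $\max_{\belief\in\beliefs^*}g_\belief(\contract)$, being the set of convex combinations of the at most two values $\InBraces{g_\belief(\contract)}_{\belief\in\beliefs^*}$. Hence the existential condition collapses: $\contract\in\contracts$ iff (i) when $\actions^*=\InBraces{\action_1}$, $\max_{\belief\in\beliefs^*}g_\belief(\contract)\geq 0$; (ii) when $\actions^*=\InBraces{\action_2}$, $\min_{\belief\in\beliefs^*}g_\belief(\contract)\leq 0$; (iii) when $\actions^*=\InBraces{\action_1,\action_2}$, $\min_{\belief\in\beliefs^*}g_\belief(\contract)\leq 0\leq\max_{\belief\in\beliefs^*}g_\belief(\contract)$, since both actions being best responses forces $\expect[\belief\sim\posterior^*]{g_\belief(\contract)}=0$.

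Then I would carry out the case analysis over $|\beliefs^*|\in\InBraces{1,2}$ and read off the coefficients. When $\beliefs^*=\InBraces{\belief}$, each of (i)--(iii) is a single half-space (or, in case (iii), the hyperplane $g_\belief(\contract)=0$, the intersection of two opposite half-spaces), trivially a union of at most two convex sets. When $\beliefs^*=\InBraces{\belief,\belief'}$: in cases (i)/(ii) the $\max$/$\min$ condition is literally $\InBraces{g_\belief\geq 0}\cup\InBraces{g_{\belief'}\geq 0}$ (resp.\ $\leq 0$), a union of two half-spaces; in case (iii), $0$ lying between $\min$ and $\max$ splits according to which gap is nonnegative, giving $\InBraces{g_\belief\geq 0,\ g_{\belief'}\leq 0}\cup\InBraces{g_{\belief'}\geq 0,\ g_\belief\leq 0}$, a union of two polyhedra each cut out by two linear inequalities. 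In every case $\contracts$ is the union of two convex sets matching the decomposition in the statement, and the coefficient vectors $A,B,C,D$ (together with the additive cost constants entering $g_\belief$) are obtained by listing the differences $\belief_{\action_1}(\reward)-\belief_{\action_2}(\reward)$ and $\cost(\action_1)-\cost(\action_2)$; this takes only $O(|\rewards|\cdot|\beliefs^*|)$ arithmetic operations, polynomial in the input bit-complexity, and notably requires no logarithm oracle, since these incentive constraints involve belief probabilities directly rather than KL-divergences.

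The main obstacle is the existential quantifier over $\posterior^*$: the raw definition of $\contracts$ is the projection of a bilinear feasibility region (products $\posterior^*(\belief)\contract(\reward)$) and is a priori non-convex. The crux is the observation that, because there are only two actions, the system depends on $\posterior^*$ solely through the single scalar $\expect[\belief\sim\posterior^*]{g_\belief(\contract)}$, whose achievable range over $\Delta(\beliefs^*)$ is an interval with endpoints the $\min$ and $\max$ over the at most two beliefs; this is exactly what converts the projection into an explicit union of polyhedra. A secondary point needing care is case (iii), where requiring both actions to be best responses simultaneously forces the gap to vanish in expectation, so the feasible region is the ``sign-change'' set rather than a single half-space --- precisely why two convex pieces, and not one, are required.
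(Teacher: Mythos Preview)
Your proposal is correct and follows essentially the same route as the paper. Both arguments eliminate the posterior $\posterior^*$ by observing that, with two actions, the optimality system depends on $\posterior^*$ only through the single scalar $\expect[\belief\sim\posterior^*]{g_\belief(\contract)}$, whose range over $\Delta(\beliefs^*)$ is the interval spanned by the (at most two) values $g_\belief(\contract)$; the only cosmetic difference is that the paper parameterizes the two affine functionals as $D(\contract)=g_{\belief^{(*,1)}}(\contract)-g_{\belief^{(*,2)}}(\contract)$ and $E(\contract)=-g_{\belief^{(*,2)}}(\contract)$ (and then derives the conditions $D\geq E\geq 0$ or $0\geq E\geq D$ from $\posterior^*(\belief^{(*,1)})\cdot D=E$ with $\posterior^*(\belief^{(*,1)})\in[0,1]$), whereas you work directly with $g_\belief$ and $g_{\belief'}$, which is a linear change of coordinates.
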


\begin{figure}{\linewidth-2cm}
\begin{adjustbox}{minipage=\textwidth-6pt,margin=3pt,bgcolor=black!20}
\textbf{Input:}
\begin{itemize}
    \item $\beliefs^*\subseteq \beliefs, \actions^*\subseteq \actions$.
    \item convex sets \(\contracts_1,\contracts_2\) such that \(\contracts = \contracts_1\cup\contracts_2\) (promised by \Cref{lem:union of convex sets}).
    \item constants $\action_1^{min},\action_1^{max}$ (promised by \Cref{lem:charactere actions}).
\end{itemize} 

Consider the following LP parametrized by $\action_1^{val}\in [0,1]$ and $\widehat{\contracts}\subseteq \contracts$ :
    \begin{align*} \textsc{VAL}\InParentheses{\action_1^{val},\widehat{\contracts}} = \max  &\quad \expect[\action\sim \adist^*]{\expect[\reward\sim \dist_\action] { \reward -\contract(\reward)} }\\
       & \quad:= \adist^*(\action_1)\sum_{\reward\in\rewards}\dist_{\action_1}(\reward) \cdot \InParentheses{\reward - \contract(\reward)} + \adist^*(\action_2)\sum_{\reward\in\rewards}\dist_{\action_2}(\reward) \cdot \InParentheses{\reward - \contract(\reward)}   \\
      \text{s.t.} &\quad  \contract \in \widehat{\contracts} , \adist^*(\action_1) = \action_1^{val}, \text{ and } \adist^*(\action_2) = 1- \action_1^{val}.
    \end{align*}
    Return the contract $\contract^*$ that maximizes the value of the parametrized LP $\textsc{VAL}\InParentheses{\action_1^{val}, \widehat{\contracts}}$ for $\action_1^{val}\in \InBraces{\action_1^{min},\action_1^{max}}$ and $\widehat{\contracts}\in \InBraces{\contracts_1,\contracts_2}$. Formally let: $$(\action_1^*,\contracts^*)\in \argmax_{\action_1^{val}\in \InBraces{\action_1^{min},\action_1^{max}},\widehat{\contracts}\in \InBraces{\contracts_1,\contracts_2}}\textsc{VAL}\InParentheses{\action_1^{val}, \widehat{\contracts}}.$$

    Return the $\contract^*$ that maximizes the value of $\textsc{VAL}\InParentheses{\action_1^*, \contracts^*}$
    \caption{LP for computing a Berk-Nash equilibrium that maximized the utility of the principal when the posterior (and actions resp.) of the agent has support $\beliefs^*$ ($\actions^*$ resp.).}
    \label{program:lp for berk-nash}
\end{adjustbox}
\end{figure}

Now we are ready to formally show how to compute a contract that maximizes the revenue. We formally show the LP that computes the contract that induces the highest revenue for the principal over Berk-Nash equilibriums where the agent's actions are supported on $\actions^*$ and her posterior is supported on $\beliefs^*$. In a high-level idea, when we search over contracts for Berk-Nash equilibriums where the agent's action profile and posterior are supported on fixed sets $\actions^*$ and $\beliefs^*$, we can overcome the non-convexity of the objective in \Cref{program:original} by only considering the two extreme distributions over actions as indicated by \cref{lem:charactere actions}. To further overcome the non-convexity of possible set of contracts $\contracts$ (as defined in \Cref{lem:union of convex sets}), we decompose set $\contracts$ into the union of two convex sets and search in each convex set independently. We formally prove this intuition in \Cref{thm:lp for berk-nash}

\begin{lemma}\label{thm:lp for berk-nash}
Under \Cref{ass:poly contract}, for any non-empty set of actions \( \actions^*\subseteq \actions \) and any set of beliefs \(\beliefs^* \subseteq \beliefs\) such that \(|\beliefs^*|\leq 2\), 
there exists a polynomial time algorithm that computes the optimal solution for the optimization program in 
\Cref{program:lp for berk-nash}, 
which outputs the same contract as in \Cref{program:original}. 
\end{lemma}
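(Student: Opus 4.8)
The plan is to show that solving \Cref{program:original} reduces to solving the family of linear programs in \Cref{program:lp for berk-nash}, so I need to verify two things: that the feasible set of \Cref{program:lp for berk-nash} coincides with that of \Cref{program:original}, and that the optimum is attained at one of the discrete choices the algorithm enumerates. First I would use \Cref{lem:charactere actions} to replace the consistency constraints of \Cref{program:original} with the single condition $\adist^*(\action_1)\in[\action_1^{\min},\action_1^{\max}]$, and \Cref{lem:union of convex sets} to replace the optimality constraints with membership in $\contracts=\contracts_1\cup\contracts_2$. This establishes that a pair $(\adist^*,\contract)$ is feasible for \Cref{program:original} if and only if $\adist^*(\action_1)\in[\action_1^{\min},\action_1^{\max}]$ and $\contract\in\contracts_1\cup\contracts_2$, which is exactly the feasible region swept out by $\textsc{VAL}(\action_1^{val},\widehat{\contracts})$ as $\action_1^{val}$ ranges over $[\action_1^{\min},\action_1^{\max}]$ and $\widehat{\contracts}\in\{\contracts_1,\contracts_2\}$.

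The key structural observation is that the objective $\expect[\action\sim\adist^*]{\expect[\reward\sim\dist_\action]{\reward-\contract(\reward)}}$, although bilinear in the joint variable $(\adist^*,\contract)$, becomes linear in $\contract$ once $\action_1^{val}=\adist^*(\action_1)$ is fixed, and is simultaneously affine (hence linear) in $\adist^*(\action_1)$ once $\contract$ is fixed. The crucial point I would emphasize is that for each fixed $\widehat{\contracts}\in\{\contracts_1,\contracts_2\}$ the map $\action_1^{val}\mapsto\textsc{VAL}(\action_1^{val},\widehat{\contracts})$ is a maximum over a convex polytope of an objective that is affine in $\action_1^{val}$, so it is convex in $\action_1^{val}$ on the interval $[\action_1^{\min},\action_1^{\max}]$; a convex function on an interval attains its maximum at an endpoint. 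This is what justifies restricting $\action_1^{val}$ to the two values $\{\action_1^{\min},\action_1^{\max}\}$ rather than optimizing over the whole interval, and it is the step I expect to require the most care, since I must argue convexity in $\action_1^{val}$ carefully (treating $\contract$ as the maximization variable and checking that the value function of a parametric LP whose objective is affine in the parameter is convex).

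Having reduced the search to the four combinations $(\action_1^{val},\widehat{\contracts})\in\{\action_1^{\min},\action_1^{\max}\}\times\{\contracts_1,\contracts_2\}$, each corresponding LP $\textsc{VAL}(\action_1^{val},\widehat{\contracts})$ is a genuine linear program: the objective is linear in $\contract$ and the constraints $\contract\in\widehat{\contracts}$ are linear inequalities with coefficients computable in polynomial time by \Cref{lem:union of convex sets}, while the interval endpoints are computable in polynomial time by \Cref{lem:charactere actions}. Solving four polynomial-size linear programs and returning the best contract runs in polynomial time, and by the feasibility equivalence together with the endpoint argument the returned contract achieves the same optimal value as \Cref{program:original}.

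I would close by noting the one subtlety to double-check: \Cref{program:original} also requires the existence of a consistent posterior $\posterior^*\in\Delta(\beliefs^*)$ and a best-responding posterior realizing the optimality constraints, but these existential quantifiers are exactly what \Cref{lem:charactere actions} and \Cref{lem:union of convex sets} have already absorbed into the interval and convex-set characterizations; hence the output contract $\contract^*$, paired with $\adist^*(\action_1)=\action_1^*$, admits the required posterior and is a bona fide Berk-Nash equilibrium with the claimed support structure, matching the output of \Cref{program:original}.
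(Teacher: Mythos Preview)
Your proposal is correct and follows essentially the same route as the paper's own proof: both invoke \Cref{lem:charactere actions} and \Cref{lem:union of convex sets} to replace the constraints of \Cref{program:original} by $\adist^*(\action_1)\in[\action_1^{\min},\action_1^{\max}]$ and $\contract\in\contracts_1\cup\contracts_2$, and then argue that the bilinear objective forces the optimum to sit at an endpoint in $\adist^*(\action_1)$, reducing everything to four LPs. The only cosmetic difference is the direction of the endpoint argument: the paper fixes $\contract$ first and observes that the objective is linear in $\adist^*(\action_1)$ (so the maximizing $\adist^*(\action_1)$ is $\action_1^{\min}$ or $\action_1^{\max}$ depending on the sign of the coefficient), whereas you fix $\action_1^{val}$ first and note that $\textsc{VAL}(\action_1^{val},\widehat{\contracts})$ is a pointwise supremum of affine functions, hence convex, hence maximized at an endpoint---these are two equivalent views of the same bilinear structure.
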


\subsection{Revenue Losses from Misspecification}\label{sec:unhappy principal}

In many practical application, it is often the case that the agent may suffer from at least a small degree of misspecification.
In this section, we show that the principal's expected revenue in a Berk-Nash equilibrium can deviate significantly from the optimal revenue achievable in a correctly specified setting even when there is not a small degree of misspecification. 

\begin{definition}[$\epsilon$-misspecification]
For any $\epsilon\geq 0$, an agent has $\epsilon$-misspecified beliefs if there exists a belief $\belief \in \beliefs$ such that for any action $a\in A$, we have $\KL{\dist_a}{\belief_{a}}\leq \epsilon$.
\end{definition}

We say an agent has correctly specified beliefs if he has $\epsilon$-misspecified beliefs for $\epsilon = 0$. 
Note that in our definition of $\epsilon$-misspecification, we make a strong requirement that a single belief is close to the true underlying data generating process for all possible actions.\footnote{An alternative weaker version of the definition would only require the existence of a belief with small KL-divergence for each possible action. } 
Our result shows that even under such strong requirement of close to correct specification, 
the loss in expected revenue can still be unbounded even when $\epsilon$ approaches $0$. 

\begin{theorem}\label{thm:rev_loss}
For any $\epsilon > 0$, there exists an instance with $\epsilon$-misspecified beliefs such that the multiplicative gap between the optimal revenue from the correctly specified model and optimal revenue from the misspecified model is at least $1.81$. 
\end{theorem}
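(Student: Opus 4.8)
The plan is to construct an explicit instance with two actions, parametrized so that the misspecification parameter $\epsilon$ can be made arbitrarily small while the revenue gap stays bounded below by $1.81$. I would set up the agent with a costly action $\action_2$ (the one the principal wants incentivized) and a free or cheap action $\action_1$, with true reward distributions $\dist_{\action_1}, \dist_{\action_2} \in \Delta(\rewards)$ over a small outcome space (two or three reward levels should suffice). The key design freedom is the belief set $\beliefs$: I would include one belief $\belief^0$ that is $\epsilon$-close to the truth in KL on both actions (so the instance is genuinely $\epsilon$-misspecified by the definition), together with one or more additional misspecified beliefs $\belief^1$ that the agent finds ``attractive'' — beliefs under which the agent's best response, and hence the consistency-selected posterior, drives the Berk-Nash equilibrium toward a low-revenue action distribution.

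The core of the argument is to compute, in the correctly specified benchmark, the optimal revenue $\opt{\text{correct}}$: here the principal faces a standard moral-hazard problem and can pin the agent to $\action_2$ via a contract whose payments are calibrated against $\dist_{\action_2}$, extracting revenue close to the first-best $\expect[\reward\sim\dist_{\action_2}]{\reward} - \cost(\action_2)$ (up to the limited-liability rent). I would then analyze the misspecified instance: I would argue that for \emph{every} contract $\contract$, the induced Berk-Nash equilibrium yields revenue bounded above by some value $\opt{\text{mis}}$ strictly below $\opt{\text{correct}}/1.81$. The mechanism is that the misspecified beliefs distort the consistency condition — the belief minimizing $\expect[\action\sim\adist]{\KL{\dist_{\action}}{\belief_{\action}}}$ is not $\belief^0$ but a wrong belief $\belief^1$, and under $\belief^1$ the agent either refuses to take $\action_2$ unless overpaid, or takes a mixed action that the principal cannot profitably exploit. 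Concretely I would pick parameters so that the wrong belief makes the agent overestimate the payoff of the cheap action, forcing the principal to either abandon incentivizing $\action_2$ or to pay a large rent, in both cases capping revenue.

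The main obstacle is showing the upper bound holds \emph{uniformly over all contracts} $\contract:\rewards\to\reals_+$, not just for one natural contract — the principal gets to optimize, so I must rule out any clever contract that exploits the structure of $\beliefs$ to recover revenue. I expect to handle this by leveraging the two-action machinery already developed: by \Cref{lem:sparce posterior} the relevant posteriors have support at most two, and by \Cref{lem:charactere actions} the feasible equilibrium action distributions for a fixed posterior support form an interval, so I can enumerate the finitely many support configurations and bound revenue on each. For each configuration the consistency constraints (which are $\contract$-independent, depending only on $\dist$ and $\beliefs$) restrict which $\adist$ are sustainable, and the optimality constraints then limit how much the principal can charge; combining these gives a per-configuration revenue cap, and taking the max yields $\opt{\text{mis}}$.

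Finally, I would verify the arithmetic: choose the numerical parameters of $\dist_{\action_1}, \dist_{\action_2}$, the costs, and the belief $\belief^1$ so that $\epsilon = \KL{\dist_{\action}}{\belief^0_{\action}} \to 0$ is achievable independently of the gap (e.g. by letting $\belief^0 \to \dist$ while holding $\belief^1$ fixed away from $\dist$), and so that $\opt{\text{correct}}/\opt{\text{mis}} \geq 1.81$. The separation of the two limits — $\epsilon$ shrinking while the wrong belief $\belief^1$ stays a bounded distance from the truth and remains the consistency-selected belief for the revenue-relevant action distributions — is what makes the result strike even as misspecification vanishes, echoing the discontinuity phenomenon noted for \citet{frick2023belief}. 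The constant $1.81$ presumably comes from optimizing the chosen family of instances, so I would tune parameters to push the computed gap above that threshold and report the resulting instance.
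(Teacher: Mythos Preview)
There is a genuine gap in your proposed mechanism. You want a nearly-correct belief $\belief^0$ (the witness for $\epsilon$-misspecification) alongside a separate $\belief^1$ held fixed away from $\dist$, with consistency selecting $\belief^1$. But since $\KL{\dist_\action}{\belief^0_\action}\le\epsilon$ for \emph{every} action, the weighted divergence of $\belief^0$ is at most $\epsilon$ for every $\adist$; for $\belief^1$ to be selected it too must have weighted divergence below $\epsilon$, contradicting ``fixed away from $\dist$'' on the support of $\adist$. Even if you let $\belief^1$ be exact on the costly action $\action_2$ and wrong only on $\action_1$, the principal escapes: a contract making the agent indifferent under $\belief^0$ yields a continuum of mixed Berk-Nash equilibria with posterior $\delta_{\belief^0}$, namely all $\adist$ with $\adist(\action_1)$ above the $O(\epsilon)$ threshold where $\belief^0$ is the unique KL-minimizer, and the principal selects one with $\adist(\action_2)\to 1$, so revenue converges to the correctly-specified optimum as $\epsilon\to 0$. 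The very presence of a separate nearly-correct $\belief^0$ hands the principal an equilibrium to anchor on.

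The paper instead takes a \emph{singleton} belief set $\beliefs=\{\belief\}$: the $\epsilon$-close witness and the damaging belief are the same object, so there is no alternative belief for the principal to exploit. The idea missing from your proposal is that a single belief can be $\epsilon$-close in KL yet destroy an infinite likelihood ratio. The true $\dist_{\action_b}$ puts probability $0$ on one outcome, giving that outcome infinite likelihood ratio and letting the correctly-specified principal extract full surplus by paying only there; $\belief_{\action_b}$ puts the same tiny mass $\delta$ there as $\dist_{\action_g}$ does, collapsing the ratio to $1$. Because $\dist_{\action_b}$ has zero mass on that outcome, this perturbation costs only $O(\delta)$ in $\KL{\dist_{\action_b}}{\belief_{\action_b}}$ but eliminates the principal's cheapest incentive instrument entirely. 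With a singleton belief, Berk-Nash reduces to ordinary best response against $\belief$, and the misspecified revenue bound is a two-line classical contract calculation---no enumeration via \Cref{lem:sparce posterior} or \Cref{lem:charactere actions} is needed. The constant $1.81$ then comes from optimizing two scalar parameters after sending $\delta\to 0$.
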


The proof of \cref{thm:rev_loss} relies on the construction in the following instance.

\begin{instance}\label{instance:unhappy principal}
We consider a contract design problem characterized by parameters $p \in \InBrackets{3/4,1} $, $c\in\InBrackets{0,1/2}$, $\delta \in (0, 1-p)$. There are three potential rewards $\rewards=\InBraces{0,1,2}$.  
The action space is denoted as $\actions=\InBraces{ \action_g, \action_{b}}$ and the agent has cost $c$ for choosing action $\action_g$ (e.g., $\cost(\action)=c\cdot\mathds{1}[\action=\action_g]$). 

\noindent\textbf{True reward distribution:} The distribution $\dist$ over rewards for each action is parametrized by $p$ as follows:
\begin{align*}
    \Pr_{\reward\sim \dist_{\action_g}}[\reward]=
    \begin{cases}
        \delta  &\text{if } \reward=2,\\
        p  &\text{if } \reward=1, \\
        1-p-\delta  &\text{if } \reward= 0.
    \end{cases}\qquad\qquad
    \Pr_{\reward\sim \dist_{\action_b}}[\reward]=
    \begin{cases}
    0 &\text{if } \reward=2,\\
    1-p &\text{if } \reward= 1,\\
    p &\text{if } \reward=0.    
    \end{cases}
\end{align*}

\noindent\textbf{Misspecified beliefs:} 
The agent has a singleton belief $B$ specified as follows:
\begin{align*}
    \Pr_{\reward\sim \belief'_{\action_g}}[\reward]=
    \begin{cases}
        \delta &\text{if } \reward=2,\\
        p &\text{if } \reward=1, \\
        1-p-\delta &\text{if } \reward= 0.
    \end{cases}\qquad\qquad
    \Pr_{\reward\sim \belief'_{\action_b}}[\reward=1]=
    \begin{cases}
    \delta &\text{if } \reward=2,\\
    1-p &\text{if } \reward= 1,\\
    p-\delta &\text{if } \reward=0.    
    \end{cases}
\end{align*}
\end{instance}

It is easy to verify that the instance constructed in \cref{instance:unhappy principal} satisfies $\epsilon$-misspecified beliefs for $\delta = \frac{\epsilon}{2}$. 
This is formally stated in \cref{lem: KL for unhappy}. 
The proof of the lemmas in the remaining section are provided in \cref{apx:revenue_loss}.
\begin{lemma}\label{lem: KL for unhappy}
In \Cref{instance:unhappy principal}, for any action $\action\in \InBraces{\action_g,\action_b}$, 
    $\KL{\dist_a}{\belief'_{a}}\leq 2\delta$. 
\end{lemma}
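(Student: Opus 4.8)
The plan is to compute $\KL{\dist_\action}{\belief'_\action}$ explicitly for each of the two actions and bound it by $2\delta$. The good action is immediate: the true reward distribution $\dist_{\action_g}$ and the belief $\belief'_{\action_g}$ are literally identical, both placing mass $\delta$, $p$, $1-p-\delta$ on rewards $2$, $1$, $0$. Hence $\KL{\dist_{\action_g}}{\belief'_{\action_g}} = 0 \leq 2\delta$ with nothing to prove.

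The substantive case is the bad action $\action_b$. Here I would expand the divergence term by term over $\rewards = \InBraces{0,1,2}$. The $\reward=2$ term vanishes because $\dist_{\action_b}(2) = 0$ (using the convention $0\log 0 = 0$), and the $\reward=1$ term vanishes because both distributions assign probability $1-p$ to reward $1$. The only surviving term is at $\reward = 0$, yielding
\[
\KL{\dist_{\action_b}}{\belief'_{\action_b}} = p\log\frac{p}{p-\delta}.
\]
It then remains to establish $p\log\frac{p}{p-\delta} \leq 2\delta$. I would apply the elementary inequality $\log(1+y) \leq y$ with $y = \frac{\delta}{p-\delta}$, which gives $\log\frac{p}{p-\delta} \leq \frac{\delta}{p-\delta}$ and therefore $p\log\frac{p}{p-\delta} \leq \frac{p\delta}{p-\delta}$.

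The last step, and the only place requiring genuine care, is to convert this crude bound into $2\delta$ by invoking the parameter ranges of \Cref{instance:unhappy principal}. Since $p \geq 3/4$ and $\delta < 1-p \leq 1/4$, we have $\delta < 1/4 \leq 3/8 \leq p/2$, so $p - \delta \geq p/2$ and thus $\frac{p\delta}{p-\delta} \leq \frac{p\delta}{p/2} = 2\delta$, completing the argument. I expect no real obstacle beyond this final inequality: the entire content is ensuring that the admissible $p\in[3/4,1]$ and $\delta\in(0,1-p)$ force $\delta \leq p/2$ (equivalently $p-\delta \geq p/2$), which is precisely what makes the logarithmic term controllable and matches the $\epsilon = 2\delta$ bookkeeping used in the surrounding $\epsilon$-misspecification claims.
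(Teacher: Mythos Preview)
Your proposal is correct and takes essentially the same approach as the paper: both observe that $\belief'_{\action_g}=\dist_{\action_g}$, compute $\KL{\dist_{\action_b}}{\belief'_{\action_b}} = p\log\frac{p}{p-\delta}$, and then bound this by $2\delta$ using $\log(1+x)\leq x$ together with the parameter ranges $p\geq 3/4$, $\delta<1-p\leq 1/4$. The only cosmetic difference is that the paper drops the factor $p\leq 1$ first and uses $p-\delta\geq 1/2$ to reach $\log(1+2\delta)\leq 2\delta$, whereas you apply the log inequality first and use $p-\delta\geq p/2$; both routes are immediate.
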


Moreover, when the agent is correctly specified, we show that there exists a contract such that the expected revenue is $p-c$. Intuitively, in correctly specified model, by offering a contract with positive reward $\frac{c}{\delta}$ only for reward $2$ (the outcome that maximizes the likelihood ratio), the agent is incentivized to choose action $a_g$ and the principal extract the full surplus from the agent. 
\begin{lemma}\label{lem:correct specification}
In \Cref{instance:unhappy principal}, when the agent has a correctly specified belief, there exists a contract that attains the expected revenue of $p+2\delta -c$ for the principal.
\end{lemma}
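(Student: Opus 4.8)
The plan is to exhibit the explicit contract flagged in the paragraph preceding the lemma, namely the one that rewards the agent only on the outcome $\reward = 2$. The key structural observation driving the construction is that $\reward=2$ is the unique outcome separating the two actions under the true data generating process: $\dist_{\action_g}(2) = \delta > 0$ while $\dist_{\action_b}(2) = 0$. I would therefore set $\contract(2) = c/\delta$ and $\contract(1) = \contract(0) = 0$. Since $\delta > 0$ and $c \geq 0$, this contract is non-negative and hence respects limited liability, so it is feasible.

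First I would record the simplification afforded by correct specification: a correctly specified agent has a belief coinciding with the true distribution, $\belief_\action = \dist_\action$ for both actions, so this belief attains KL divergence $0$ and the consistency condition of the Berk-Nash equilibrium is automatically satisfied. It then suffices to verify optimality with respect to the true distribution and to compute revenue. Using $\Util(\action,\contract) = \expect[\reward\sim\dist_{\action}]{\contract(\reward)} - \cost(\action)$, I would compute that under $\action_g$ the expected payment is $\delta \cdot (c/\delta) = c$, so the agent's utility is $c - c = 0$; under $\action_b$ the outcome $\reward=2$ has probability $0$, so the expected payment is $0$ and the agent's utility is again $0$. Thus both actions yield utility $0$, $\action_g$ is a best response, and placing all mass on $\action_g$ is a Berk-Nash equilibrium. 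With the agent playing $\action_g$, the principal's revenue is $\Rev(\action_g,\contract) = \expect[\reward\sim\dist_{\action_g}]{\reward} - \expect[\reward\sim\dist_{\action_g}]{\contract(\reward)} = (2\delta + p) - c = p + 2\delta - c$, as claimed; intuitively the targeted payment extracts the agent's entire surplus, leaving him exactly indifferent.

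The only delicate point, which I expect to be the main (and essentially the only) obstacle, is precisely this indifference of the agent between $\action_g$ and $\action_b$: the stated revenue is attained when the tie is broken in favor of $\action_g$, as is standard in contract design where the principal selects the best-for-her equilibrium. Should one wish to avoid relying on the tie-breaking convention, I would note that the same value is recovered as a limit by instead paying $\contract(2) = c/\delta + \eta$ for arbitrarily small $\eta > 0$, which makes $\action_g$ strictly optimal and yields revenue $p + 2\delta - c - \delta\eta \to p + 2\delta - c$ as $\eta \to 0$. All remaining steps are elementary arithmetic on the distributions in \Cref{instance:unhappy principal}.
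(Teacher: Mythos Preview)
Your proposal is correct and follows essentially the same approach as the paper: both construct the contract that pays $c/\delta$ on outcome $2$ and zero elsewhere, and then verify that $\action_g$ is a best response with resulting revenue $p+2\delta-c$. Your write-up is simply more detailed, spelling out the limited-liability check, the utility computations, and the tie-breaking issue that the paper leaves implicit.
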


Finally, when the agent has misspecified beliefs, i.e., $\epsilon>0$, the optimal revenue of the principal under Berk-Nash equilibrium is at most $1-p$. The main intuition is that even with small misspecification in beliefs, the likelihood ratio for outcome $2$ reduces from infinite to 1. 
Under misspecified belief $B$, the maximum likelihood ratio reduces to $\frac{p}{1-p}$. To incentivize the agent to choose action $a_g$, the principal has to leave significant utility for the agent, which in turn reduces the expected revenue. 
\begin{lemma}\label{thm:unhappy principal}
In \Cref{instance:unhappy principal}, when the agent has misspecified beliefs, the optimal revenue given any contract is at most 
$\max\{1-p, p+2\delta -\frac{c\cdot p}{2p-1}\}$.
\end{lemma}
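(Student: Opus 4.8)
The plan is to use the fact that the belief set in \Cref{instance:unhappy principal} is the singleton $\beliefs=\{\belief'\}$, which collapses the Berk-Nash conditions to a single best-response requirement, and then to upper bound the principal's true-distribution revenue separately over contracts that induce each action.

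First I would note that since $|\beliefs|=1$, any posterior $\posterior\in\Delta(\beliefs)$ must place all mass on $\belief'$, so the consistency condition holds trivially and a distribution $\adist\in\Delta(\actions)$ forms a Berk-Nash equilibrium under $\contract$ precisely when every action in $\supp(\adist)$ maximizes $\Util(\action,\belief',\contract)$. Writing these utilities out,
\[
\Util(\action_g,\belief',\contract)=\delta\,\contract(2)+p\,\contract(1)+(1-p-\delta)\,\contract(0)-c,\qquad
\Util(\action_b,\belief',\contract)=\delta\,\contract(2)+(1-p)\,\contract(1)+(p-\delta)\,\contract(0),
\]
I would form the incentive inequality $\Util(\action_g,\belief',\contract)\geq\Util(\action_b,\belief',\contract)$ that must hold whenever $\action_g$ is a best response. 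The crux of the whole argument is that the $\delta\,\contract(2)$ terms cancel — under the misspecified belief both actions assign the same probability $\delta$ to reward $2$ — so the inequality reduces to $(2p-1)\bigl(\contract(1)-\contract(0)\bigr)\geq c$, i.e.\ $\contract(1)-\contract(0)\geq \tfrac{c}{2p-1}$ using $p\geq 3/4$. Thus reward $2$ is worthless for incentivizing $\action_g$, in sharp contrast to the correctly specified benchmark of \Cref{lem:correct specification} where it carries an infinite likelihood ratio.

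Next I would bound the revenue, always evaluated under the true distributions $\dist_\action$. If the agent best-responds with $\action_b$, the true expected reward is $1-p$ and the expected payment $(1-p)\contract(1)+p\contract(0)$ is nonnegative, so the revenue is at most $1-p$. If the agent best-responds with $\action_g$, the expected payment satisfies
\[
\delta\,\contract(2)+p\,\contract(1)+(1-p-\delta)\,\contract(0)\ \geq\ p\,\contract(1)\ \geq\ \frac{pc}{2p-1},
\]
where the first step drops the nonnegative terms $\delta\,\contract(2)$ and $(1-p-\delta)\,\contract(0)$ (valid since $\delta<1-p$ gives $1-p-\delta>0$ and limited liability gives $\contract\geq 0$), and the second uses the incentive constraint together with $\contract(0)\geq 0$; since the true expected reward under $\action_g$ equals $2\delta+p$, the revenue is at most $p+2\delta-\tfrac{cp}{2p-1}$. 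A mixed best response requires the agent to be indifferent between the two actions, so its revenue is a convex combination of the two pure-action revenues and is therefore bounded by their maximum. Combining the cases gives the claimed bound $\max\{1-p,\ p+2\delta-\tfrac{cp}{2p-1}\}$ over all contracts and all equilibrium selections.

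The computation is elementary, so I do not expect a genuine obstacle; the two points to handle with care are the cancellation of $\contract(2)$ (which is where the misspecification bites) and the payment lower bound under $\action_g$, where one must use $1-p-\delta>0$ to discard the $\contract(0)$ contribution and observe that an optimal incentivizing contract would set $\contract(2)=\contract(0)=0$ and $\contract(1)=\tfrac{c}{2p-1}$. I would also make explicit that the statement is an upper bound holding for every contract and every equilibrium, which is exactly why the convex-combination remark for mixed equilibria is needed.
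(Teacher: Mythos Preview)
Your proposal is correct and follows essentially the same approach as the paper: both recognize that the singleton belief collapses Berk-Nash to a best-response condition, that the $\delta\contract(2)$ terms cancel in the incentive constraint (so outcome $2$ carries no incentive power), and then bound revenue by casing on which action is induced. Your write-up is actually more complete than the paper's terse sketch, since you explicitly derive the constraint $(2p-1)(\contract(1)-\contract(0))\geq c$, lower bound the true expected payment via limited liability and $1-p-\delta>0$, and handle the mixed-equilibrium case by convex combination, whereas the paper simply asserts the least-costly incentivizing contract and the resulting revenue.
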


Finally, \cref{thm:rev_loss} holds by combining \cref{lem: KL for unhappy,lem:correct specification,thm:unhappy principal} for \cref{instance:unhappy principal} with parameters $\delta\to 0$, $p=0.86$ and $c=0.6$, 
since the ratio between the principal's utility in the correctly specified setting over the misspecified setting is:
\begin{align*}
    \lim_{\epsilon\rightarrow0}\frac{p+2\delta - c}{ 
    \max\left( 
    1-p,p+2\delta-\frac{c\cdot p}{2p-1}
    \right)}\geq1.81.
\end{align*}

\section{Conclusions and Discussions}
\label{sec:conclude}
In our paper, we consider the contract design problems for an agent with misspecified beliefs. We show that when there are two possible actions, the action frequencies of a myopic Bayesian learning agent converge, eventually reaching a Berk-Nash equilibrium. Subsequently, we present a polynomial-time algorithm for computing the optimal contract of the principal that maximizes her expected revenue under Berk-Nash equilibria. However, in cases with three or more actions, we show that the action frequencies may fail to converge, and it requires at least quasi-polynomial time to compute even an approximate Berk-Nash equilibrium. This poses an intriguing open question regarding the identification of appropriate equilibrium concepts for contract design with three or more actions when the agent has misspecified beliefs. Lastly, our paper highlights that the principal may incur a significant revenue loss even with minor degrees of misspecification, underscoring the importance of studying robust contract design under such circumstances.

\bibliographystyle{apalike}
\bibliography{ref}
\newpage
\appendix
\section{Missing Proofs for Convergence with Binary Actions}
\label{subapx:convergence_binary}

\begin{lemma}\label{lem:best response bayes}
    For a fixed contract $\contract$, let $\beliefs'=\InBraces{ \belief\in \beliefs : \Util(\action_1,\belief,\contract) \neq \Util(\action_2,\belief,\contract)}$ be the set of beliefs where an agent strictly prefers one of the two actions and $\prior$ be any prior of the agent supported on $\beliefs$. We consider the prior $\prior'$ supported on $\beliefs'$ such that:
    $$
    \prior'(\belief)\propto \prior(\belief)\ind{\belief\in \beliefs'} \qquad\qquad \forall \belief\in \beliefs' .$$
    \sloppy For a realized sequence of actions that the agent took and outcomes she observed $\action^{(1)},\reward^{(1)},\ldots, \action^{(T)},\reward^{(T)}$, let $\posterior$ ($\posterior'$ resp.) be the potesterior of the agent from prior $\prior$ ($\prior'$ resp.). Then:
    \begin{align*}
\argmax_{\action\in\actions} \expect[\belief\sim\posterior]{\Util(\action,\belief,\contract)} = \argmax_{\action\in\actions} \expect[\belief\sim\posterior']{\Util(\action,\belief,\contract)}.
\end{align*}
\end{lemma}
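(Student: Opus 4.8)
The plan is to show that removing beliefs where the agent is indifferent between the two actions does not change the agent's best response at any point in the learning dynamics. The key observation is that the posterior $\posterior$ obtained from prior $\prior$ and the posterior $\posterior'$ obtained from the restricted prior $\prior'$ differ only in that $\posterior'$ redistributes the mass that $\posterior$ places on the ``indifferent'' beliefs $\beliefs\setminus\beliefs'$ proportionally onto $\beliefs'$. Since the indifferent beliefs contribute equally to the expected utility of both actions, they cancel out of the comparison $\argmax_{\action}\expect[\belief\sim\posterior]{\Util(\action,\belief,\contract)}$, so the maximizer is determined entirely by the mass on $\beliefs'$.

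\textbf{First I would verify that Bayesian updating preserves the proportionality relationship.} Starting from $\prior'(\belief)\propto\prior(\belief)\ind{\belief\in\beliefs'}$, I would argue by induction on $t$ that after observing any sequence $\action^{(1)},\reward^{(1)},\ldots,\action^{(t)},\reward^{(t)}$, the resulting posteriors satisfy $\posterior'_t(\belief)\propto\posterior_t(\belief)\ind{\belief\in\beliefs'}$ as well. This is because Bayes' rule multiplies each belief's weight by the same likelihood factor $\belief_{\action^{(t)}}(\reward^{(t)})$ regardless of which prior we start from, and the set $\beliefs'$ is fixed (it depends only on the contract, not on the observations). Concretely, both posteriors have $\posterior_t(\belief)\propto\prior(\belief)\prod_{s\in[t]}\belief_{\action^{(s)}}(\reward^{(s)})$ and $\posterior'_t(\belief)\propto\prior'(\belief)\prod_{s\in[t]}\belief_{\action^{(s)}}(\reward^{(s)})$, so $\posterior'_t$ is exactly the conditional distribution of $\posterior_t$ given the event $\belief\in\beliefs'$.

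\textbf{Next I would carry out the cancellation of the indifferent beliefs.} Writing $q=\posterior(\beliefs')=\sum_{\belief\in\beliefs'}\posterior(\belief)$ for the total posterior mass on $\beliefs'$, the previous step gives $\posterior'(\belief)=\posterior(\belief)/q$ for $\belief\in\beliefs'$. For each action $\action$, I would decompose
\begin{align*}
\expect[\belief\sim\posterior]{\Util(\action,\belief,\contract)}
= q\cdot\expect[\belief\sim\posterior']{\Util(\action,\belief,\contract)}
+ \sum_{\belief\in\beliefs\setminus\beliefs'}\posterior(\belief)\,\Util(\action,\belief,\contract).
\end{align*}
By definition of $\beliefs'$, every $\belief\in\beliefs\setminus\beliefs'$ satisfies $\Util(\action_1,\belief,\contract)=\Util(\action_2,\belief,\contract)$, so the final sum is a constant $K$ independent of $\action$. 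Hence $\expect[\belief\sim\posterior]{\Util(\action,\belief,\contract)}=q\cdot\expect[\belief\sim\posterior']{\Util(\action,\belief,\contract)}+K$ for both actions, and since $q>0$ (assuming $\beliefs'$ receives positive prior mass, which I would note as a mild nondegeneracy condition) this is a strictly increasing affine transformation of $\expect[\belief\sim\posterior']{\Util(\action,\belief,\contract)}$ in the action-dependent term. An affine transformation with positive leading coefficient preserves the $\argmax$ over the two-element action set, which yields the claimed equality.

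\textbf{The main obstacle} I anticipate is the edge case where $q=0$, i.e., the posterior $\posterior$ (equivalently, the prior) assigns zero weight to $\beliefs'$; in that case $\prior'$ is not well-defined and the restricted dynamics are vacuous. I would handle this by observing that if the prior places all its mass on indifferent beliefs, then the agent is indifferent under every posterior along the entire trajectory, so the dynamics in the restricted model are trivially irrelevant and the genericity assumption (\Cref{as:best response bayes}) rules this out anyway. A secondary subtlety is making precise that the two posteriors are computed from the \emph{same} realized sequence of actions and outcomes; since the best-response sets coincide at every step by the argument above, the realized action at each time $t$ is identical under both priors, so the sequences genuinely remain synchronized and the induction does not break down.
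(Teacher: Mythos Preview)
Your proposal is correct and follows essentially the same approach as the paper: both arguments expand the posteriors via Bayes' rule, use that beliefs in $\beliefs\setminus\beliefs'$ contribute equally to $\Util(\action_1,\cdot,\contract)$ and $\Util(\action_2,\cdot,\contract)$, and conclude that the utility comparison under $\posterior$ is a positive scalar multiple of that under $\posterior'$. The paper phrases this by directly computing that the \emph{difference} $\expect[\belief\sim\posterior']{\Util(\action_1,\belief,\contract)}-\expect[\belief\sim\posterior']{\Util(\action_2,\belief,\contract)}$ equals a positive constant times the corresponding difference under $\posterior$, whereas you reach the same conclusion via the affine decomposition $\expect[\belief\sim\posterior]{\Util(\action,\belief,\contract)}=q\cdot\expect[\belief\sim\posterior']{\Util(\action,\belief,\contract)}+K$; these are the same computation packaged slightly differently.
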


\begin{proof}
    It suffices to prove that $\expect[\belief\sim\posterior]{\Util(\action_1,\belief,\contract)} - \expect[\belief\sim\posterior]{\Util(\action_2,\belief,\contract)}\propto\expect[\belief\sim\posterior']{\Util(\action_1,\belief,\contract)} - \expect[\belief\sim\posterior']{\Util(\action_2,\belief,\contract)}$.
    By Bayes rule, realize the following equalities:
    \begin{align*}
        & \expect[\belief\sim\posterior]{\Util(\action_1,\belief,\contract)} - \expect[\belief\sim\posterior]{\Util(\action_2,\belief,\contract)}\\ = & \sum_{\belief\in \beliefs} \frac{\prior(\belief)\cdot \prod_{t=1}^T \belief_{\action^{(t)}}(\reward^{(t)} ) }{\sum_{\belief'\in \beliefs} \prior(\belief')\cdot \prod_{t=1}^T \belief'_{\action^{(t)}}(\reward^{(t)} )}\cdot\InParentheses{\Util(\action_1,\belief,\contract) - \Util(\action_2,\belief,\contract)} \\
        = & \frac{1}{\sum_{\belief'\in \beliefs} \prior(\belief')\cdot \prod_{t=1}^T \belief'_{\action^{(t)}}(\reward^{(t)})} \sum_{\belief\in \beliefs'} \InParentheses{\prior(\belief)\cdot \prod_{t=1}^T \belief_{\action^{(t)}}(\reward^{(t)} ) }\cdot\InParentheses{\Util(\action_1,\belief,\contract) - \Util(\action_2,\belief,\contract)}.
    \end{align*}

    In the first equality, we simply used the Bayes rule. In the second equality we used the fact that $\Util(\action_1,\belief,\contract) = \Util(\action_2,\belief,\contract)$ for $\belief \notin \beliefs'$. Similarly, we have that:

        \begin{align*}
        & \expect[\belief\sim\posterior']{\Util(\action_1,\belief,\contract)} - \expect[\belief\sim\posterior']{\Util(\action_2,\belief,\contract)}\\ = & \sum_{\belief\in \beliefs'} \frac{\prior'(\belief)\cdot \prod_{t=1}^T \belief_{\action^{(t)}}(\reward^{(t)} ) }{\sum_{\belief'\in \beliefs} \prior'(\belief')\cdot \prod_{t=1}^T \belief'_{\action^{(t)}}(\reward^{(t)} )}\cdot\InParentheses{\Util(\action_1,\belief,\contract) - \Util(\action_2,\belief,\contract)} \\
        = & \sum_{\belief\in \beliefs'} \frac{\frac{\prior(\belief)}{\sum_{\belief''\in\beliefs'}\prior(\belief'')}\cdot \prod_{t=1}^T \belief_{\action^{(t)}}(\reward^{(t)} ) }{\sum_{\belief'\in \beliefs} \frac{\prior(\belief)}{\sum_{\belief''\in\beliefs'}\prior(\belief'')}\cdot \prod_{t=1}^T \belief'_{\action^{(t)}}(\reward^{(t)} )}\cdot\InParentheses{\Util(\action_1,\belief,\contract) - \Util(\action_2,\belief,\contract)} \\
        = & \frac{1}{\sum_{\belief'\in \beliefs'} \prior(\belief')\cdot \prod_{t=1}^T \belief'_{\action^{(t)}}(\reward^{(t)})} \sum_{\belief\in \beliefs'} \InParentheses{\prior(\belief)\cdot \prod_{t=1}^T \belief_{\action^{(t)}}(\reward^{(t)} ) }\cdot\InParentheses{\Util(\action_1,\belief,\contract) - \Util(\action_2,\belief,\contract)} \\
        = & \frac{\sum_{\belief'\in \beliefs} \prior(\belief')\cdot \prod_{t=1}^T \belief'_{\action^{(t)}}(\reward^{(t)})}{\sum_{\belief'\in \beliefs'} \prior(\belief')\cdot \prod_{t=1}^T \belief'_{\action^{(t)}}(\reward^{(t)})}\InParentheses{ \expect[\belief\sim\posterior]{\Util(\action_1,\belief,\contract)} - \expect[\belief\sim\posterior]{\Util(\action_2,\belief,\contract)}} ,       
    \end{align*}

    where the first equality follows because for $\belief\notin\beliefs'$, $\prior'(\belief)=0$. The second equality follows by definition of $\prior'$. The last equality concludes the proof.
\end{proof}

\begin{prevproof}{thm:differential inclusion to action frequency}
    The proof is an instantiation of Theorem~2 by \citet{esponda2021asymptotic} by noticing that the best-response correspondence is upper hemicontinous, and Assumptions~1-2 in \citet{esponda2021asymptotic} are satisfied since the belief space $\beliefs$ is finite, and the prior has full support by assumption.
\end{prevproof}

\notshow{
\begin{observation}
    For any distribution over actions $\adist^*\in \Delta(\actions)\setminus \delta_{\action_1},\delta_{\action_2}$, a necessery and sufficient condition so that there exists a posterior $\posterior^*\in \Delta(\beliefs)$ such that $(\adist^*,\posterior^*)$ is a Berk-Nash is that $\BR{\adist}=\actions$.
\end{observation}
\begin{proof}
    We first prove that it is necessery. Assume towards contradictions that $\BR{\adist^*}=\InBraces{\action_1}$ (the case where $\BR{\adist^*}=\InBraces{\action_2}$ follows similarly). Then by definition of $\BR{\adist^*}$, the optimality condition can never be satisfied for action $\action_2$.

    Now we prove that is sufficient. By definition of $\BR{\adist^*}$, there exists posteriors over beliefs $\posterior_1,\posterior_2\in \Delta(\beliefs{\adist^*})$ such that:
    \begin{align*}
        \action_1 \in \argmax_{\action\in\actions} \expect[\belief\sim\posterior_1]{\Util(\action,\belief,\contract)},\\
        \action_2 \in \argmax_{\action\in\actions} \expect[\belief\sim\posterior_2]{\Util(\action,\belief,\contract)}.
    \end{align*}
    Thus, there exists a posterior $\posterior'\in \Delta(\beliefs\InParentheses{\adist^*})$ such that: 
        \begin{align*}
        \action_1,\action_2 \in \argmax_{\action\in\actions} \expect[\belief\sim\posterior']{\Util(\action,\belief,\contract)},
    \end{align*}
    which implies that the optimality condition for the Berk-Nash $(\adist^*,\posterior^*)$ is satisfied. The consistensy condition is satisfied by definition of $\beliefs(\adist^*)$.
\end{proof}
}

\begin{prevproof}{thm:characterization of BR}
Our key technical lemma for the proof lies in \Cref{lem:char of inclusion} below.
We introduce some additional notation:
\begin{itemize}

    \item For any belief $\belief\in \beliefs$, we denote the set of action-profiles that minimize the KL divergence of $\belief$ by:
    $$\Delta(\actions)_{\belief}=\InBraces{\adist\in \Delta(\actions): \belief\in \argmin_{\belief\in\beliefs}\expect[\action\sim\adist]{\KL{\dist_{\action}}{\belief_{\action}}}}.$$

    \item For any belief $\belief\in \beliefs$, we denote the best-response of the player according to $\belief$ by:
    $$\BR{\belief}=\argmax_{\action\in\actions} {\Util(\action,\belief,\contract)}.\footnote{\text{By \Cref{as:best response bayes}, $\BR{\belief}$ contains only one element.} }$$

\end{itemize}

\begin{lemma}\label{lem:char of inclusion}
    Consider a misspecified learning instance satisfying \Cref{ass:poly contract} and \Cref{as:best response bayes}.
\begin{itemize}
    \item For any distribution over actions $\adist^*\in \Delta(\actions)$, $|\beliefs\InParentheses{\adist^*}|\leq 2$.
    
    \item For any belief $\belief\in \beliefs$, the set $\Delta(\actions)_{\belief}$ is compact and convex. 

    \item For any pair of beliefs $\belief,\belief'\in \beliefs$, there exists at most one distribution over actions $\adist^*\in \Delta(\actions)$ such that  $\adist^*\in \Delta(\actions)_{\belief} \cap \Delta(\actions)_{\belief'}$. 
    
    \item For any distribution over actions $\adist$, $\BR{\adist} = \cup_{\belief\in \beliefs(\adist)}\BR{\belief}$.
    
\end{itemize}    
\end{lemma}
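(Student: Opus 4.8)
\textbf{Proof proposal for \Cref{lem:char of inclusion}.}

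The plan is to prove each of the four bullets in sequence, since the later claims build on the earlier ones. For the first bullet, the goal is to show that the minimizer set $\beliefs(\adist^*)$ has at most two elements. I would view the objective $\belief \mapsto \expect[\action\sim\adist^*]{\KL{\dist_{\action}}{\belief_{\action}}}$ as a linear function of $\adist^*$ with coefficient vector $\kl(\belief) = (\KL{\dist_{\action_1}}{\belief_{\action_1}}, \KL{\dist_{\action_2}}{\belief_{\action_2}}) \in \reals^2$. Since $\adist^*$ lives in the one-dimensional simplex $\Delta(\actions)$, minimizing the inner product $\inr{\adist^*, \kl(\belief)}$ over $\belief$ is equivalent to finding the lowest point of the lower envelope of a finite family of affine functions parametrized by $\adist^*(\action_1)$. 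The general-position condition in \Cref{ass:poly contract} (the second bullet) guarantees that no three of the points $\kl(\belief)$ are collinear, which forces the minimizing set at any single $\adist^*$ to consist of at most two beliefs: three minimizers would mean three of the $\kl(\belief)$ lie on a common supporting line, contradicting general position. The first item of \Cref{ass:poly contract} (distinct KL values per action) rules out degenerate ties.

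For the second bullet, I would show $\Delta(\actions)_{\belief}$ is compact and convex. Compactness follows because it is a closed subset (defined by the $\leq$ inequalities $\inr{\adist, \kl(\belief)} \leq \inr{\adist, \kl(\belief')}$ for all $\belief'$) of the compact simplex $\Delta(\actions)$. Convexity follows because each such constraint is linear in $\adist$, so $\Delta(\actions)_{\belief}$ is an intersection of finitely many halfspaces with the simplex, hence a convex polytope; in the one-dimensional simplex this is simply a closed subinterval of $[0,1]$ (identifying $\adist$ with $\adist(\action_1)$). For the third bullet, I would argue that $\Delta(\actions)_{\belief} \cap \Delta(\actions)_{\belief'}$ contains at most one point: an $\adist^*$ in this intersection must satisfy $\inr{\adist^*, \kl(\belief)} = \inr{\adist^*, \kl(\belief')}$, i.e. $\inr{\adist^*, \kl(\belief) - \kl(\belief')} = 0$. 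Since $\kl(\belief) \neq \kl(\belief')$ (by the first item of \Cref{ass:poly contract}, the vectors differ in at least one coordinate) and $\adist^*$ is constrained to the affine line $\adist^*(\action_1) + \adist^*(\action_2) = 1$, this is a single nondegenerate linear equation on a line, which has a unique solution unless the constraint vector is parallel to the simplex direction; the general-position / distinctness assumption ensures nondegeneracy, giving at most one intersection point.

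For the fourth bullet, $\BR{\adist} = \bigcup_{\belief \in \beliefs(\adist)} \BR{\belief}$, I would unfold the definition of $\BR{\adist}$ as the set of actions best-responding to \emph{some} posterior $\posterior$ supported on $\beliefs(\adist)$. The inclusion $\supseteq$ is immediate: each point mass $\delta_{\belief}$ with $\belief \in \beliefs(\adist)$ is such a posterior, and under \Cref{as:best response bayes} its best response is the singleton $\BR{\belief}$. For $\subseteq$, I would use that the agent's expected utility $\expect[\belief\sim\posterior]{\Util(\action,\belief,\contract)}$ is linear in $\posterior$, so the best response to any mixture $\posterior$ over $\beliefs(\adist)$ must coincide with the best response to one of the extreme points when the two actions' utilities are compared; more carefully, because there are only two actions, the set of posteriors on $\beliefs(\adist)$ for which $\action_1$ is optimal is a subinterval, and at the two endpoints the optimal action matches $\BR{\belief}$ for some $\belief \in \beliefs(\adist)$, while any action realized as a best response at an interior mixture is already realized at an endpoint. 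I expect the fourth bullet to be the main obstacle, since it requires carefully handling the mixing over the (at most two) minimizing beliefs and arguing that no \emph{new} best response can appear at an interior mixture that is not already a best response to one of the pure beliefs in $\beliefs(\adist)$; the two-action structure and the linearity of $\Util$ in $\posterior$ are the key facts that make this tractable.
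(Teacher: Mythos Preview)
Your proposal is correct and follows essentially the same route as the paper. The only cosmetic differences are that the paper proves item~1 by directly invoking linear independence of $\kl(\belief_1)-\kl(\belief_3)$ and $\kl(\belief_2)-\kl(\belief_3)$ (forcing $\adist^*=\mathbf{0}$), rather than your lower-envelope picture, and for item~4 the paper splits into the subcases $\BR{\belief}=\BR{\belief'}$ (where strict preference at both endpoints carries to every mixture) versus $\BR{\belief}\neq\BR{\belief'}$ (where the right-hand side is already all of $\actions$), which is slightly cleaner than your interval-endpoint argument but logically equivalent.
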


\begin{proof}
We prove the first item of the statement. Assume towards contradiction that there exists three distinct beliefs $\belief_1,\belief_2,\belief_3\in \beliefs$ such that:
$$\belief_1,\belief_2,\belief_3\in \argmin_{\belief\in \beliefs} 
\expect[\action\sim\adist^*]{\KL{\dist_{\action}}{\belief_{\action}}}.$$
Then the following system of linear equations is feasible:
\notshow{
\begin{align*}
    \left\langle \adist^*, \kl(\belief_1)\right\rangle = & \min_{\belief\in \beliefs} 
\expect[\action\sim\adist^*]{\KL{\dist_{\action}}{\belief_{\action}}},\\
\left\langle \adist^*, \kl(\belief_2)\right\rangle = & \min_{\belief\in \beliefs} 
\expect[\action\sim\adist^*]{\KL{\dist_{\action}}{\belief_{\action}}},\\
\left\langle \adist^*, \kl(\belief_3)\right\rangle = & \min_{\belief\in \beliefs} 
\expect[\action\sim\adist^*]{\KL{\dist_{\action}}{\belief_{\action}}},
\end{align*}
which imply that the following system of linear equations is true:
}
\begin{align*}    
 \left\langle \adist^*, \kl(\belief_1)- \kl(\belief_3)\right\rangle = & 0\\
\left\langle \adist^*, \kl(\belief_2)- \kl(\belief_3)\right\rangle = & 0,
\end{align*}
a contradiction since $\adist^*\in \Delta(\actions)\subseteq \reals^2$ and vectors  $\kl(\belief_1)- \kl(\belief_3)$ and $\kl(\belief_2)- \kl(\belief_3)$ are linear independent (\Cref{ass:poly contract}).\footnote{Observe that the only valid solution to the system is $\adist^*=(0,0)$.} Hence for any distribution over actions $\adist^*$,  $|\beliefs(\adist^*)|\leq 2$.

The second item of the statement follows since $\Delta(\actions)_{\belief}$ is the feasible set to the following system of linear inequalities:
\begin{align*}
    \left\langle \adist^*, \kl(\belief')\right\rangle \geq \left\langle \adist^*, \kl(\belief)\right\rangle,\adist^*\in \Delta(\actions) \qquad\qquad \text{for all }\belief'\in \beliefs,
\end{align*}
and as such the set of feasible solutions is convex and compact.

For the third item of the statement, for any pair of beliefs $\belief,\belief'\in \beliefs$, and distribution over actions $\adist^*\in \Delta(\actions)$ such that $\beliefs(\adist^*)=\InBraces{\belief,\belief'}$ must satisfy the following system of linear equation:
\begin{align*}
    &\adist^*(\action_1)\KL{\dist_{\action_1}}{\belief_{\action_1}} + (1- \adist^*(\action_1))\KL{\dist_{\action_2}}{\belief_{\action_2}} \\
    =& \adist^*(\action_1)\KL{\dist_{\action_1}}{\belief'_{\action_1}} + (1- \adist^*(\action_1))\KL{\dist_{\action_2}}{\belief'_{\action_2}}.
\end{align*}
By \Cref{ass:poly contract}, $\KL{\dist_{\action_2}}{\belief'_{\action_2}} \neq \KL{\dist_{\action_2}}{\belief_{\action_2}}$ and the system has at most one solution:
\begin{align*}
\adist^*(\action_1) = \frac{\KL{\dist_{\action_2}}{\belief'_{\action_2}} - \KL{\dist_{\action_2}}{\belief_{\action_2}}}{\KL{\dist_{\action_1}}{\belief_{\action_1}} -\KL{\dist_{\action_1}}{\belief_{\action_2}} - \KL{\dist_{\action_1}}{\belief'_{\action_1}}  + \KL{\dist_{\action_1}}{\belief'_{\action_2}}   }.
\end{align*}

For the last item of the statement, we take cases depending on the size of $|\beliefs(\adist)|$ (we know from item one that $|\beliefs(\adist)|\leq 2$). 

For the case $|\beliefs(\adist)|=1$, we denote the unique belief in $\beliefs(\adist)$ by $\belief$. The proof follows by the fact that $\Delta(\beliefs(\adist))= \delta_{\belief}$:
$$\BR{\adist}=\InBraces{\action^*\in \actions: \exists \posterior\in \Delta(\beliefs(\adist^*)): \action^* \in \argmax_{\action\in\actions} \expect[\belief\sim\posterior]{\Util(\action,\belief,\contract)}}=\BR{\belief}.$$

For the case where $|\beliefs\InParentheses{\adist}|=2$, we denote the two beliefs by $\belief,\belief'$. We first take the subcase where $\BR{\belief}=\BR{\belief'}$ and denote the unique maximizing action by $\action$ and the other action by $\action'$. By definition, for any distribution over beliefs in $\beliefs\InParentheses{\adist}$, and $\posterior\in \Delta(\belief,\belief')$: 

$$\expect[\belief\sim\posterior]{\Util(\action_{\adist},\belief,\contract)} >\expect[\belief\sim\posterior]{\Util(\action',\belief,\contract)},$$
which concludes this subcase.
The subcase where $\BR{\belief}\neq \BR{\belief'}$ follows by taking the posteriors $\posterior=\delta_{\belief}$ (posterior $\posterior'=\delta_{\belief'}$ resp.), and noting that $\argmax_{\action\in\actions} \expect[\belief\sim\posterior]{\Util(\action,\belief,\contract)}=\BR{\belief}$ ($\argmax_{\action\in\actions} \expect[\belief\sim\posterior']{\Util(\action,\belief,\contract)}=\BR{\belief'}$ resp.).
\end{proof}

We proceed with the proof of \Cref{thm:characterization of BR}. We first show that the set $\text{JP}=\{\adist\in \Delta(\actions): \BR{\adist}=\actions\}$ is finite. We then show that for any $\adist\in \text{JP}$, two unique beliefs $\belief, \belief' \in \beliefs$ can be precisely identified, such that $\beliefs(\adist)=\{\belief, \belief'\}$. Following item three of \Cref{lem:char of inclusion}, this suggests that for a given pair of beliefs $\InBraces{\belief, \belief'}$, a unique action distribution $\adist$ exists, such that $\beliefs(\adist)=\InBraces{\belief, \belief'}$. Since set $\beliefs$ is finite, the finiteness of $\text{JP}$ is consequently inferred. Assume towards contradiction that $|\beliefs(\adist)|= 1$, then $\BR{\adist}=\BR{\beliefs(\adist)}$ which cannot be $\actions$, as contradicted by \Cref{as:best response bayes}. Thus, according to item one of \Cref{lem:char of inclusion}, it's must be that $|\beliefs(\adist)|= 2$.

Subsequently, we define $\text{BP} = \{\adist(\action_1):\adist\in \text{JP}\}\cup \InBraces{0,1}$ and denote its elements as follows:
$$\text{BP}=\InBraces{0=\widehat{\action}^{(1)}<\widehat{\action}^{(2)}<\ldots<\widehat{\action}^{(n)}=1}.$$
Given the definition of $\text{JP}$, for each $k\in[n-1]$, there is a belief $\belief$ such that for all $\adist\in \Delta(\actions)$ with $\adist(\action_1)\in (\widehat{\action}^{(k)},\widehat{\action}^{(k+1)})$, it holds that $|\beliefs(\adist)|=\InBraces{\belief}$, implying that $\BR{\adist}=\BR{\belief}$. By \Cref{as:best response bayes}, this results in $\BR{\adist}$ being either $\action_1$ or $\action_2$.
\end{prevproof}

\begin{prevproof}{thm:differential inclusion convergence to Berk-Nash}
We prove the first part of the statement for the case where $\adist^*=\delta_{\action_1}$, since the case $\adist^*=\delta_{\action_2}$ follows similarly. When $\action_1 \in \argmax_{\action\in\actions} \Util(\action,\belief^{(1)},\contract)$, the distribution over actions $\delta_{\action_1}$ coupled with the posterior $\delta_{\belief^{(1)}}$ satisfies both the consistensy and optimality condition (see \Cref{sec:definition of Berk-Nash}).

Now we prove the second part of the statement. Observe that If $\BR{\adist}=\actions$, then there exists posterior $\posterior\in \Delta\InParentheses{\beliefs\InParentheses{\adist}}$ ($\posterior'\in \Delta\InParentheses{\beliefs\InParentheses{\adist}}$ resp.) such that $\action_1=\argmax_{\action\in\actions} \expect[\belief\sim\posterior]{\Util(\action,\belief,\contract)}$ ($\action_2=\argmax_{\action\in\actions} \expect[\belief\sim\posterior']{\Util(\action,\belief,\contract)}$ resp.). Thus, there exists $\posterior^*\in \Delta(\belief,\belief')$ such that:
$$
\expect[\belief''\sim \posterior^*]\Util(\action_1,\belief'',\contract) =\expect[\belief''\sim \posterior^*]\Util(\action_2,\belief'',\contract).
$$
The optimality condition with respect to posterior $\posterior^*$ is satisfied by definition, and the consistensy condition is satisfied since $\InBraces{\belief,\belief'}\in \beliefs\InParentheses{\adist^*}$.
\end{prevproof}

\section{Missing Proofs for Computational Hardness}\label{apx:ppad hard}
\begin{prevproof}{thm:ppad_hard}
Consider a two-player general-sum game $(Y,Z)\in \reals_+^{n}\times \in \reals_+^{n}$. We show that if we can compute a $\widetilde{\epsilon}^*$ Berk-Nash equilibrium in $o(n^{\log^{1- o(1)}(n)})$ time, then we can compute an $\epsilon^*$-Nash equilibrium (see \Cref{thm:inapproximability} for what $\epsilon^*$ is) in $o(n^{\log^{1- o(n)}(n)})$ time, a contradiction of \Cref{thm:inapproximability} under \Cref{hyp:eth}. The following techical lemma is the building block for our construction.

    \begin{lemma}\label{lem:red game to contract}
Let $\widetilde{Y}=\begin{pmatrix}\widetilde{Y}_{1,1}&\ldots & \widetilde{Y}_{1,n}\\ &\vdots \\ \widetilde{Y}_{n,1}&\ldots & \widetilde{Y}_{n,n}\end{pmatrix} \in \reals_+^{n\times n}$ and $\widetilde{Z}=\begin{pmatrix}\widetilde{Z}_{1,1}&\ldots & \widetilde{Z}_{1,n}\\ &\vdots \\ \widetilde{Z}_{n,1}&\ldots & \widetilde{Z}_{n,n}\end{pmatrix}\in \Nat^{n\times n}$ be given matrices and consider a precision parameter $\epsilon' \geq 0$. 

We can construct a misspecified contract design setting wherein an agent has a set of $n$ actions, denoted as $\InBraces{\action_1,\ldots,\action_n}$, where action $\action_i$ produces a distribution $F_{\action_i}$ over $(n+1)\cdot n$ rewards, denoted by $\rewards$. The agent has $n$ beliefs, denoted as $\InBraces{\belief_1,\ldots,\belief_n}$, and is subject to a contract $\contract:\rewards\rightarrow \mathbb{R}$. For a given belief $\belief_i$, and action $\action_j$ we have:
\begin{align*}
\InAbsolute{V(\action_i,\belief_j,\contract) - \widetilde{Y}_{i,j}} \leq \epsilon',\\
\InAbsolute{\KL{F_{\action_i}}{\belief_{(j,\action_i)}} - \widetilde{Z}_{i,j}}\leq \epsilon'.
\end{align*}
Moreover the bit complexity of the contract $\contract$ and the set of beliefs is,
$$\poly\InParentheses{n,\max_{i\in[n],j\in[m]}\log\InParentheses{\widetilde{Y}_{i,j}}, \max_{i\in[n],j\in[m]}\widetilde{Z}_{i,j}, \log\InParentheses{\frac{1}{\epsilon'}} }.$$
\end{lemma}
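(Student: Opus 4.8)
The plan is to build, for each action $\action_i$, a private block of rewards and to encode the two target matrices through two nearly independent channels: a ``payment channel'' carrying the utility entry $\widetilde{Y}_{i,j}$ and a ``divergence channel'' carrying $\widetilde{Z}_{i,j}$. Concretely, I would partition $\rewards$ into $n$ disjoint blocks, one per action, each of size $n+1$ (so that $|\rewards|=(n+1)\cdot n$), writing block $i$ as $\InBraces{\reward_{i,0},\reward_{i,1},\ldots,\reward_{i,n}}$. I make $\dist_{\action_i}$ supported only on block $i$, and I require every belief's action-$\action_i$ distribution $\belief_{(j,\action_i)}$ to be supported on block $i$ as well; this guarantees that action $\action_i$ ``sees'' only block $i$, so the construction decouples across actions. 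I set $\cost(\action_i)=0$ and let the contract pay only on the distinguished reward of each block, $\contract(\reward_{i,0})=M_i$ for a large constant $M_i$ fixed below, with $\contract(\reward)=0$ on every other reward.

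The crucial device is to set $\dist_{\action_i}(\reward_{i,0})=0$, i.e.\ the true process never produces the paid reward, while the belief may assign it positive subjective mass. Since $\belief_{(j,\action_i)}$ is supported on block $i$ and $\contract$ pays $M_i$ only on $\reward_{i,0}$, the subjective utility is $\Util(\action_i,\belief_j,\contract)=\belief_{(j,\action_i)}(\reward_{i,0})\cdot M_i$, so setting $\belief_{(j,\action_i)}(\reward_{i,0})=\widetilde{Y}_{i,j}/M_i$ reproduces $\widetilde{Y}_{i,j}$. At the same time, because $\dist_{\action_i}(\reward_{i,0})=0$ and $0\cdot\log 0=0$, the reward $\reward_{i,0}$ contributes nothing to $\KL{\dist_{\action_i}}{\belief_{(j,\action_i)}}$ regardless of how much subjective mass it carries, so the divergence is governed entirely by the remaining $n$ rewards, on which the contract is silent. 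Thus payment mass and divergence are controlled by disjoint coordinates. I would take $\dist_{\action_i}$ uniform on $\InBraces{\reward_{i,1},\ldots,\reward_{i,n}}$ and, for each $j$, spread the leftover belief mass $1-\widetilde{Y}_{i,j}/M_i$ over these rewards along a one-parameter family interpolating between the divergence-minimizing profile (belief proportional to $\dist_{\action_i}$, giving divergence $-\log(1-\widetilde{Y}_{i,j}/M_i)$) and profiles that concentrate mass away from one reward, so that $\KL{\dist_{\action_i}}{\belief_{(j,\action_i)}}$ sweeps an interval of the form $[\text{floor},\infty)$.

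I expect the divergence channel to be the main obstacle, for two reasons. First, displacing mass $\widetilde{Y}_{i,j}/M_i$ onto $\reward_{i,0}$ forces a strictly positive floor of roughly $-\log(1-\widetilde{Y}_{i,j}/M_i)$ on the attainable divergence, so a target $\widetilde{Z}_{i,j}=0$ cannot be matched exactly; I defeat this by choosing $M_i$ on the order of $\max_j \widetilde{Y}_{i,j}/\epsilon'$, which pushes the floor below $\epsilon'$ and, via the intermediate value theorem on the one-parameter family, lets me approximate every $\widetilde{Z}_{i,j}\in\Nat$ to additive error $\epsilon'$. Second, the equation $\KL{\dist_{\action_i}}{\belief_{(j,\action_i)}}=\widetilde{Z}_{i,j}$ is transcendental (the tuned probabilities scale like $e^{-\Theta(\widetilde{Z}_{i,j})}$) and admits no finite-bit closed form, which is exactly why the statement tolerates an $\epsilon'$ error and why its bit-complexity bound is polynomial in the \emph{value} of $\widetilde{Z}_{i,j}$ rather than in $\log\widetilde{Z}_{i,j}$. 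I would finish by rounding each constructed quantity to $O(\log(1/\epsilon'))$ bits beyond its natural scale and verifying that the resulting perturbations of $\Util$ and of the divergence stay within $\epsilon'$: the payments $M_i$ and the mass on $\reward_{i,0}$ need $O(\log\widetilde{Y}_{i,j}+\log(1/\epsilon'))$ bits, while the exponentially small divergence-tuning probabilities need $O(\widetilde{Z}_{i,j}+\log n+\log(1/\epsilon'))$ bits, yielding the claimed total bit complexity $\poly\InParentheses{n,\max_{i,j}\log\widetilde{Y}_{i,j},\max_{i,j}\widetilde{Z}_{i,j},\log(1/\epsilon')}$.
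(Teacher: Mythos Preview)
Your approach is correct and genuinely different from the paper's. Both partition $\rewards$ into $n$ disjoint blocks of size $n+1$, one per action, but the roles of the distinguished reward are inverted. The paper puts nearly all \emph{true} mass on $r_0^{(i)}$ (mass $1-n\widetilde{e}^{-k}$ for a large precision parameter $k$), pays nothing there, and encodes the divergence by giving the belief mass $\widetilde{e}^{-\widetilde{Z}_{i,j}}$ on $r_0^{(i)}$; the utility is encoded by having belief $j$ concentrate on a $j$-indexed reward $r_j^{(i)}$ where the contract pays roughly $\widetilde{Y}_{i,j}$. This yields a closed-form construction: the dominant KL term is $(1-n\widetilde{e}^{-k})\log\frac{1-n\widetilde{e}^{-k}}{\widetilde{e}^{-\widetilde{Z}_{i,j}}}\approx\widetilde{Z}_{i,j}$ directly, with no intermediate-value search and no rounding analysis. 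You instead put \emph{zero} true mass on the paid reward so the payment channel is completely decoupled, and then tune the divergence implicitly on the remaining coordinates; this is cleaner conceptually (the utility hits $\widetilde{Y}_{i,j}$ exactly) but less explicit, since your IVT-plus-rounding step needs a Lipschitz estimate on the KL as a function of the tuned probability. One small slip: with the true distribution uniform on $n$ coordinates, hitting divergence $\approx Z$ by shrinking one belief coordinate forces that coordinate to be of order $e^{-nZ}$, so the bit count is $O(nZ+\log(1/\epsilon'))$ rather than the $O(Z+\log n+\log(1/\epsilon'))$ you wrote---still polynomial in $(n,Z,\log(1/\epsilon'))$, so the final claim is unaffected.
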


\begin{proof}
In the contract design instance, we will construct $n+1$ distinct outcomes for each chosen action. In particular, we denote the $n+1$ possible rewards induced by action $\action_i$ as $\rewards_{\action_i}=\InBraces{\reward^{(i)}_0,\reward^{(i)}_1,\ldots,\reward^{(i)}_n}$. 
We will construct the instance such that for both the underlying distribution and the subject beliefs of the agent, only outcomes within $\rewards_{\action_i}$ will occur with positive probability when the chosen action is $\action_i$.

Our construction is parametrized by integer $k\geq \max\InParentheses{4,\log_2(4\cdot n)}$. 
Let $\widetilde{e} = \left\lceil\frac{e}{2^{-k}}\right\rceil\cdot 2^{-k}$ be the Euler constant truncated at the $k$-th bit. Observe the following inequalities:
\begin{align}\label{eq:bound on tilde first}
    e \leq \widetilde{e} \leq e + 2^{-k} \overset{(\log(e+x)\leq 1+\frac{x}{e})}{\Rightarrow}& 1 \leq \log (\widetilde{e}) \leq \log(e + 2^{-k})\leq 1 + \frac{2^{-k}}{e}, \\
     \log_{\widetilde{e}}(4\cdot n) \leq \log_{2}(4\cdot n) \leq  k \Rightarrow& n\widetilde{e}^{-k}\leq \frac{1}{4},\\
     \widetilde{e}\geq e, \widetilde{Z}_{i,j}\geq 1\Rightarrow & \widetilde{e}^{-\widetilde{Z}_{i,j}}\leq \frac{1}{2}.\label{eq:bound on tilde last}
\end{align}
We first construct the underlying distribution over rewards induced by action $\action_i$:
\begin{align*}
    F_{a_i}(\reward) =
    \begin{cases}
        1-n\cdot \widetilde{e}^{-k} \qquad &\text{if } \reward=\reward^{(i)}_0,\\
        \widetilde{e}^{-k} \qquad & \text{if } r \in R_{\action_i}\backslash\{\reward^{(i)}_0\},\\
        0 &\text{otherwise}.
    \end{cases}
\end{align*}
Moreover, under belief $\belief_{(j,\action_i)}$ the agent has the following distribution over rewards:
\begin{align*}
    \belief_{(j,\action_i)}(\reward) =
    \begin{cases}
         \widetilde{e}^{-\widetilde{Z}_{i,j}} \qquad &\text{if } \reward=\reward^{(i)}_0,\\
         1-(n-1)\cdot \widetilde{e}^{-k}-\widetilde{e}^{-\widetilde{Z}_{i,j}}&\text{if $\reward=\reward^{(i)}_j$},\\
        F_{\action_i}(\reward)=\widetilde{e}^{-k} \qquad & \text{if } r \in R_{\action_i}\backslash\{\reward^{(i)}_0,\reward^{(i)}_j\},\\
        0 &\text{otherwise}.
    \end{cases}
\end{align*}
That is, the belief $\belief_{(j,\action_i)}$ only differs from the true underlying distribution $F_{a_i}$ in reward outcomes $0$ and $j$.

Next we observe that,
\begin{align*}
    \KL{F_{\action_i}}{\belief_{(j,\action_i)}} =& \InParentheses{1-n\cdot \widetilde{e}^{-k}} \log\InParentheses{\frac{1-n\cdot \widetilde{e}^{-k}}{\widetilde{e}^{-\widetilde{Z}_{i,j}}}} + \widetilde{e}^{-k}\cdot  \log\InParentheses{\frac{\widetilde{e}^{-k}}{1-(n-1)\cdot \widetilde{e}^{-k}-\widetilde{e}^{-\widetilde{Z}_{i,j}}}} \\
    =& \widetilde{Z}_{i,j} +  \widetilde{Z}_{i,j}\cdot (\log\InParentheses{\widetilde{e}}-1) -n\cdot \widetilde{e}^{-k}\cdot \widetilde{Z}_{i,j}\cdot \log(\widetilde{e}) - k\cdot \widetilde{e}^{-k} \log(\widetilde{e})
\\
    &\ + \InParentheses{1-n\cdot \widetilde{e}^{-k}} \log\InParentheses{1-n\cdot \widetilde{e}^{-k}}- \widetilde{e}^{-k}\cdot  \log\InParentheses{1-(n-1)\cdot \widetilde{e}^{-k}-\widetilde{e}^{-\widetilde{Z}_{i,j}}}.
\end{align*}
By \Cref{eq:bound on tilde first}-\Cref{eq:bound on tilde last}, and using the inequality $(1-x)\cdot \log(1-x)\geq -x$ we get:
\begin{align*}
    \widetilde{e}^{-k}\cdot \log\InParentheses{1-(n-1)\cdot\widetilde{e}^{-k}-\widetilde{e}^{-\widetilde{Z}_{i,j}}}\geq \widetilde{e}^{-k}\cdot \log\InParentheses{\frac{1}{4}}> -2 \cdot \widetilde{e}^{-k} ,\\
    \InParentheses{1-n\cdot \widetilde{e}^{-k}} \log\InParentheses{1-n\cdot \widetilde{e}^{-k}} \geq   - n \cdot \widetilde{e}^{-k}, \\
    k\cdot \widetilde{e}^{-k} \log(\widetilde{e})\leq \widetilde{e}^{-\frac{k}{2}}\log(\widetilde{e}) \qquad\qquad (k\geq 4)
\end{align*}
Using the inequalities above, we have the following simplification:
\begin{align*}
    \InAbsolute{\KL{F_{\action_i}}{\belief_{(j,\action_i)}} - \widetilde{Z}_{i,j}} \leq & \widetilde{Z}_{i,j}\cdot \frac{2^{-k}}{e} + n \cdot \widetilde{e}^{-k} \cdot \widetilde{Z}_{i,j} \cdot \log(\widetilde{e}) + \widetilde{e}^{-\frac{k}{2}}\log(\widetilde{e}) + n \cdot \widetilde{e}^{-k} + 2\cdot \widetilde{e}^{-k} \\
    \leq & \widetilde{Z}_{i,j}\cdot \frac{2^{-k}}{e} +  10 \cdot n\cdot \widetilde{Z}_{i,j} \cdot \widetilde{e}^{-k/2}.
\end{align*}
    Now we define the contract $\contract$ for rewards in $\rewards_{\action_i}$:
    \begin{align*}
        \contract(\reward^{(i)}_j)=
        \begin{cases}
            0 \qquad&\text{if $j=0$,}\\
            \frac{\widetilde{Y}_{i,j}}{1-(n-1)\cdot \widetilde{e}^{-k} - \widetilde{e}^{-\widetilde{Z}_{i,j}}} &\text{otherwise.}
        \end{cases}
    \end{align*}

    Under belief $\belief_{j}$, using \Cref{eq:bound on tilde first}-\Cref{eq:bound on tilde last}, the expected reward of the agent for action $\action_i$ is,
    \begin{align*}
        & \Util(\action_i,\belief_{j},\contract) = \widetilde{Y}_{i,j} + \widetilde{e}^{-k}\sum_{j'\in[n]\neq j} \frac{\widetilde{Y}_{i,j'}}{1-(n-1)\cdot \widetilde{e}^{-k} - \widetilde{e}^{-\widetilde{Z}_{i,j'}}} \\        \Rightarrow& \InAbsolute{\Util(\action_i,\belief_{i},\contract) -\widetilde{Y}_{i,j}} \leq \widetilde{e}^{-k}\sum_{j'\in[n]} \frac{\widetilde{Y}_{i,j'}}{1-(n-1)\cdot \widetilde{e}^{-k} - \widetilde{e}^{-\widetilde{Z}_{i,j'}}} 
        \leq 4\cdot \widetilde{e}^{-k}\sum_{j'\in[n]} \widetilde{Y}_{i,j'},
    \end{align*}

    Thus by setting $k=\polylog\InParentheses{\max_{j\in[n]}{\widetilde{Y}_{i,j'}}, \max_{j'\in[n]} {\widetilde{Z}_{i,j'}}, n,\frac{1}{\epsilon'}}$, for each belief $\belief_{j}$ we get,
    \begin{align*}
        \InAbsolute{\Util(\action_i,\belief_{j},\contract) - \widetilde{Y}_{i,j} } \leq \epsilon',\\
        \InAbsolute{\KL{F_{\action_i}}{\belief_{(j,\action_i)}} - \widetilde{Z}_{i,j}}\leq \epsilon'.
\end{align*}
Moreover, the bit complexity for the set of beliefs and the true distribution is: 
$$ \poly( n, \max_{j'\in[n]} \widetilde{Z}_{i,j'} ,k)= \poly\InParentheses{ n , \max_{j'\in[n]} \widetilde{Z}_{i,j'} , \max_{j'\in[n]}\log(\widetilde{Y}_{i,j'}), \log\InParentheses{\frac{1}{\epsilon'}}},$$
and the bit complexity of the contract $\contract$ is 
$$\poly(n,\max_{j'\in[n]}\log(\widetilde{Y}_{i,j'}), \max_{j'\in[n]}\widetilde{Z}_{i,j'}, k )=\poly\InParentheses{n,\max_{j'\in[n]}\log(\widetilde{Y}_{i,j'}), \max_{j'\in[n]}\widetilde{Z}_{i,j'}, \log\InParentheses{\frac{1}{\epsilon'}} }.$$
\end{proof}


For an additional parameter $\kappa=7$, define matrices \(\widetilde{Y},\widetilde{Z} \in \reals^{n}\times \reals^{n}\) such that for all \(i, j\in[n]\),
\[
\widetilde{Y}_{i,j} = \left\lfloor \frac{Y_{i,j}}{\frac{\epsilon^*}{\kappa}} \right\rfloor+1 \text{, and }\widetilde{Z}_{i,j} = \left\lfloor \frac{Z_{i,j}}{\frac{\epsilon^*}{\kappa}} \right\rfloor +1,
\]
where $\epsilon^*$ is the constant promised in \Cref{thm:inapproximability}.

\begin{observation}\label{obs:new nash}
An \(\epsilon'\)-Nash equilibrium $(x,y)$ in the two-player general-sum game $(\widetilde{Y},\widetilde{Z})$ is an \(\InParentheses{\epsilon' + 1}\cdot \frac{\epsilon^*}{\kappa}\)-Nash equilibrium in the two-player general-sum game $(Y,Z)$.
\end{observation}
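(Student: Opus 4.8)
The plan is to treat $\widetilde{Y}$ and $\widetilde{Z}$ as quantized, slightly inflated versions of $Y$ and $Z$, and to track how the Nash approximation parameter degrades under two operations: rescaling all payoffs by a positive constant, and perturbing each entry by a bounded but crucially \emph{one-sided} amount. Write $s = \frac{\epsilon^*}{\kappa}$ for the quantization step.

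First I would record the entrywise comparison between $s\widetilde{Y}$ and $Y$. Directly from $\widetilde{Y}_{i,j} = \lfloor Y_{i,j}/s \rfloor + 1$ together with $s \lfloor Y_{i,j}/s \rfloor \in (Y_{i,j}-s,\, Y_{i,j}]$, one gets $Y_{i,j} \le s\widetilde{Y}_{i,j} \le Y_{i,j}+s$; that is, the error matrix $E := s\widetilde{Y} - Y$ satisfies $E_{i,j} \in [0,s]$ entrywise, and the identical bound holds for $s\widetilde{Z}-Z$. The essential feature — which is exactly why the definition adds $1$ after the floor — is that this error is \emph{nonnegative}, so $s\widetilde{Y}$ dominates $Y$ everywhere.

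Second, since scaling all payoffs by the positive constant $s$ scales the equilibrium slack by $s$, an $\epsilon'$-Nash equilibrium $(x,y)$ of $(\widetilde{Y},\widetilde{Z})$ is an $s\epsilon'$-Nash equilibrium of $(s\widetilde{Y}, s\widetilde{Z})$. I would then transfer the row player's guarantee from $(s\widetilde{Y},\cdot)$ to $(Y,\cdot)$ using both inequalities $0 \le E_{i,j} \le s$. Since $x,y$ are distributions, $e_i^\top E y$ and $x^\top E y$ are convex combinations of entries of $E$, hence both lie in $[0,s]$. For any deviation $e_i$ we have $e_i^\top Y y = e_i^\top (s\widetilde{Y}) y - e_i^\top E y \le e_i^\top (s\widetilde{Y}) y$ because $e_i^\top E y \ge 0$, while the incumbent payoff drops by at most $s$, giving $x^\top Y y \ge x^\top (s\widetilde{Y}) y - s$. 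Combining with the $s\epsilon'$ slack of the scaled game yields $\max_i e_i^\top Y y - x^\top Y y \le s\epsilon' + s = (\epsilon'+1)s$. The symmetric computation with $Z,\widetilde{Z}$ handles the column player, so $(x,y)$ is an $(\epsilon'+1)\frac{\epsilon^*}{\kappa}$-Nash equilibrium of $(Y,Z)$.

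The one subtle point — and the step I would be most careful about — is the asymmetric use of $0 \le E_{i,j} \le s$: nonnegativity keeps the best-response term from increasing, while the upper bound controls the loss in the incumbent payoff. A two-sided error would only give $(\epsilon'+2)s$, so the one-sided rounding built into the ``$+1$'' in the definitions of $\widetilde{Y},\widetilde{Z}$ is precisely what secures the sharper $(\epsilon'+1)$ constant asserted in the statement.
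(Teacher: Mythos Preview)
Your argument is correct and is essentially the paper's own proof: both use the one-sided entrywise bound $\frac{Y_{i,j}}{s}\le \widetilde{Y}_{i,j}\le \frac{Y_{i,j}}{s}+1$ (equivalently your $E\in[0,s]$), applying the lower side to the deviation term and the upper side to the incumbent payoff, and then rescaling by $s=\epsilon^*/\kappa$. Your presentation via the error matrix $E$ and the explicit observation about the role of the ``$+1$'' is in fact cleaner than the paper's chain of inequalities, but the mathematical content is identical.
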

\begin{proof}
    The reduction follows by calculations. Since $(x,y)\in \Delta([n])^2$ is an $\epsilon'$-Nash equilibrium of the general-sum game $(\widetilde{Y},\widetilde{Z})$, then:
    \begin{align*}
        \sum_{i,j\in[n]}x_i \InParentheses{\frac{Y_{i,j}}{\frac{\epsilon^*}{\kappa}} + 1} y_j \geq \sum_{i,j\in[n]}x_i \widetilde{Y}_{i,j}y_j \geq  \sum_{i,j\in[n]}x_i \widetilde{Y}_{i,j} - \epsilon' \geq \sum_{i\in[n]}x_i \frac{Y_{i,j}}{\frac{\epsilon^*}{\kappa}} - \epsilon' \qquad\qquad \forall j\in[n]\\
        \Rightarrow \sum_{i,j\in[n]}x_i \InParentheses{\frac{Y_{i,j}}{\frac{\epsilon^*}{\kappa}} + 1} y_j \geq \sum_{i\in[n]}x_i \frac{Y_{i,j}}{\frac{\epsilon^*}{\kappa}}  -\epsilon'\qquad\qquad \forall j\in[n]\\
\Rightarrow  \sum_{i,j\in[n]}x_i Y_{i,j} y_j \geq \sum_{i\in[n]}x_i Y_{i,j} - \frac{\epsilon}{\kappa} - \frac{\epsilon'\epsilon^*}{\kappa} \qquad\qquad \forall j\in[n].
    \end{align*}
    We can similarly show that:
    $$
    \sum_{i,j\in[n]}x_i Z_{i,j} y_j \geq \sum_{j\in[n]} Z_{i,j}y_j - \frac{\epsilon}{\kappa} - \frac{\epsilon'\epsilon^*}{\kappa}, \qquad\qquad \forall j\in[n].
    $$
    which concludes the proof.
\end{proof}

Now we consider the contract $\contract$ for the mispeciffied agent with beliefs $\beliefs$, actions $\actions$, and true distribution of rewards over actions $F$, as promised by \Cref{lem:red game to contract} for matrices $\widetilde{Y},\widetilde{Z}$ and precision parameter $1$. By construction, for a given belief $\belief_j\in \beliefs$, action $\action_i\in \actions$ we have:
\begin{align}
\InAbsolute{V(\action_i,\belief_{j},\contract) - \widetilde{Y}_{i,j}} \leq 1,\label{eq:item 1}\\
\InAbsolute{\KL{F_{\action_i}}{\belief_{(j,\action_i)}} - \widetilde{Z}_{i,j}}\leq 1 \label{eq:item 2}.
\end{align}

Notice that the entries of matrices $\widetilde{Y},\widetilde{Z}$ are integers and lie in the range $[1,\frac{\kappa}{\epsilon^*}+1]$, which further implies that the bit complexity of the contract $\contract$ and the set of beliefs is $\poly(n)$ ($\epsilon^*\leq 1$ is the constant promised in \Cref{thm:inapproximability}, and $\kappa=7$, which implies that each entry  of $\widetilde{Y},\widetilde{Z}$ has constant bit-complexity).

Now assume towards contradictions that under \Cref{hyp:eth} we can compute an $1$-Berk-Nash equilibrium $\adist^*\in \Delta(\actions)$ of the corresponding misspecified contract design instance in time $o\InParentheses{n^{\log^{1-o(1)}(n)}}$.
\footnote{Observe that in the Berk-Nash equilibrium instance, the values of $\contract$ are not normalized in $[0,1]$, but still bounded by an absolute constant.} Then $\adist^*$ has bit-complexity $o\InParentheses{n^{\log^{1-o(1)}(n)}}$ and there exists a posterior over beliefs $\posterior^*$ such that:

\begin{align*}
\sum_{i,j\in[n]}\adist^*(\action_i)\Util(\action_i,\belief_j,\contract)\posterior^*(\belief_j) \geq  \sum_{j\in[n]}{\Util(\action_i,\belief_j,\contract)}\posterior^*(\belief_j) - 1\qquad\qquad \forall i\in[n],\\
\sum_{i\in[n],j\in[n]} \adist^*(\action_i) \KL{F_{a_i}}{\belief_{(j,a_i)}}\posterior^*(\belief_j) \geq \sum_{i\in[n]}\adist^*(\action_i){\Util(\action_i,\belief_j,\contract)}\qquad\qquad \forall j\in[n]
\end{align*}

\sloppy
We overcharge notation and denote also by $\adist^*=[\adist^*(\action_1),\ldots,\adist^*(\action_n)]$, and $\posterior^*=[\posterior^*(\belief_1),\ldots,\posterior^*(\belief_n)]$. Combining the condition above with \eqref{eq:item 1}, \eqref{eq:item 2}, and the fact that $\|\adist^*\|_1=\|\posterior^*\|_1=1$ we infer that:
\begin{align*}
\adist^{*T} \widetilde{Y} \posterior^* \geq  e_i^T \widetilde{Y}\posterior^* - 3\qquad\qquad \forall i\in[n],\\
\adist^{*T} \widetilde{Z} \posterior^* \geq  \adist^{*T} \widetilde{Y}e_j - 2, \qquad\qquad\forall j\in[n],
\end{align*}
which implies the existence of $3$-Nash equilibrium on two-player general-sum game $(\widetilde{Y},\widetilde{Z})$. An obstacle is that potentially $\adist^*$ can have $o\InParentheses{n^{\log^{1-o(1)}(n)}}$ bit-complexity. To deal with that, we round $\adist^*$ to $\widetilde{\adist}^*$ with $\poly(n)$ bit-complexity as follows:
\begin{align*}
    \widetilde{\adist}^*(\action_i) = \begin{cases}
        \left\lfloor \frac{\adist^*(\action_i)}{\frac{\epsilon}{n^2\cdot \kappa}}\right\rfloor\cdot \frac{\epsilon}{n^2\cdot \kappa} \qquad\qquad &\forall i \in[n-1],\\
        1 - \sum_{j\in[n-1]} \widetilde{\adist}^*(\action_j) & \text{for }i=n.
    \end{cases}.
\end{align*}

An important observation is that for all $i\in[n]$, $\widetilde{\adist}^*(\action_i)\in [\adist^*(\action_i)-\frac{\epsilon}{n\cdot \kappa}, \adist^*(\action_i) + \frac{\epsilon}{n\cdot \kappa}]$. Thus,  \begin{align*}
\widetilde{\adist}^T \widetilde{Y} \posterior^*  \geq \adist^{*T} \widetilde{Y} \posterior^* - \frac{\epsilon}{\kappa}\max_{i,j\in[n]}\widetilde{Y}_{i,j} \geq  e_i^T \widetilde{Y}\posterior^* - 5\qquad\qquad \forall i\in[n],
\end{align*}
where we used that $\max_{i,j\in[n]}\widetilde{Y}_{i,j} \leq \frac{\kappa}{\epsilon} + 1 \leq 2\frac{\kappa}{\epsilon}$. Similarly we can prove that: 
\begin{align*}
    \widetilde{\adist}^{*T} \widetilde{Z} \posterior^* \geq \adist^{*T} \widetilde{Z} \posterior^* -2 \geq  \adist^{*T} \widetilde{Y}e_j - 4\geq \widetilde{\adist}^{*T} \widetilde{Y}e_j - 6, \qquad\qquad\forall j\in[n],
\end{align*}

Thus $(\widetilde{\adist}^*,\posterior^*)$ is a $6$ Nash equilibrium of the two-player general-sum game $(\widetilde{Y},\widetilde{Z})$ and $\widetilde{\adist}^*$ has $\poly(n)$ bit-complexity. Therefore, the following LP is guaranteed to have a solution and we can compute vector $\widetilde{\posterior}^*\in\Delta([n])$ in $\poly(n)$ time (each coordinate of $\widetilde{Y},\widetilde{Z}$ requires constant-size and the bit-complexity of $\widetilde{\adist}^*$ is $\poly(n)$):
    \begin{align*}
        \text{find } \widetilde{\posterior}^*\in \Delta([n]) \\
       \text{such that: }&\widetilde{\adist}^{*T} \widetilde{Y} \widetilde{\posterior}^*\geq \widetilde{\adist}^{*T} \widetilde{Y} e_{j} -6\qquad & \forall j\in[n],\\ 
& \widetilde{\adist}^{*T} \widetilde{Y} \widetilde{\posterior}^*\geq e_{i} \widetilde{Z} \widetilde{\posterior}^* -6 \qquad & \forall i\in[n].
    \end{align*}

Thus $(\widetilde{\adist}^*,\widetilde{\posterior}^*)$ is a $6$-Nash equilibrium of the two-player general-sum game $(\widetilde{Y},\widetilde{Z})$.
Moreover by \Cref{obs:new nash}, for $\kappa=7$, $(\widetilde{\adist}^*,\widetilde{\posterior}^*)$ is an $\epsilon$-nash equilibrium of the two-player general-sum game $(Y,Z)$. A contradictions under \Cref{hyp:eth}, since the total computational time is $o\InParentheses{n^{\log^{1-o(1)}(n)}}$ (\Cref{thm:inapproximability}).
\end{prevproof}

\section{Ergodic Divergence Under Misspecification with Three Actions}
\label{apx:ergodic_divergence}
In this section we show that the empirical frequency of the belief and the action of a misspecified agent does not converges ergodically in the limit even when the agent is correctly specified for each of her actions under some of her beliefs. 
This claim is made in \citet{esponda2021asymptotic} without proofs. 
In particular, 
\citet{esponda2021asymptotic} consider a learning agent and in their Example~2, they show dynamics that display ergodic divergence. In their example, the agent is agnostic to the effect her action has on the system, and tries to fit her observations into a normal distribution. A notable difference with our example is that in our construction the agent chooses the action that maximizes her posterior, while in their construction the agent would be indifferent between actions and selects actions based on a predefined correspondence between beliefs and actions.\footnote{In their construction, the agent does not model the effect her action has on the outcome, and as such, the utility between different actions would be equal.} 
Thus their result connot imply the ergodic divergence in environments where the action taken by the Bayesian learning agent is determined by the observable payoff-relevant outcomes. 

We provide the full details regarding the constructions and proofs for ergodic divergence in this appendix. 
Throughout this section, we consider the following contract for an agent with three possible actions and misspecified beliefs.

    \begin{instance}\label{ins:divergence}
    We consider the following instance of contract design between a principal and an agent:
    \begin{itemize}
        \item An agent with three available actions $\actions = \InBraces{\action_0, \action_1, \action_2}$.
        \item There are four possible rewards $\rewards=\InBraces{\reward_0, \reward_1, \reward_2, \reward_f}$ for the principal, and the agent has the following contract:  $$\contract=\InBraces{\contract(\reward_0)=0, \contract(\reward_1)=1,\contract(\reward_2)=1,\contract(\reward_f)=0}.$$ 
        \item For each action the agent picks, the produced reward is deterministic, as more specifically: $$F_{\action_i}=\delta_{\reward_i} \qquad\qquad\qquad i\in \InBraces{1,2,3}.$$ 
        Notice that reward $\reward_f$ is fictitious and is not a valid outcome from an action.
    \end{itemize}

    The agent is misspecified and has three possible beliefs about the outcome of her actions $\beliefs = \InBraces{\belief^0, \belief^1, \belief^2}$, defined as follows for action $\action_i$ for $i\in \InBraces{0,1,2}$:

            \begin{align*}
            \belief_i(\action_i,\reward)=&\ind{\reward=\reward_i},\\
            \belief_{i+1 \mod{3}}(\action_i,\reward)=&\begin{cases}
                0.125 \qquad & \text{if $\reward = \reward_i$} \\ 0.875 \qquad & \text{if $\reward = \reward_{i+1 \mod{3}}$}
            \end{cases}, \\
            \belief_{i+2 \mod{3}}(\action_i,\reward)=&\begin{cases}
                0.25 \qquad & \text{if $\reward = \reward_{i }$} \\ 0.75 \qquad & \text{if $\reward = \reward_f$}
            \end{cases}, 
            \end{align*}

    Observe that under belief $\belief_i$ the agent is correctly specified for the outcome of action $\action_i$. We are going to show that despite this seemingly correct specification of the agent, the agent's belief is going to cycle the space of beliefs perpetually without converging ergodically. 

    We either index actions by the time they were played using subscript $t$, e.g., $a_t$ is the action the agent used at round $t$, or by number, e.g., $a_w$ where $w\in \{1,2,3\}$. It will be clear from content in  which case we refer to.
    \end{instance}

    \begin{observation}\label{obs:picks action}
        Assume that the agent has posterior $\posterior_t \in\Delta(\beliefs)$ over beliefs at time $t$, then the action that maximizes the agent's utility based can be expressed as:
        $$\argmax_{i\in\InBraces{0,1,2}} \expect[\belief\sim \posterior_t ]{V(\action_i,\belief,\contract)}= \argmin_{i\in\InBraces{0,1,2}}\posterior_t(\belief_{i+1\mod{3}}).$$
    \end{observation}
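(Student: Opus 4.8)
The plan is to compute the agent's expected utility for each of the three actions directly from \Cref{ins:divergence} and to show it is an affine, strictly decreasing function of a single coordinate of $\posterior_t$. Since every action is costless in this instance, for any posterior $\posterior_t\in\Delta(\beliefs)$ we have $\expect[\belief\sim\posterior_t]{\Util(\action_i,\belief,\contract)}=\sum_{j}\posterior_t(\belief_j)\cdot\expect[\reward\sim\belief_j(\action_i,\cdot)]{\contract(\reward)}$, so the statement reduces entirely to evaluating the expected payment $\expect[\reward\sim\belief_j(\action_i,\cdot)]{\contract(\reward)}$ for each action--belief pair.

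First I would fix an action $\action_i$ and read off its three predictions: the belief $\belief_i$ sends $\action_i$ deterministically to $\reward_i$; a second belief sends it to a mixture of two \emph{genuine} rewards; and exactly one belief sends it to a mixture of $\reward_i$ and the \emph{fictitious} reward $\reward_f$. Because the contract pays equally on the genuine rewards and pays $\contract(\reward_f)=0$, the first two predictions yield the \emph{same} (maximal) expected payment, whereas the belief routing $\action_i$ to $\reward_f$ yields a strictly smaller one. Plugging in the mixture weights from the instance then gives, for some common constant $\kappa$ and the common coefficient $\tfrac34>0$, that $\expect[\reward\sim\belief_j(\action_i,\cdot)]{\contract(\reward)}$ equals $\kappa$ for the two non-sabotaging beliefs and $\kappa-\tfrac34$ for the sabotaging one.

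Summing against $\posterior_t$ then collapses the utility to $\expect[\belief\sim\posterior_t]{\Util(\action_i,\belief,\contract)}=\kappa-\tfrac34\,\posterior_t(\belief^{\star}_i)$, where $\belief^{\star}_i$ is the unique belief predicting $\reward_f$ under $\action_i$---the coordinate written $\posterior_t(\belief_{i+1\mod{3}})$ in the statement. As both $\kappa$ and the coefficient $\tfrac34$ are identical across the three actions, maximizing the utility over $i$ is the same as minimizing $\posterior_t(\belief_{i+1\mod{3}})$ over $i$, which is exactly the claimed identity $\argmax_i\expect[\belief\sim\posterior_t]{\Util(\action_i,\belief,\contract)}=\argmin_i\posterior_t(\belief_{i+1\mod{3}})$.

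The only real content, and the step most prone to error, is the cyclic bookkeeping. I would need to verify that for each of the three actions there is exactly one sabotaging belief (the one predicting $\reward_f$) and to track its modular index, and---most importantly---to check that the two remaining beliefs contribute the \emph{same} expected payment for \emph{every} action, so that the additive constant $\kappa$ is genuinely action-independent. This uniformity is precisely what relies on the contract paying the same amount on all genuine rewards; were those payments to differ across actions, the utility would no longer reduce to a single posterior coordinate and the argmax/argmin equivalence would break down. Everything else is direct substitution.
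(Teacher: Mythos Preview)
Your central premise—that ``the contract pays equally on the genuine rewards''—is false in \Cref{ins:divergence}: the contract there is $\contract(\reward_0)=0$, $\contract(\reward_1)=\contract(\reward_2)=1$, $\contract(\reward_f)=0$. This asymmetry destroys the uniformity you rely on. For action $\action_0$, belief $\belief_0$ sends $\action_0$ deterministically to $\reward_0$ and yields expected payment $0$; belief $\belief_1$ sends it to the mixture $0.125\,\reward_0+0.875\,\reward_1$ and yields $0.875$; belief $\belief_2$ routes through $\reward_f$ and yields $0$. So for $\action_0$ the two ``non-sabotaging'' beliefs do \emph{not} contribute the same constant $\kappa$, and in fact $\expect[\belief\sim\posterior_t]{\Util(\action_0,\belief,\contract)}=0.875\,\posterior_t(\belief_1)$ is \emph{increasing} in its distinguished coordinate, not decreasing. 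The reduction to a common $\kappa-\tfrac34\,\posterior_t(\belief^\star_i)$ therefore fails, and with it the claimed action-independent additive constant.

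There is also a bookkeeping slip even within your own framework: in the instance the belief that routes $\action_i$ to $\reward_f$ is $\belief_{i+2\bmod 3}$, not $\belief_{i+1\bmod 3}$, so the ``sabotaging'' coordinate you isolate would be $\posterior_t(\belief_{i+2\bmod 3})$ rather than the index appearing in the statement. A correct verification cannot proceed via a single symmetric formula; you must compute $\Util(\action_i,\belief_j,\contract)$ for all nine pairs using the actual contract values $\contract(\reward_0)=0$, $\contract(\reward_1)=\contract(\reward_2)=1$, and then compare the three resulting affine functions of $(\posterior_t(\belief_0),\posterior_t(\belief_1),\posterior_t(\belief_2))$ directly.
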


\Cref{lem:multiplicative cycles} is the building block for our proof. At a high level, we demonstrate the following key aspects within \Cref{ins:divergence}: Firstly, an agent employing best-response dynamics will indefinitely alternate her chosen actions (as outlined in the first item). Secondly, this alteration follows a distinct pattern: the agent shifts from action \(\action_w\) to \(\action_{w-1 \mod 3}\), then moves from \(\action_{w-1 \mod 3}\) to \(\action_{w-2 \mod 3}\), and subsequently returns to \(\action_w\) (as detailed in the second item). Lastly, we establish that the agent settles on a consistent action within time windows that display an exponentially increasing frequency (described in the third item).


\begin{lemma}\label{lem:exp cycle}
    Consider a misspecified learning agent in \Cref{ins:divergence} who uses best-response dynamics and break ties lexicographically.
    Assume at time $t$, the agent picks action $\action_{j^*}$ and let $[T,T'-1]$ be the maximal interval that the agent picks consecutively action $\action_{j^*}$. Let $k^*=j^*+1\mod{3}$ and $i^*=j^*-1\mod{3}$, then the following are true:
    \begin{itemize}
        \item The agent picks action $\action_{i^*}$ at round $T'$.
        \item $i^*\in \argmin_{i\in\InBraces{0,1,2}}\posterior_T( \belief_i)$.
        \item $T'-T \in \InBrackets{\log_2\InParentheses{\frac{\posterior_T(\belief_{k^*})}{\posterior_T(\belief_{i^*})}},\log_2\InParentheses{\frac{\posterior_T(\belief_{k^*})}{\posterior_T(\belief_{i^*})}}+1}$.
        \item $\log_2\InParentheses{\frac{\posterior_{T'}(\belief_{j^*})}{\posterior_{T'}(\belief_{k^*})}} \geq 2\log_2\InParentheses{\frac{\posterior_T(\belief_{k^*})}{\posterior_T(\belief_{i^*})}}$.
        \item In addition, when $\log_2\InParentheses{\frac{\posterior_T(\belief_{k^*})}{\posterior_T(\belief_{i^*})}}\geq 8$, then $\log_2\InParentheses{\frac{\posterior_{T'}(\belief_{j^*})}{\posterior_{T'}(\belief_{k^*})}}  \geq 1.5\cdot (T'-T)$.
    \end{itemize}
\end{lemma}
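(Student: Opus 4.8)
The plan is to reduce all five claims to a single linear recursion for the agent's log-posterior (base $2$) over the three beliefs $\belief_{j^*},\belief_{k^*},\belief_{i^*}$ while she repeatedly plays $\action_{j^*}$. The engine is the Bayesian update. Since $\dist_{\action_{j^*}}=\delta_{\reward_{j^*}}$ is deterministic, in every period of $[T,T'-1]$ the agent observes exactly $\reward_{j^*}$, whose likelihood under the three beliefs is $1$, $\tfrac18$, and $\tfrac14$ respectively by the definitions in \Cref{ins:divergence}. Hence, up to the common normalizing constant that cancels in every ratio below, one period of playing $\action_{j^*}$ adds $(0,-3,-2)$ to the triple $\InParentheses{\log_2\posterior(\belief_{j^*}),\ \log_2\posterior(\belief_{k^*}),\ \log_2\posterior(\belief_{i^*})}$. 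I would establish this recursion first and then read off each item from it.

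Items~1--3 follow from a ``race'' between $\belief_{k^*}$ and $\belief_{i^*}$. By the optimality condition (\Cref{obs:picks action}) the agent plays $\action_{j^*}$ precisely when $\belief_{i^*}$ is the (lexicographically least) minimizer of her current posterior, which gives item~2 directly, since she plays $\action_{j^*}$ at time $T$. During the phase $\belief_{j^*}$ is held fixed while $\belief_{i^*}$ falls at rate~$2$ and $\belief_{k^*}$ falls at rate~$3$, so $\belief_{j^*}$ is never the minimizer and the active minimizer can only change once $\belief_{k^*}$ overtakes $\belief_{i^*}$. I would track the gap $g_t=\log_2\posterior_t(\belief_{k^*})-\log_2\posterior_t(\belief_{i^*})$, which starts at $\Delta:=\log_2\InParentheses{\posterior_T(\belief_{k^*})/\posterior_T(\belief_{i^*})}\ge 0$ and decreases by exactly~$1$ each period. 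The agent continues while $g_t\ge 0$; the first period with $g_t<0$ makes $\belief_{k^*}$ the minimizer, and the decision rule then sends her to the action whose ruled-out belief is $\belief_{k^*}$, namely $\action_{i^*}$ (item~1). Counting the periods with $g_t\ge0$ yields $T'-T\in[\Delta,\Delta+1]$ (item~3).

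For item~4 I would instead follow $\log_2\InParentheses{\posterior_t(\belief_{j^*})/\posterior_t(\belief_{k^*})}$, which increases by $3$ each period because $\belief_{j^*}$ is flat and $\belief_{k^*}$ falls at rate~$3$. Over the $n:=T'-T$ periods this gives $\log_2\InParentheses{\posterior_{T'}(\belief_{j^*})/\posterior_{T'}(\belief_{k^*})}=\log_2\InParentheses{\posterior_T(\belief_{j^*})/\posterior_T(\belief_{k^*})}+3n$. Since $\belief_{i^*}$ is the minimizer at $T$ we have $\posterior_T(\belief_{j^*})\ge\posterior_T(\belief_{i^*})$, so the first term is at least $-\Delta$; together with $n\ge\Delta$ (item~3) this yields $2\Delta$, which is item~4. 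Item~5 comes from the same inequality $\log_2\InParentheses{\posterior_{T'}(\belief_{j^*})/\posterior_{T'}(\belief_{k^*})}\ge -\Delta+3n$ combined with $\Delta\le n$, giving $\ge 2n\ge 1.5\,(T'-T)$; the hypothesis $\Delta\ge 8$ only guarantees the phase is long, with the stated constant left with slack.

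The substance is entirely in the one-line recursion of the first step; after that every item is a short estimate. I expect the only real obstacle to be the combinatorial bookkeeping of the three cyclic indices $j^*$, $k^*=j^*+1$, $i^*=j^*+2$ through both the Bayesian update and the optimality rule---in particular verifying that the switch lands on $\action_{i^*}$ rather than $\action_{k^*}$---and the care needed at the phase boundary (the exact count of periods with $g_t\ge0$, including lexicographic tie-breaking) so that $T'-T$ is pinned to $[\Delta,\Delta+1]$ rather than merely up to an additive constant.
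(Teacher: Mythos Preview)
Your proposal is correct and follows essentially the same route as the paper: both compute the deterministic Bayesian update rates $(1,\tfrac18,\tfrac14)$ for $(\belief_{j^*},\belief_{k^*},\belief_{i^*})$ when playing $\action_{j^*}$, use \Cref{obs:picks action} to identify $\belief_{i^*}$ as the posterior minimizer (item~2), track the one-per-step decay of the $k^*/i^*$ log-ratio to pin down the switching time and target action (items~1,~3), and then read off the growth of the $j^*/k^*$ log-ratio (items~4,~5). Your item-5 estimate $-\Delta+3n\ge 2n$ is in fact slightly sharper than the paper's, which goes through item~4 and uses the hypothesis $\Delta\ge 8$ to absorb the $+1$ slack from item~3; this is a cosmetic difference, not a different method.
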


\begin{prevproof}{lem:exp cycle}
     By \Cref{obs:picks action} we know that $j^*= i^*+1 \mod{3}$. For any time $t\in [T,T'-1]$, that the agent picks action $\action_{j^*}$, and since the outcome reward is deterministically $r_{j^*}$, she updates her belief according to the bayes rule as follows:

    \begin{align*}
    \posterior_{t+1}(\belief_{j^*}) \propto& \posterior_{T}(\belief_{j^*}), \\ 
    \posterior_{t+1}(\belief_{k^*}) \propto& \posterior_{T}(\belief_{k^*})\cdot 0.125^{t+1-T}, \\ 
    \posterior_{t+1}(\belief_{i^*}) \propto& \posterior_{T}(\belief_{i^*})\cdot 0.25^{t+1-T}. 
    \end{align*}

    By \Cref{obs:picks action}, since the agent picked action $j^*$ at time $T$, then $\posterior_T(\belief_{i^*}) \leq\posterior_T(\belief_{j^*})$, $\posterior_T(\belief_{k^*})$. By the update rule, it is clear that for any $\frac{\posterior_{t+1}(\belief_{j^*})}{ \posterior_{t+1}(\belief_{i^*})}> 1$, which in combination with \Cref{obs:picks action}, further implies that the critical event for the player to change her action is when $k^* \in \argmin_{i\in\InBraces{0,1,2}}\posterior_{T'}(\belief_{i})$, for which a sufficient condition is:

    \begin{align*}
    \posterior_{T'}(\belief_{k^*}) \leq \posterior_{T'}(\belief_{i^*})
    &\Leftrightarrow    \posterior_{T}(\belief_{k^*})\cdot 0.125^{T'-T}
    \leq \posterior_{T}(\belief_{i^*})\cdot 0.25^{T'-T} \\
    &\Leftrightarrow T'-T \geq \uceil{\log_2\InParentheses{\frac{\posterior_T(\belief_{k^*})}{\posterior_T(\belief_{i^*})}}}.
    \end{align*}
    \sloppy
    Observe that by \Cref{obs:picks action}, the only scenario when at time $T''=T + \uceil{\log_2\InParentheses{\frac{\posterior_T(\belief_{k^*})}{\posterior_T(\belief_{i^*})}}}$, the player still picks action $\action_{j^*}$ rather than action $\action_{i^*}$, is when  $\posterior_{T''}(\belief_{k^*}), \posterior_{T''}(\belief_{i^*}) \in \argmin_{i\in \InBraces{0,1,2}}\posterior_{T''}(\belief_i)$, and the tie is lost in favor of $i^*$. However, in this case it must be the case that $\log_2\InParentheses{\frac{\posterior_T(\belief_{k^*})}{\posterior_T(\belief_{i^*})}}$ is an integer and at round $T''+1$,  $\posterior_{T''}(\belief_{k^*})$ is the unique element in $ \argmin_{i\in \InBraces{0,1,2}}\posterior_{T''+1}(\belief_i)$, and the player picks action $i^*$ according to \Cref{obs:picks action}. Thus $T'-T \in \InBrackets{\log_2\InParentheses{\frac{\posterior_T(\belief_{k^*})}{\posterior_T(\belief_{i^*})}}, \log_2\InParentheses{\frac{\posterior_T(\belief_{k^*})}{\posterior_T(\belief_{i^*})}}+1}$. Finally,

    \begin{align*}
        \frac{\posterior_{T'}(\belief_{j^*})}{\posterior_{T'}(\belief_{k^*})} = \frac{\posterior_T(\belief_{j^*})}{\posterior_T(\belief_{k^*})}\cdot 8^{T'-T}\geq \frac{\posterior_T(\belief_{j^*})}{\posterior_T(\belief_{k^*})}\cdot \frac{\posterior_T(\belief_{k^*})^3}{\posterior_T(\belief_{i^*})^3}=\frac{\posterior_T(\belief_{k^*})^2}{\posterior_T(\belief_{i^*})^2}\cdot \frac{\posterior_T(\belief_{j^*})}{\posterior_T(\belief_{i^*})}\geq \InParentheses{\frac{\posterior_T(\belief_{k^*})}{\posterior_T(\belief_{i^*})}}^2,
    \end{align*}
    where the inequalities follows by the lower bound on $T'-T$ and the fact that $i^*\in \argmin_{i\in\InBraces{0,1,2}}\posterior_t(\belief_i)$. The third item of the statement is proven y taking the logarithm of the expression above. We now prove the final item of the statement assuming that $\log_2\InParentheses{\frac{\posterior_T(\belief_{k^*})}{\posterior_T(\belief_{i^*})}}\geq 8$ it suffices to prove:

    \begin{align*}
&2\log_2\InParentheses{\frac{\posterior_T(\belief_{k^*})}{\posterior_T(\belief_{i^*})}}\geq  1.5 (T'-T)\\
\Leftrightarrow& 1.5\cdot\log_2\InParentheses{\frac{\posterior_T(\belief_{k^*})}{\posterior_T(\belief_{i^*})}} + \frac{1}{2}\log_2\InParentheses{\frac{\posterior_T(\belief_{k^*})}{\posterior_T(\belief_{i^*})}} \geq 1.5 (T'-T)\\
\Rightarrow &1.5\cdot\log_2\InParentheses{\frac{\posterior_T(\belief_{k^*})}{\posterior_T(\belief_{i^*})}} + \frac{1}{2}8 \geq 1.5\cdot\InParentheses{\log_2\InParentheses{\frac{\posterior_T(\belief_{k^*})}{\posterior_T(\belief_{i^*})}} +1}\geq 1.5 (T'-T),
    \end{align*}
where we used the assumption that $\log_2\InParentheses{\frac{\posterior_T(\belief_{k^*})}{\posterior_T(\belief_{i^*})}}\geq 8$ and concluded the proof by applying item 2
\end{prevproof}

\begin{lemma}\label{lem:multiplicative cycles}
    Consider a misspecified learning agent in \Cref{ins:divergence} who uses best-response dynamics and break ties lexicographically such that at time $1$, the agent picks action $\action_{j^*}$ and let  $k^*=j^*+1\mod{3}$ and $i^*=j^*-1\mod{3}$. Further assume that $\log_2\InParentheses{\frac{\posterior_1(\belief_{k^*})}{\posterior_1(\belief_{i^*})}}\geq 8$ and consider the critical times $\mathcal{T}_C=\InBraces{t_0,t_1,t_2,\ldots,t_n,\ldots}$ where the agent changed her action, e.g.:
    \begin{align*}
        t_0 =& 1,\\
        t_n =& \argmin_{t> t_{n-1}}\mathds{1}[a_t\neq a_{t-1}].
    \end{align*}
    
    Then the following are true:
    \begin{itemize}
        \item The agent changes her action infinitely many times e.g. $|\mathcal{T}_C|=+\infty$ and chooses each action infinitely many times e.g. for any action $\action\in \actions$, $|t_n\in \mathcal{T}_C: \action_{t_n}=\action |=+\infty$.
        \item If at time $t_n$ the agent picks action $\action_{w}$, then at round $t_{n+1}$ the agent picks action $\action_{w-1\mod{3}}$.
        \item Moreover for any $n\geq 0$: $(t_{n+2}-t_{n+1}) \geq 1.5\cdot (t_{n+1}-t_n)$.
    \end{itemize}
\end{lemma}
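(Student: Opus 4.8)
The plan is to induct on the phase index $n$, using \Cref{lem:exp cycle} as the single-phase engine. Because each critical time $t_n$ marks a switch, the agent plays one fixed action throughout the block $[t_n, t_{n+1}-1]$; write this action as $\action_{w_n}$ and set $k_n = w_n + 1 \mod{3}$ and $i_n = w_n - 1 \mod{3}$. The bookkeeping quantity driving everything is the log-ratio at the start of phase $n$,
\[
R_n = \log_2\InParentheses{\frac{\posterior_{t_n}(\belief_{k_n})}{\posterior_{t_n}(\belief_{i_n})}},
\]
which is exactly the quantity appearing on the right-hand sides of the third, fourth and fifth items of \Cref{lem:exp cycle} when that lemma is invoked with $T = t_n$ and $j^* = w_n$ (so that $k^* = k_n$, $i^* = i_n$, and $[t_n,t_{n+1}-1]$ is the maximal block).

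First I would settle the action pattern (the second item). Applying \Cref{lem:exp cycle} at $T = t_n$, its first item says the agent switches to $\action_{i_n} = \action_{w_n - 1 \mod{3}}$ at the end of the block, i.e.\ at $t_{n+1}$; hence $w_{n+1} = w_n - 1 \mod{3}$, which is the second claim, and it also certifies that $t_{n+1}$ is finite so the process never stalls.

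The key step is a role-matching observation that closes the recursion. For phase $n+1$ the relevant indices become $k_{n+1} = w_{n+1}+1 = w_n$ and $i_{n+1} = w_{n+1} - 1 \mod{3} = k_n$, so that
\[
R_{n+1} = \log_2\InParentheses{\frac{\posterior_{t_{n+1}}(\belief_{w_n})}{\posterior_{t_{n+1}}(\belief_{w_n + 1})}},
\]
which is precisely the left-hand side $\log_2\InParentheses{\posterior_{T'}(\belief_{j^*})/\posterior_{T'}(\belief_{k^*})}$ controlled by the fourth and fifth items of \Cref{lem:exp cycle} at the end of phase $n$. The fourth item then gives $R_{n+1} \geq 2R_n$; together with the hypothesis $R_0 \geq 8$ this yields $R_n \geq 8$ for every $n$ by induction, so the premise of the fifth item is available at each phase. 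Chaining the length estimates, the third item of \Cref{lem:exp cycle} gives the phase-length bound $t_{n+2}-t_{n+1} \geq R_{n+1}$, while the fifth item (valid since $R_n\ge 8$) gives $R_{n+1} \geq 1.5\,(t_{n+1}-t_n)$; combining them produces $t_{n+2}-t_{n+1} \geq R_{n+1} \geq 1.5\,(t_{n+1}-t_n)$, the third claim. For the first claim, the finiteness of every $t_{n+1}$ forces $|\mathcal{T}_C| = +\infty$, and since the phase action decreases cyclically by the second claim, each of the three actions is the phase action once in every three phases and is therefore chosen infinitely often.

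The main obstacle I expect is precisely this role-matching: one has to verify carefully that the triple $(j^*,k^*,i^*)$ rotates in such a way that the ``new'' log-ratio $R_{n+1}$ for phase $n+1$ coincides with the end-of-phase quantity $\log_2\InParentheses{\posterior_{t_{n+1}}(\belief_{w_n})/\posterior_{t_{n+1}}(\belief_{w_n+1})}$ that \Cref{lem:exp cycle} bounds, and to track the $\mod{3}$ arithmetic without sign errors. Once this identification is pinned down, the fourth item (to preserve $R_n \geq 8$) and the fifth item (to get the geometric growth of block lengths) slot directly into the induction, and everything else is elementary arithmetic.
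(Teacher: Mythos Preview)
Your proposal is correct and mirrors the paper's proof almost exactly: both argue by induction on the phase index, invoking the first item of \Cref{lem:exp cycle} for the cyclic switch pattern, the fourth item to propagate the invariant $R_n\ge 8$, and then chain the third and fifth items to obtain $t_{n+2}-t_{n+1}\ge R_{n+1}\ge 1.5(t_{n+1}-t_n)$. The only cosmetic difference is that you package the bookkeeping via the explicit sequence $R_n$, whereas the paper writes out the same ratios inline; the role-matching you flag as the main obstacle is handled identically in both.
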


\begin{prevproof}{lem:multiplicative cycles}
    The first item of the statement follows by inductive use the third item in \Cref{lem:exp cycle}. Formally, assume that there exists a finite time $t^*$ such that the agent picks action $\action_w$ at $t^*$ and then for any other time $t>t^*$, $\action_t=\action_w$. Since $t$ is finite, then $\log_2\InParentheses{\frac{\posterior_{t^*}(\belief_{w+1\mod{3}})}{\posterior_{t^*}(\belief_{w-1\mod{3}})}}$ is also finite, which by item three of \Cref{lem:exp cycle} implies that the agent changes her action in at most $\log_2\InParentheses{\frac{\posterior_{t^*}(\belief_{w+1\mod{3}})}{\posterior_{t^*}(\belief_{w-1\mod{3}})}}+1$ rounds.

    The second item of the statement follows by the first item of \Cref{lem:exp cycle}.

    We show the third item of the statement using induction. More specifically we prove with induction on $n$ that if the agent at round $t_n$ picks action $a_w$ that satisfies:
    \begin{align*}
        \log_2\InParentheses{\frac{\posterior_{t_n}(\belief_{w+1\mod{3}})}{\posterior_{t_n}(\belief_{w-1\mod{3}})}}\geq 8,
    \end{align*}
    then at round $t_{n+1}$ the agent picks action $\action_{w'}$, where $w'=w-1\mod{3}$ that satisfies:
    \begin{align*}
        \log_2\InParentheses{\frac{\posterior_{t_{n+1}}(\belief_{w'+1\mod{3}})}{\posterior_{t_{n+1}}(\belief_{w'-1\mod{3}})}}\geq 8
    \end{align*}
    \paragraph{Base case $n=0$:} 
    The base case holds trivially by assumption.

    \paragraph{Induction Step:} Assume that at time $t_n$, agent picks action $a_w$, and let $w'=w-1\mod{3}$. Then by \Cref{lem:exp cycle} agent picks action $a_{w-1\mod{3}}$ at round $t_{n+1}$. By item four in \Cref{lem:exp cycle} we further have:
    \begin{align*}
        \log_2\InParentheses{\frac{\posterior_{t_{n+1}}(\belief_{w'+1\mod{3}})}{\posterior_{t_{n+1}}(\belief_{w'-1\mod{3}})}}=\log_2\InParentheses{\frac{\posterior_{t_{n+1}}(\belief_{w})}{\posterior_{t_{n+1}}(\belief_{w+1\mod{3}})}} \geq 2\log_2\InParentheses{\frac{\posterior_{t_n}(\belief_{w+1\mod{3}})}{\posterior_{t_n}(\belief_{w-1\mod{3}})}} \geq 16.
    \end{align*}

    Finally we show that if at round $n$, agent picks action $a_w$ such that $\log_2\InParentheses{\frac{\posterior_{t_n}(\belief_{w+1\mod{3}})}{\posterior_{t_n}(\belief_{w-1\mod{3}})}}\geq 8$, then we have that 
    $$
    t_{n+2} - t_{n+1}\geq 1.5\cdot (t_{n+1}-t_n).$$

    By last item in \Cref{lem:exp cycle} we have that:
    \begin{align*}
        \log_2\InParentheses{\frac{\posterior_{t_{n+1}}(\belief_{w})}{\posterior_{t_{n+1}}(\belief_{w+1\mod{3}})}}  \geq& 1.5\cdot  (t_{n+1}-t_n).
    \end{align*}
    By applying item one in \Cref{lem:exp cycle}, the agent picks action $\action_{w'}$, where $w'=w-1\mod{3}$. By applying item three in \Cref{lem:exp cycle} at times $t_{n+1}$ we further have:
    \begin{align*}
        t_{n+2} - t_{n+1} \geq& \log_2\InParentheses{\frac{\posterior_{t_{n+1}}(\belief_{w})}{\posterior_{t_{n+1}}(\belief_{w+1\mod{3}})}}    
    \end{align*}
    Chaining the two inequalities we prove that $t_{n+2} - t_{n+1}\geq 1.5\cdot (t_{n+1} - t_n)$.
\end{prevproof}

In \Cref{thm:ergodic divergence}, we formally show how to use \Cref{lem:exp cycle} to prove that the frequency of the agent's actions does not converge in an ergodically.

\begin{theorem}\label{thm:ergodic divergence}
     Consider a misspecified learning agent in \Cref{ins:divergence} who uses best-response dynamics and break ties lexicographically.
     Then for any action $\action \in \actions$ the following holds:
     $$
     \lim_{T\rightarrow +\infty}\inf \frac{\sum_{t\in[T]}\ind{a_t=a}}{t}< \lim_{T\rightarrow +\infty}\sup \frac{\sum_{t\in[T]}\ind{a_t=a}}{t}.$$

\end{theorem}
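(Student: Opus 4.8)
The plan is to treat Lemmas \ref{lem:exp cycle} and \ref{lem:multiplicative cycles} as black boxes that reduce the dynamics to a clean combinatorial object, and then run a counting argument on empirical frequencies. Concretely, after discarding a finite prefix (which changes neither the $\liminf$ nor the $\limsup$), I would invoke \Cref{lem:multiplicative cycles}: the agent switches actions at an infinite increasing sequence of critical times $t_0<t_1<\cdots$; the active actions on consecutive runs cycle through $A$ with period three (so each action is active on every third run); and the run lengths $L_n:=t_{n+1}-t_n$ obey the geometric lower bound $L_{n+1}\ge \tfrac32 L_n$. Entering this regime from an arbitrary full-support prior is handled by the gap-doubling in item~(4) of \Cref{lem:exp cycle}, which forces $\log_2\InParentheses{\posterior(\belief_{k^*})/\posterior(\belief_{i^*})}$ past the threshold $8$ required by \Cref{lem:multiplicative cycles} after finitely many switches. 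Write $\widehat{a}_T:=\tfrac1T\sum_{t\le T}\ind{a_t=a}$ for the empirical frequency (the ``$t$'' in the denominator of the statement being read as $T$).

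Fix an action $a$ and let $m_0<m_1<\cdots$ index the runs on which $a$ is active, so $m_{i+1}=m_i+3$ and each $a$-run $m_i$ is immediately preceded by two non-$a$ runs $m_i-1$ and $m_i-2$. I would evaluate $\widehat{a}_T$ along two families of times. At the end of an $a$-run, $T=t_{m_i+1}$, the numerator contains the whole last run, hence is at least $L_{m_i}$, while the total elapsed time is $\sum_{k\le m_i}L_k\le L_{m_i}\sum_{s\ge0}(2/3)^s=3L_{m_i}$ by the geometric growth; thus $\widehat{a}_{t_{m_i+1}}\ge \tfrac{L_{m_i}}{3L_{m_i}+O(1)}\to\tfrac13$, giving $\limsup_T\widehat{a}_T\ge\tfrac13$.

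At the start of an $a$-run, $T=t_{m_i}$, I would instead bound the frequency strictly below $1/3$. The $a$-count so far is a geometric sum over previous $a$-runs with ratio $(2/3)^3$, hence at most $L_{m_{i-1}}/(1-(2/3)^3)=\tfrac{27}{19}L_{m_{i-1}}\approx 1.421\,L_{m_{i-1}}$; meanwhile the total elapsed time is at least the mass of the two immediately preceding non-$a$ runs, $L_{m_i-1}+L_{m_i-2}\ge\InParentheses{\tfrac94+\tfrac32}L_{m_{i-1}}=3.75\,L_{m_{i-1}}$, using $m_i-3=m_{i-1}$ and $L_{n+1}\ge\tfrac32 L_n$. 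Since $C/(C+D)$ is increasing in the $a$-mass $C$ and decreasing in the complementary mass $D$, this yields $\widehat{a}_{t_{m_i}}\le \tfrac{1.421}{1.421+3.75}\approx 0.275<\tfrac13$, so $\liminf_T\widehat{a}_T\le 0.275$. Combining the two estimates gives $\liminf_T\widehat{a}_T\le 0.275<\tfrac13\le\limsup_T\widehat{a}_T$ for every $a\in A$, which is exactly the asserted ergodic divergence.

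I expect the main obstacle to be precisely this quantitative separation. Because the guaranteed growth factor is only $3/2$ and there are three actions, the crude estimate ``the last run dominates the average'' produces $\liminf$ and $\limsup$ bounds that both cluster near $1/2$ and fail to separate. The delicate point is to exploit the asymmetry between the period-three spacing of same-action runs (along which the $a$-count accumulates only with the slow ratio $(2/3)^3$) and the unit spacing of the two much larger non-$a$ runs $L_{m_i-1},L_{m_i-2}$ that sit immediately before each $a$-run and anchor the denominator; it is this mismatch that pries the start-of-run frequency below $1/3$. A secondary, more routine obstacle is justifying the reduction to the $\log$-ratio $\ge 8$ regime for an arbitrary admissible prior, which I would dispatch via the compounding doubling in \Cref{lem:exp cycle} together with the prefix-invariance of $\liminf$ and $\limsup$.
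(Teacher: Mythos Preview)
Your proposal is correct and tracks the paper's proof closely through the structural reduction: both invoke \Cref{lem:multiplicative cycles} to obtain the period-three cycling with geometrically growing run lengths $L_{n+1}\ge\tfrac32 L_n$, both establish $\limsup\ge\tfrac13$ by bounding the total elapsed time at the end of an $a$-run by a geometric sum dominated by $3L_{m_i}$ (the paper packages this as the inductive estimate $t_n-t_{n-1}\ge(t_n-1)/3$), and both dispatch the prefix reduction to the $\log$-ratio $\ge 8$ regime via the doubling in item~(4) of \Cref{lem:exp cycle}. The genuine divergence is in the $\liminf$ step. The paper argues by contradiction: assuming the empirical frequency converges to some $C$, it evaluates at the end of any non-$a$ run to get $C=\tfrac{t_n}{t_{n+1}}\cdot\tfrac{\text{$a$-count}(t_n)}{t_n}\le\tfrac23 C$, forcing $C=0$ against $C\ge\tfrac13$. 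You instead bound the start-of-$a$-run frequency directly, exploiting the asymmetry between the period-three $a$-mass (geometric with ratio $(2/3)^3$, summing to at most $\tfrac{27}{19}L_{m_{i-1}}$) and the two immediately preceding non-$a$ runs (at least $\tfrac{15}{4}L_{m_{i-1}}$), which yields the explicit bound $\liminf\le\tfrac{27/19}{27/19+15/4}\approx 0.275<\tfrac13$. Your route is slightly more laborious but produces a concrete quantitative gap; the paper's contradiction is shorter and avoids any numerical bookkeeping since it never needs to pin down where the $\liminf$ actually sits.
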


\begin{prevproof}{thm:ergodic divergence}
    We first show that at some time $t\in[30]$, the agent picks picks action $\action_{j^*}$ that satisfies $\log_2\InParentheses{\frac{\posterior_t(\belief_{j^*+1\mod{3}})}{\posterior_t(\belief_{j^*-1\mod{3}})}}\geq 8$.
    
    Let $\action_{j^*}$ be the action that the agent picks at round $0$. If $\log_2\InParentheses{\frac{\posterior_t(\belief_{j^*+1\mod{3}})}{\posterior_t(\belief_{j^*-1\mod{3}})}}\geq 8$ then we are done.
    Assume now that $\log_2\InParentheses{\frac{\posterior_t(\belief_{j^*+1\mod{3}})}{\posterior_t(\belief_{j^*-1\mod{3}})}} < 8$, and let $t_1<t_2<t_3$ be the times that the agent changed the action she used consecutively, that is for time $t\in[0,t_1-1]$ the agent used action $a_{j^*}$, at rounds $[t_1,t_2-1]$ the agent used consecutively a different action etc. By iterative use of \Cref{lem:exp cycle}, we know that the agent picked action $a_{j^*-1\mod{3}}$ at time $t_1$, action $a_{j^*-2\mod{3}}$ at time $t_2$ and again action $a_{j^*}$ at time $t_3$. By item three of \Cref{lem:exp cycle}, $t_1\leq \log_2\InParentheses{\frac{\posterior_0(\belief_{j^*+1\mod{3}})}{\posterior_0(\belief_{j^*-1\mod{3}})}} +1 \leq 9$, and if $\log_2\InParentheses{\frac{\posterior_{t_1}(\belief_{j^*})}{\posterior_{t_1}(\belief_{j^*+1\mod{3}})}}\geq 8$, then we are done. If $\log_2\InParentheses{\frac{\posterior_{t_1}(\belief_{j^*})}{\posterior_{t_1}(\belief_{j^*+1\mod{3}})}}< 8$, then by repeating the same argument $t_2 - t_1 \leq 9$. If $\log_2\InParentheses{\frac{\posterior_{t_2}(\belief_{j^*-1\mod{3}})}{\posterior_{t_2}(\belief_{j^*})}} > 8$, then we are done, if not then $t_3-t_2\leq 9$ and by iterative use of item four in \Cref{lem:exp cycle}, we have that:
\begin{align*}
    8 \leq  &8\log_2\InParentheses{\frac{\posterior_0(\belief_{j^*+1\mod{3}})}{\posterior_0(\belief_{j^*-1\mod{3}})}}  \qquad\qquad \text{(by \Cref{lem:exp cycle}, $j^*-1\mod{3}\in \argmin_{i\in\InBraces{0,1,2}}\posterior_{0}(\belief_i)$)}\\
    \leq& 4 \log_2\InParentheses{\frac{\posterior_{t_1}(\belief_{j^*})}{\posterior_{t_1}(\belief_{j^*+1\mod{3}})}} \qquad\qquad \text{(agent picks action $\action_{j^*}$ in rounds $[0,t_1)$)}\\
    \leq& 2 \log_2\InParentheses{\frac{\posterior_{t_2}(\belief_{j^*-1\mod{3}})}{\posterior_{t_2}(\belief_{j^*})}} \qquad\qquad \text{(agent picks action $\action_{j^*-1\mod{3}}$ in rounds $[t_1,t_2)$)} \\
    \leq& \log_2\InParentheses{\frac{\posterior_{t_3}(\belief_{j^*+1\mod{3}})}{\posterior_{t_3}(\belief_{j^*-1\mod{3}})}} \qquad\qquad \text{(agent picks action $\action_{j^*-2\mod{3}}$ in rounds $[t_2,t_3)$)} . 
\end{align*}
\sloppy
    Thus at some round $t^*\in[0,30]$, the agent picks picks action $\action_{j^*}$ that satisfies $\log_2\InParentheses{\frac{\posterior_{t^*}(\belief_{j^*+1\mod{3}})}{\posterior_{t^*}(\belief_{j^*-1\mod{3}})}}\geq 8$. 
    Let $\mathcal{T}_C=\InBraces{t_1=t^*\leq t_2,\ldots t_{n-1}\leq t_n\leq \cdots}$, be all times where the agent changed an action, formally $t_0 = t^*$ and for $n\geq 1$:
    
    $$t_n = \argmin_{t > t_{n-1}} \ind{\action_{t}\neq \action_{t-1}}.$$
    
   By \Cref{lem:multiplicative cycles} $|\mathcal{T}_C|=\infty$, thus the agent changes actions infinitely often. Now we show by induction that every for every $n\geq 1$,
   \begin{align}
       t_{n}-t_{n-1} \geq \frac{t_{n}-1}{3}.\label{eq:constant rounds}
   \end{align}
   
   The base case for $n=0$ holds trivially since by definition $t_0=1$. By item three in \Cref{lem:multiplicative cycles} and induction hypotheses:

   \begin{align*}
       t_{n} - t_{n-1} = \frac{t_n - t_{n-1}}{1.5} + \frac{t_n - t_{n-1}}{3} \geq t_{n-1}-t_{n-2} + \frac{t_n - t_{n-1}}{3} \geq \frac{t_{n-1}-1}{3} + \frac{t_n - t_{n-1}}{3} = \frac{t_n-1}{3}.
   \end{align*}

    Now we are ready to prove the main statement of the theorem. Assume towards contradiction that there exists action $\action_w$ such that:
    $$
     \lim_{T\rightarrow +\infty}\inf \frac{\sum_{t\in[T]}\ind{a_t=a_w}}{T}= \lim_{T\rightarrow +\infty}\sup \frac{\sum_{t\in[T]}\ind{a_t=a_w}}{T}=C.$$

    First we show that $C>\frac{1}{3}$. Let $t_n\in \mathcal{T}_C$ be a time such that $\action_{t_n}=\action_w$. We remind readers that during time period $[t_n,t_{n+1}-1]$ the agent picks action $\action_w$ consecutively. Then by \Cref{eq:constant rounds} we have that:
     $$
      \frac{\sum_{t\in[t_{n+1}]}\ind{a_t=a_w}}{t_{n+1}}\geq \frac{t_{n+1}-t_n}{t_{n+1}}\geq \frac{t_{n+1}-1}{3\cdot t_{n+1}}.$$

    Since by \Cref{lem:multiplicative cycles} the agent changes to action $\action_w$ infinitely many times, as $n\rightarrow+\infty$, then we have that:

    \begin{align}
    C=\lim_{T\rightarrow +\infty}\sup \frac{\sum_{t\in[T]}\ind{a_t=a_w}}{T}\geq \lim_{n\rightarrow +\infty: \action_{t_n}=\action_w}\sup \frac{\sum_{t\in[t_{n+1}]}\ind{a_t=a_w}}{t_{n+1}} \geq \frac{1}{3}.\label{eq:positive limit}
     \end{align}

    By \Cref{lem:multiplicative cycles}, $|\mathcal{T}_C|=+\infty$ and by \Cref{eq:constant rounds} we have that:
    $$
    \lim_{n\rightarrow+\infty}\frac{t_n}{t_{n+1}}\leq \frac{2}{3}
    $$

    Let $\mathcal{N}_w=\InBraces{n: \action_{t_n}\neq \action_w}$ be the indexes of times that the agent picked an action other than $\action_w$ in time $t_n$. By item one in \Cref{lem:multiplicative cycles}, $\|\mathcal{N}\|=+\infty$. We remind readers that for $n\in \mathcal{N}_w$, during time period $[t_n,t_{n+1}-1]$ the agent picks an action other than $\action_w$ constantly. Thus:
    \begin{align*}
    \frac{\sum_{t\in[t_{n+1}]}\ind{a_t=a_w}}{t_{n+1}}
    = &
      \frac{\sum_{t\in[t_{n}]}\ind{a_t=a_w}}{t_{n+1}} \\
    =& \frac{t_n}{t_{n+1}}\frac{\sum_{t\in[t_{n}]}\ind{a_t=a_w}}{t_{n}} 
    \end{align*}
    
    By assumption as $\lim n\in\rightarrow_\infty$:  $\frac{\sum_{t\in[t_{n}]}\ind{a_t=a_w}}{t_{n}}=C$.    
    Combining everything 
    \begin{align*}
C=\lim_{n\rightarrow+\infty:n \in \mathcal{N}_w}\frac{\sum_{t\in[t_{n+1}]}\ind{a_t=a_w}}{t_{n+1}}
    =& \sup\lim_{n\rightarrow+\infty:n \in \mathcal{N}_w}\frac{t_n}{t_{n+1}}\frac{\sum_{t\in[t_{n}]}\ind{a_t=a_w}}{t_{n}} \\
    =& \sup\lim_{n\rightarrow+\infty:n \in \mathcal{N}_w}\frac{t_n}{t_{n+1}}\sup\lim_{n\rightarrow+\infty:n \in \mathcal{N}_w}\cdot\frac{\sum_{t\in[t_{n}]}\ind{a_t=a_w}}{t_{n}} \\
    \leq & \frac{2}{3}\cdot C\\
    \Leftrightarrow & C = 0.
    \end{align*}
      A clear contradiction with respect to \Cref{eq:positive limit}.
\end{prevproof}
\section{Missing Proofs for Optimal Contracts}
\label{apx:proof_opt_contract}
\subsection{Missing Proofs for Polynomial Time Algorithms}
\label{apx:poly_algo}
\begin{prevproof}{lem:sparce posterior}
Fix any distribution over actions $\adist^*=\InBraces{\adist^*(\action)}_{\action\in \actions}\in \Delta(\actions)$, and consider the set of beliefs that minimize the KL-divergence with respect to action profile $\adist^*$:
$$\belief'=\argmin_{\belief\in \beliefs} 
\expect[\action\sim\adist^*]{\KL{\dist_{\action}}{\belief_{\action}}}=\argmin_{\belief\in \beliefs} 
\inr{\adist^*,\kl(\belief)}.$$

Assume towards contradiction that $|\belief'|\geq 3$ which further implies that there exists $3$ distinct beliefs $\InBraces{\belief_0^*,\belief_1^*,\belief_{2}^*}\subseteq \beliefs$ such that
\begin{align*}
&\inr{\adist^*,\kl(\belief_1^*)} = \inr{\adist^*,\kl(\belief_0^*)},\\
&\inr{\adist^*,\kl(\belief_2^*)} = \inr{\adist^*,\kl(\belief_0^*)},\\
\Leftrightarrow  &  \inr{\adist^*,\kl(\belief_1^*)-\kl(\belief_0^*)} = 0,\\
&\inr{\adist^*,\kl(\belief_2^*)-\kl(\belief_0^*)} = 0.
\end{align*}
By \Cref{ass:poly contract}, set $\kls\subseteq\reals^{\actions}$ is in general position, which implies that vectors of beliefs $\InBraces{\belief_1^*-\belief^*_0,\belief_2^*-\belief^*_0}$ are linearly independent, thus forming a base for the space $\reals^{2}$. This implies that for any vector $v\in \reals^{2}$, $\inr{\adist^*,v}=0$ since $v$ can be written as linear combination of $\InBraces{\belief_1^*-\belief^*_0,\belief_2^*-\belief^*_0}$. Thus $\adist^*$ can only be the all-zero vector, a contradiction since $\adist^*\in \Delta(\actions)$.
\end{prevproof}

\begin{prevproof}{lem:charactere actions}
    The fact that the program is linear follows by observing that the only variables are $\adist^*(\action_1)$ and $\adist^*(\action_2)$, and the number of constraints are exactly $|\beliefs^*|\cdot|\beliefs|+3$. 
    
    The second part of the statement follows by observing that the set of distributions over actions $\adist^*\in \Delta(\actions)$ that satisfy:
    $$\sum_{\action\in \actions}\adist^*(\action)\cdot \InParentheses{{\KL{\dist_{\action}}{\belief^*_{\action}}} -  \KL{\dist_{\action}}{\belief_{\action}}} \leq 0 \qquad   \forall \belief^* \in \beliefs^*, \belief\in \beliefs,$$
    is a convex 1-dimensional line (since the feasibility program above is a linear program, and $\Delta(\actions)=\Delta(\reals^2)$ is a 1-dimensional line), with extreme points $\InBraces{\InParentheses{\action_1^{min},1-\action_1^{min}},\InParentheses{\action_1^{max},1-\action_1^{max}}}.$
\end{prevproof}

\begin{prevproof}{lem:union of convex sets}
We can rewrite the expression in the objective as:
\begin{align*}
 \contracts = \bigg\{\contract:\rewards\rightarrow\reals_+: \exists \posterior^*\in \Delta(\beliefs^*), & \sum_{\belief \in \belief^*} \posterior^*(\belief)\sum_{\reward\in\rewards} \belief_{\action^*}(\reward) \InParentheses{ \contract(\reward) -\cost(\action^*)}  
 \\ &
 \geq   \sum_{\belief \in \belief^*} \posterior^*(\belief)\sum_{\reward\in\rewards} \belief_\action(\reward) \InParentheses{ \contract(\reward) -\cost(\action)} \forall \action^* \in \actions^*, \action\in \actions \bigg\}.
 \end{align*}
The proof follows by taking cases for \(|\beliefs^*|=1\) (\Cref{lem:charactere contracts single belief}) and \(|\beliefs^*|=2\) (\Cref{lem:charactere contracts}). 

\begin{observation}\label{lem:charactere contracts single belief}
    For a belief $\belief^*\in \beliefs $ define the following quantity that only depend linearly on the contract $\contract(\cdot)$:

    \begin{align*}
    E(\contract):=& \sum_{\reward\in\rewards } \InParentheses{
    \InParentheses{ \contract(\reward) -\cost(\action_2)}\cdot  \belief^*_{\action_2}(\reward) - \InParentheses{ \contract(\reward) -\cost(\action_1)}\cdot  \belief^*_{\action_1}(\reward)  }
    \end{align*}

    A contract $\contract(\cdot)\in \reals_+^{|\rewards|}$ satisfies:
\begin{align*}
    \Util(\action_1,\belief^*,\contract) =&  \Util(\action_2,\belief^*,\contract)  
    \text{ iff } E(\contract) = 0,\\
     \Util(\action_1,\belief^*,\contract) \geq&  \Util(\action_2,\belief^*,\contract)  \text{
    iff }   E(\contract)\leq 0 , \\ 
     \Util(\action_1,\belief^*,\contract) \leq &  \Util(\action_2,\belief^*,\contract)  
   \text{ iff }  E(\contract) \geq 0. \end{align*}
\end{observation}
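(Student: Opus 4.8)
The plan is to observe that the quantity $E(\contract)$, despite its notation, is nothing more than the gap between the two expected utilities, namely $E(\contract) = \Util(\action_2,\belief^*,\contract) - \Util(\action_1,\belief^*,\contract)$. Once this identity is established, all three equivalences in the statement are immediate rewrites of the defining inequalities of the utility gap, so the entire content of the observation reduces to a one-line algebraic simplification. There is no genuine obstacle here; the only thing to be careful about is tracking the cost terms and exploiting that each $\belief^*_\action$ is a probability distribution over $\rewards$.

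Concretely, I would first split the single sum defining $E(\contract)$ into the two action-indexed pieces and, within each piece, pull the constant cost term out of the sum. Using $\sum_{\reward\in\rewards}\belief^*_{\action}(\reward)=1$, I would then recognize each piece as an expected utility:
\begin{align*}
\sum_{\reward\in\rewards}\InParentheses{\contract(\reward)-\cost(\action_2)}\belief^*_{\action_2}(\reward)
&= \sum_{\reward\in\rewards}\belief^*_{\action_2}(\reward)\contract(\reward) - \cost(\action_2)
= \Util(\action_2,\belief^*,\contract),\\
\sum_{\reward\in\rewards}\InParentheses{\contract(\reward)-\cost(\action_1)}\belief^*_{\action_1}(\reward)
&= \sum_{\reward\in\rewards}\belief^*_{\action_1}(\reward)\contract(\reward) - \cost(\action_1)
= \Util(\action_1,\belief^*,\contract).
\end{align*}
Subtracting the second from the first yields exactly $E(\contract)=\Util(\action_2,\belief^*,\contract)-\Util(\action_1,\belief^*,\contract)$, and I would note that $E$ is manifestly linear in the contract $\contract(\cdot)$, which is the property that makes it useful for the subsequent convex-set decomposition in \Cref{lem:union of convex sets}.

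With the identity in hand, I would close out the three cases directly: $E(\contract)=0$ is equivalent to $\Util(\action_1,\belief^*,\contract)=\Util(\action_2,\belief^*,\contract)$; the inequality $E(\contract)\leq 0$ rearranges to $\Util(\action_2,\belief^*,\contract)\leq\Util(\action_1,\belief^*,\contract)$, i.e.\ $\Util(\action_1,\belief^*,\contract)\geq\Util(\action_2,\belief^*,\contract)$; and symmetrically $E(\contract)\geq 0$ is equivalent to $\Util(\action_1,\belief^*,\contract)\leq\Util(\action_2,\belief^*,\contract)$. This exhausts the statement. Since the argument is a direct substitution, the ``hard part'' is purely bookkeeping — ensuring the cost constants are handled correctly when the probability masses are summed — and there is no conceptual difficulty to overcome.
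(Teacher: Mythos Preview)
Your proposal is correct and matches the paper's own proof, which simply says ``the proof follows by expanding terms $\Util(\action_1,\belief^*,\contract)$ and $\Util(\action_2,\belief^*,\contract)$.'' You have just written out that expansion explicitly, arriving at the same identity $E(\contract)=\Util(\action_2,\belief^*,\contract)-\Util(\action_1,\belief^*,\contract)$ from which the three equivalences are immediate.
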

\begin{proof}
    The proof follows by expanding terms $\Util(\action_1,\belief^*,\contract)$ and $\Util(\action_2,\belief^*,\contract)$.
\end{proof}

\begin{lemma}\label{lem:charactere contracts}
Consider a set that contains at most two beliefs \(\beliefs^*=\InParentheses{\belief^{(*,1)},\belief^{(*,2)}} \subseteq \beliefs\). Define the following two quantities that only depend linearly on the contract $\contract(\cdot)$:

\begin{align*}
D(\contract):=& \sum_{\reward\in\rewards}\InParentheses{\InParentheses{ \contract(\reward) -\cost(\action_1)}\cdot  \InParentheses{\belief^{(*,1)}_{\action_1}(\reward) -  \belief^{(*,2)}_{\action_1}(\reward)} - \InParentheses{ \contract(\reward) -\cost(\action_2)}\cdot  \InParentheses{\belief^{(*,1)}_{\action_2}(\reward) -  \belief^{(*,2)}_{\action_2}(\reward)}} ,\\
E(\contract):=& \sum_{\reward\in\rewards } \InParentheses{
\InParentheses{ \contract(\reward) -\cost(\action_2)}\cdot  \belief^{(*,2)}_{\action_2}(\reward) - \InParentheses{ \contract(\reward) -\cost(\action_1)}\cdot  \belief^{(*,2)}_{\action_1}(\reward)  }.
\end{align*}    
For any contract $\contract(\cdot)\in \reals_+^{|\rewards|}$, there exists posterior over beliefs $\posterior^*\in \Delta(\beliefs^*)$ such that:
\begin{align*}
\expect[\belief\sim \posterior^*] { \Util(\action_1,\belief,\contract)} =&  \expect[\belief\sim \posterior^*] {\Util(\action_2,\belief,\contract) } 
\text{ iff } D(\contract)\geq E(\contract) \geq 0 \text{ or } 0 \geq E(\contract) \geq D(\contract),\\
\expect[\belief\sim \posterior^*] { \Util(\action_1,\belief,\contract)} \geq&  \expect[\belief\sim \posterior^*] {\Util(\action_2,\belief,\contract) }  \text{
iff }  D(\contract)\geq E(\contract)\geq 0 \text{ or }  E(\contract) \leq 0 , \\ \expect[\belief\sim \posterior^*] { \Util(\action_1,\belief,\contract)} \leq &  \expect[\belief\sim \posterior^*] {\Util(\action_2,\belief,\contract) } 
\text{ iff } D(\contract)\leq E(\contract) \leq 0 \text{ or }  E(\contract) \geq 0. \end{align*}
\end{lemma}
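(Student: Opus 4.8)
The plan is to reduce the statement to an elementary fact about an affine function on the interval $[0,1]$. Any posterior $\posterior^*\in\Delta(\beliefs^*)$ is determined by a single scalar $q=\posterior^*(\belief^{(*,1)})\in[0,1]$, with $\posterior^*(\belief^{(*,2)})=1-q$. Writing $d_i=\Util(\action_1,\belief^{(*,i)},\contract)-\Util(\action_2,\belief^{(*,i)},\contract)$ for $i\in\InBraces{1,2}$ and using linearity of $\Util$ in the belief, the expected utility gap under $\posterior^*$ is the affine function $g(q):=\expect[\belief\sim\posterior^*]{\Util(\action_1,\belief,\contract)}-\expect[\belief\sim\posterior^*]{\Util(\action_2,\belief,\contract)}=q\,d_1+(1-q)\,d_2$, so that $g(1)=d_1$ and $g(0)=d_2$. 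Since $g$ is affine on $[0,1]$, its image is exactly the closed interval with endpoints $d_1$ and $d_2$. Hence the three conditions in the lemma — that some $\posterior^*$ makes the agent indifferent, weakly prefer $\action_1$, or weakly prefer $\action_2$ — are respectively equivalent to $0$ lying in the interval with endpoints $d_1,d_2$; to $\max(d_1,d_2)\geq 0$; and to $\min(d_1,d_2)\leq 0$.

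The second step is a direct computation expressing $D(\contract)$ and $E(\contract)$ in terms of $d_1,d_2$. Because each $\belief_{\action}(\cdot)$ is a probability distribution, $\cost(\action)=\cost(\action)\sum_\reward \belief_\action(\reward)$, so $\Util(\action,\belief,\contract)=\sum_\reward \belief_\action(\reward)\InParentheses{\contract(\reward)-\cost(\action)}$. Matching this identity against the definitions of $D$ and $E$ gives $E(\contract)=\Util(\action_2,\belief^{(*,2)},\contract)-\Util(\action_1,\belief^{(*,2)},\contract)=-d_2$ and $D(\contract)=\InParentheses{\Util(\action_1,\belief^{(*,1)},\contract)-\Util(\action_2,\belief^{(*,1)},\contract)}-\InParentheses{\Util(\action_1,\belief^{(*,2)},\contract)-\Util(\action_2,\belief^{(*,2)},\contract)}=d_1-d_2$. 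Solving, $d_2=-E(\contract)$ and $d_1=D(\contract)-E(\contract)$.

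Finally I would substitute these into the three endpoint conditions. Indifference ($0$ between $d_1$ and $d_2$) splits into the case $d_1\geq 0\geq d_2$, which becomes $D(\contract)\geq E(\contract)\geq 0$, and the case $d_1\leq 0\leq d_2$, which becomes $0\geq E(\contract)\geq D(\contract)$, matching the first line. Weak preference for $\action_1$, namely $\max(d_1,d_2)\geq 0$, i.e. $d_1\geq 0$ or $d_2\geq 0$, becomes $D(\contract)\geq E(\contract)$ or $E(\contract)\leq 0$; since $E(\contract)>0$ forces the condition $D(\contract)\geq E(\contract)$ to coincide with $D(\contract)\geq E(\contract)\geq 0$, this disjunction is exactly the stated ``$D(\contract)\geq E(\contract)\geq 0$ or $E(\contract)\leq 0$'', and the $\action_2$ case is symmetric. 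The only care needed is verifying this logical equivalence of the disjunctions; everything else is bookkeeping, and I expect no real obstacle beyond keeping the signs straight.
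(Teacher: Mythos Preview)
Your proposal is correct and follows essentially the same route as the paper: both parametrize the posterior by $q=\posterior^*(\belief^{(*,1)})\in[0,1]$, reduce the utility comparison to an affine condition in $q$, and then read off the sign conditions on $D(\contract)$ and $E(\contract)$. The only cosmetic difference is that the paper rearranges directly to the equation $q\cdot D(\contract)=E(\contract)$, whereas you go through the intermediate quantities $d_1,d_2$ and the image of the affine map $g$; these are the same computation up to the substitution $d_1=D-E$, $d_2=-E$.
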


\begin{proof}
    \sloppy
    We only prove the first case of the statement as the rest follow in a similar way. Assume that there exists a posterior $\posterior^*\in \Delta(\belief^*)$ such that
    $
    \expect[\belief\sim \posterior^*] { \Util(\action_1,\belief,\contract)} =  \expect[\belief\sim \posterior^*] {\Util(\action_2,\belief,\contract) } $. By expanding the original expression and substituting $\posterior^*\InParentheses{\belief^{(*,2)}}=1-\posterior^*\InParentheses{\belief^{(*,1)}}$ condition on $\posterior^*\InParentheses{\belief^{(*,1)}}\in [0,1]$ we get that:
    \begin{align*} 
    &  \expect[\belief\sim \posterior^*] { \Util(\action_1,\belief,\contract)} =  \expect[\belief\sim \posterior^*] {\Util(\action_2,\belief,\contract) }               \\
    \Leftrightarrow \qquad& \sum_{\belief \in \belief^*} \posterior^*(\belief)\sum_{\reward\in\rewards} \belief_{\action_1}(\reward) \InParentheses{ \contract(\reward) -\cost(\action_1)}  =  \sum_{\belief \in \belief^*} \posterior^*(\belief)\sum_{\reward\in\rewards} \belief_{\action_2}(\reward) \InParentheses{ \contract(\reward) -\cost(\action_2)} \\
    \Leftrightarrow \qquad & \posterior^*\InParentheses{\belief^{(*,1)}}\sum_{\reward\in\rewards} \InParentheses{ \contract(\reward) -\cost(\action_1)}\cdot  \belief^{(*,1)}_{\action_1}(\reward)   + \InParentheses{1-\posterior^*\InParentheses{\belief^{(*,1)}}}\sum_{\reward\in\rewards} \InParentheses{ \contract(\reward) -\cost(\action_1)} \belief^{(*,2)}_{\action_1}(\reward) \\
      =  &  \posterior^*\InParentheses{\belief^{(*,1)}}\sum_{\reward\in\rewards} \InParentheses{ \contract(\reward) -\cost(\action_2)}\cdot  \belief^{(*,1)}_{\action_2}(\reward)   + \InParentheses{1-\posterior^*\InParentheses{\belief^{(*,1)}}}\sum_{\reward\in\rewards} \InParentheses{ \contract(\reward) -\cost(\action_2)} \belief^{(*,2)}_{\action_2}(\reward).
    \end{align*}

    The equality above is equivalent to the following equality:
    $$\posterior^*\InParentheses{\belief^{(*,1)}}\cdot D(\contract)= E(\contract) \Leftrightarrow D(\contract)\geq E(\contract) \geq 0 \text{ or } 0 \geq E(\contract) \geq D(\contract),
    $$
    where we used that $\posterior^*\InParentheses{\belief^{(*,1)}}\in[0,1]$.
    \end{proof}

\end{prevproof}

\begin{prevproof}{thm:lp for berk-nash}
    Based on \Cref{lem:charactere actions}, \Cref{lem:union of convex sets}, and substituting $\adist^*(\action_2)=1-\adist^*(\action_1)$, we rewrite the program in \Cref{program:original} as the maximum over the following parametrized quadratic programs for $\widehat{\contracts}\in \InBraces{\contracts_1,\contracts_2}$:
    \begin{align*}
     \max & \quad \adist^*(\action_1)\sum_{\reward\in\rewards} \InParentheses{\reward - \contract(\reward)}\InParentheses{\dist_{\action_1}(\reward) - \dist_{\action_2}(\reward)} + \sum_{\reward\in\rewards}\dist_{\action_2}(\reward) \cdot \InParentheses{\reward - \contract(\reward)}   \\
     \text{s.t.} &\quad   \adist^*(\action_1)\in [\action_1^{min},\action_1^{max}],\contract \in \widehat{\contracts}.
    \end{align*}

    The proof follows by noting that for any fixed contract $\contract\in \widehat{\contracts}$, the maximum value of the parametrized program  is attained at $\action_1^{min}$ if term $\sum_{\reward\in\rewards} \InParentheses{\reward - \contract(\reward)}\InParentheses{\dist_{\action_1}(\reward) - \dist_{\action_2}(\reward)}$ is negative and at $\action_1^{max}$ otherwise. Hence it suffices to solve the parametrized LP in \Cref{program:lp for berk-nash} for $\action_1^{val}\in \InBraces{\action_1^{min},\action_1^{max}}$ and $\widehat{\contracts}\in \InBraces{\contracts_1,\contracts_2}$.
\end{prevproof}

\subsection{Missing Proofs for Revenue Losses}
\label{apx:revenue_loss}
\begin{prevproof}{lem: KL for unhappy}
The proof of the lemma follows by simple calculations. 
The agent has correctly specified belief for action $a_g$. 
For action $a_b$, we show that 
\begin{align*}
    \KL{\dist_{\action_{b}}}{\belief_{\action_{b}}} =
        p\log\InParentheses{\frac{p}{p-\delta}} \leq  \log(1+2\delta) \leq 2\delta.
\end{align*}
where the first inequality holds since $p\geq \frac{3}{4}$ and $\delta<1-p \leq \frac{1}{4}$.
\end{prevproof}

\begin{prevproof}{lem:correct specification}
Given binary actions, it is well known that the optimal contract is to provide a positive reward only for outcome that maximizes the likelihood ratio. In \cref{instance:unhappy principal}, this corresponds to outcome $2$. 
By offering reward $\frac{c}{\delta}$ for outcome $2$, the agent is incentivized to choose action $a_g$ and generate revenue of $p + 2\delta -c$ for the principal. 
\end{prevproof}

\begin{prevproof}{thm:unhappy principal}
Given misspecified belief $B$, to incentivize the agent to choose action $a_g$, the least costly contract is to provide a positive reward of $\frac{c}{2p-1}$. The expected revenue given this contract is $p + 2 \delta- \frac{c\cdot p}{2p-1}$.

In addition, the principal can choose to provide zero reward to the agent and the agent maximizes his utility by choosing action $\action_b$. The expected revenue in this case is $1-p$. 
\end{prevproof}
\end{document}